\documentclass[12pt]{article}

\usepackage{amsmath}
\usepackage{amsthm, amssymb, amscd, amsxtra, amsfonts}

\usepackage[margin = 3cm]{geometry}

\usepackage[pdftex]{hyperref} 
\hypersetup{pdfstartview=FitH,
	pdfpagemode=UseNone}

\newcommand{\field}[1]{\mathbb{#1}}
\newcommand{\N}{\field{N}}
\newcommand{\Z}{\field{Z}}

\newcommand{\R}{\field{R}}

\newcommand{\F}{\field{F}}
\newcommand{\PP}{\field{P}}
\newcommand{\term}[1]{\boldsymbol{#1}}
\newcommand{\sps}{\mathrm{\Sigma}\mathrm{\Pi}\mathrm{\Sigma}}
\newcommand{\spsp}{\mathrm{\Sigma}\mathrm{\Pi}\mathrm{\Sigma}\mathrm{\Pi}}

\DeclareMathOperator{\ch}{ch}
\DeclareMathOperator{\rk}{rk}
\DeclareMathOperator{\poly}{poly}
\DeclareMathOperator{\trdeg}{trdeg}
\DeclareMathOperator{\size}{size}
\DeclareMathOperator{\Sp}{\mathcal{S}}
\DeclareMathOperator{\res}{res}
\DeclareMathOperator{\simple}{sim}
\DeclareMathOperator{\sgn}{sgn}
\DeclareMathOperator{\id}{id}
\DeclareMathOperator{\sparse}{sp}

\providecommand{\abs}[1]{\left\lvert#1\right\rvert}
\providecommand{\ol}[1]{\overline{#1}}

\theoremstyle{plain}
\newtheorem{theorem}{Theorem}

\newtheorem{corollary}[theorem]{Corollary}
\newtheorem{lemma}[theorem]{Lemma}
\newtheorem{conj}[theorem]{Conjecture}

\theoremstyle{definition}
\newtheorem{defn}[theorem]{Definition}
\newtheorem*{remark}{Remark}

\begin{document}

\title{Algebraic Independence and Blackbox Identity Testing}
\author{Malte Beecken \qquad Johannes Mittmann \qquad Nitin Saxena}
\date{Hausdorff Center for Mathematics, Bonn, Germany\\
	\small\texttt{\{malte.beecken, johannes.mittmann, nitin.saxena\}@hcm.uni-bonn.de}}

\maketitle


\begin{abstract}
	Algebraic independence is an advanced notion in commutative algebra that generalizes independence of 
	linear polynomials to higher degree. Polynomials $\{f_1,\ldots,f_m\} \subset \F[x_1,\ldots,x_n]$ 
	are called algebraically
	independent if there is no non-zero polynomial $F$ such that $F(f_1,\ldots,f_m)=0$. The transcendence
	degree, $\trdeg\{f_1,\ldots,f_m\}$, is the maximal number $r$ of algebraically independent polynomials 
	in the set. In this paper we
	design blackbox and efficient linear maps $\varphi$ that reduce the number of variables from $n$ to $r$ 
	but maintain $\trdeg\{\varphi(f_i)\}_i=r$, assuming $f_i$'s sparse and small $r$. We apply these fundamental
	maps to solve several cases of blackbox identity testing:
	\begin{enumerate}
 	  \item Given a polynomial-degree circuit $C$ and sparse polynomials $f_1,\ldots,f_m$ with $\trdeg$ $r$,
 	  we can test blackbox $D:=C(f_1,\ldots,f_m)$ for zeroness in $\poly(\size(D))^r$ time.
	  \item Define a $\spsp_\delta(k,s,n)$ circuit $C$ to be of the form $\sum_{i=1}^k\prod_{j=1}^s f_{i,j}$, 
	  where $f_{i,j}$ are sparse $n$-variate polynomials of degree at most $\delta$. For $k=2$ we give a 
	  $\poly(\delta sn)^{\delta^2}$ time blackbox identity test.
	  \item For a general depth-$4$ circuit we define a notion of rank. Assuming there is a 
	  rank bound $R$ for minimal simple $\spsp_\delta(k,s,n)$ identities, we give a $\poly(\delta snR)^{Rk\delta^2}$ 
	  time blackbox identity test for $\spsp_\delta(k,s,n)$ circuits. This partially generalizes the 
	  state of the art of depth-$3$ to depth-$4$ circuits.  
	\end{enumerate}
	The notion of $\trdeg$ works best with large or zero characteristic, but we also give versions of our
	results for arbitrary fields.
	
	\paragraph{Keywords:} Algebraic independence, transcendence degree, arithmetic circuits, polynomial identity testing, 
	blackbox algorithms, depth-$4$ circuits.
\end{abstract}

\newpage


\section{Introduction}

Polynomial identity testing (PIT) is the problem of checking whether a given $n$-variate arithmetic circuit 
computes the zero polynomial in $\F[x_1,\ldots,x_n]$. It is a central question in complexity theory as circuits
model computation and PIT leads us to a better understanding of circuits. There are several classical randomized
algorithms known \cite{bib:DL78,Sch80,Z79,CK00,LV98,AB03} that solve PIT. The basic Schwartz-Zippel test is: given a
circuit $C(x_1,\ldots,x_n)$, check $C(\ol{a})=0$ for a random $\ol{a}\in\ol{\F}^n$. Finding a 
deterministic polynomial time test, however, has been more difficult and is currently open. Derandomization 
of PIT is well 
motivated by a host of algorithmic applications, eg. bipartite matching \cite{L79} and matrix completion
\cite{L89}, and connections to 
sought-after super-polynomial lower bounds \cite{HS80,KI04}. Especially, {\em blackbox} PIT (i.e. circuit $C$ 
is given as a blackbox and we could only make oracle queries) has direct connections to lower bounds for the 
permanent \cite{A05,A06}. Clearly, finding a blackbox PIT test for a family of circuits $\mathcal{F}$ boils 
down to efficiently
designing a {\em hitting set} $\mathcal{H}\subset\ol{\F}^n$ such that: given a nonzero $C\in\mathcal{F}$,
there exists an $\ol{a}\in\mathcal{H}$ that {\em hits} $C$, i.e. $C(\ol{a})\ne0$. 

The attempts to solve blackbox PIT have focused on restricted circuit families. A natural restriction is 
{\em constant depth}. Agrawal \& Vinay \cite{AV08} showed that a blackbox PIT algorithm for depth-$4$ circuits 
would (almost) solve PIT for general circuits (and prove exponential circuit lower bounds for permanent). 
The currently known blackbox PIT algorithms work only for further restricted depth-$3$ and depth-$4$
circuits. The case of {\em bounded top fanin} depth-$3$ circuits has received great attention and has blackbox 
PIT algorithms \cite{DS06,KS07,KSh08,bib:SS10,KSar09,SS10focs,SS11stoc}. The analogous case for depth-$4$ 
circuits is open. However, with the additional restriction of {\em multilinearity} on all the multiplication 
gates, there is a blackbox PIT algorithm \cite{KMSV10,SV11}. The latter is somewhat subsumed by the PIT
algorithms for constant-read multilinear formulas \cite{AvMV10}. To save space we would not go into the 
rich history of PIT and instead refer to the surveys \cite{S09,bib:SY10}.

A recurring theme in the blackbox PIT research on depth-$3$ circuits has been that of {\em rank}. If we consider
a $\sps(k,d,n)$ circuit $C=$ $\sum_{i=1}^k\prod_{j=1}^d\ell_{i,j}$, where $\ell_{i,j}$ are linear forms in
$\F[x_1,\ldots,x_n]$, then $\rk(C)$ is defined to be the linear rank of the set of forms 
$\{\ell_{i,j}\}_{i,j}$ each viewed
as a vector in $\F^n$. This raises the natural question: Is there a generalized notion of rank for depth-$4$
circuits as well, and more importantly, one that is useful in blackbox PIT? We answer this question affirmatively
in this paper.
Our notion of rank is via {\em transcendence degree} (short, $\trdeg$), which is a basic notion in 
commutative algebra. To show that this notion applies to PIT requires relatively advanced algebra and new
tools that we build. 

Consider polynomials $\{f_1,\ldots,f_m\}$ in $\F[x_1,\ldots,x_n]$. They are called {\em algebraically
independent} (over $\F$) if there is no nonzero polynomial $F\in\F[y_1,\ldots,y_m]$ such that $F(f_1,\ldots,f_m)=0$. 
When those polynomials are {\em algebraically dependent} then such an $F$ exists and is called the 
{\em annihilating polynomial} of $f_1,\ldots,f_m$. 
The {\em transcendence degree}, $\trdeg\{f_1,\ldots,f_m\}$, is the maximal number $r$ of 
algebraically independent polynomials in the set $\{f_1,\ldots,f_m\}$. Though intuitive, it is nontrivial to
prove that $r$ is at most $n$ \cite{bib:Mo96}. The notion of $\trdeg$ has appeared in complexity 
theory in 
several contexts. Kalorkoti \cite{Kal85} used $\trdeg$ to prove an $\mathrm\Omega(n^3)$ formula size lower bound for
$n\times n$ determinant. In the works \cite{bib:DGW09,DGRV11} studying the {\em entropy} of polynomial mappings
$(f_1,\ldots,f_m):$ $\F^n\rightarrow\F^m$, $\trdeg$ is a natural measure of entropy when the field has large
or zero characteristic. It also appears implicitly in \cite{Dvir09} while constructing {\em extractors} for
varieties. Finally, the complexity of the annihilating polynomial is studied in \cite{bib:Kay09}. However, our 
work is the first to study $\trdeg$ in the context of PIT.

\subsection{Our main results}\label{sec:mainResults}

Our first result shows that a general arithmetic circuit is sensitive to the $\trdeg$ of its input.

\begin{theorem}\label{thm:main1}
Let $C$ be an $m$-variate circuit. Let $f_1,\ldots,f_m$ be 
$\ell$-sparse, $\delta$-degree, $n$-variate polynomials with $trdeg$ $r$. Suppose we have oracle access to 
the $n$-variate $d$-degree circuit $C':=C(f_1,\ldots,f_m)$. There is a blackbox 
$\poly(\size(C')\cdot d\ell\delta)^r$ time test 
to check $C'=0$ (assuming a zero or larger than $\delta^r$ characteristic).  
\end{theorem}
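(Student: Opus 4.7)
The plan is to reduce the $n$-variate blackbox test for $C'$ to an $r$-variate one via a linear substitution $\varphi\colon\F[x_1,\ldots,x_n]\to\F[y_1,\ldots,y_r]$ that sends each $x_i$ to an affine form $\alpha_i+\sum_k\beta_{i,k}y_k$ and preserves the transcendence degree of $\{f_i\}$. Once such a $\varphi$ is in hand, I would argue that it actually preserves the full algebraic-dependence structure. Because $\varphi$ is a ring homomorphism, the annihilator ideal $I\subseteq\F[z_1,\ldots,z_m]$ of $\{f_i\}$ is contained in the annihilator ideal $I'$ of $\{\varphi(f_i)\}$; both ideals are prime, and the Krull dimensions $\dim\F[\bar z]/I$ and $\dim\F[\bar z]/I'$ equal the respective transcendence degrees, which are both $r$. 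Since a strict inclusion of primes in a finitely generated $\F$-algebra that is a domain strictly decreases the Krull dimension of the quotient, $I=I'$. In particular
\[
C'=C(f_1,\ldots,f_m)=0\iff C(\varphi(f_1),\ldots,\varphi(f_m))=0,
\]
and the right-hand polynomial is $r$-variate of degree at most $d$, so a Schwartz-Zippel grid of size $(d+1)^r$ is a hitting set. Each test point $\bar\beta\in\F^r$ is answered by a single oracle query to $C'$ at $(\varphi(x_1)(\bar\beta),\ldots,\varphi(x_n)(\bar\beta))\in\F^n$.

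To construct $\varphi$, I would invoke the Jacobian criterion (valid under the characteristic hypothesis $\ch(\F)=0$ or $>\delta^r$): $\trdeg\{f_i\}=r$ iff the matrix $J(\bar x)=(\partial f_i/\partial x_j)$ has rank $r$ over $\F(\bar x)$, so there are indices $i_1,\ldots,i_r$ and $j_1,\ldots,j_r$ with the $r\times r$ minor $J_0(\bar x):=\det(\partial f_{i_u}/\partial x_{j_v})_{u,v}$ nonzero. The simplest design is $\varphi(x_{j_v})=\alpha_{j_v}+y_v$ for $v\le r$ and $\varphi(x_j)=\alpha_j$ otherwise; by the chain rule the Jacobian of $\{\varphi(f_{i_u})\}_u$ with respect to $\bar y$ is exactly $J_0(\varphi(\bar x))$, which specialized at $\bar y=0$ is $J_0(\bar\alpha)$. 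Thus it suffices to find $\bar\alpha\in\F^n$ with $J(\bar\alpha)$ of rank $r$, equivalently with some nontrivial $r\times r$ minor of $J$ nonzero at $\bar\alpha$. Every $\partial f_i/\partial x_j$ is $\ell$-sparse of degree $\le\delta-1$, so each such minor has sparsity $\le r!\,\ell^r$ and degree $\le r(\delta-1)$; a standard sparse-PIT hitting set of size $\poly(n\ell\delta)^r$ contains a suitable $\bar\alpha$, which we locate by iterating through the hitting set and linearly checking the rank of $J(\bar\alpha)$. Applying the Jacobian criterion in the converse direction, non-vanishing of the minor at $\bar\alpha$ forces $\trdeg\{\varphi(f_{i_u})\}_u=r$ and hence $\trdeg\{\varphi(f_i)\}=r$.

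Combining the preprocessing cost $\poly(n\ell\delta)^r$ for fixing $\bar\alpha$ (and the $\beta_{i,k}$, which here are merely $0$/$1$) with the $(d+1)^r$ Schwartz-Zippel queries, and noting that $n\le\size(C')$, yields the claimed bound $\poly(\size(C')\cdot d\ell\delta)^r$. The step I expect to dominate the difficulty is constructing the hitting set for $J$: since the ``right'' minor is not known a priori we must use a single hitting set that is simultaneously good for every nontrivial $r\times r$ minor of $J$, and one has to check that the resulting sparsity and degree parameters collapse into $\poly(n\ell\delta)^r$ rather than growing with $\binom{n}{r}\binom{m}{r}$. A secondary subtlety is the characteristic assumption, without which the Jacobian criterion can fail; the paper's closing remarks foreshadow that a separate argument is required for arbitrary fields.
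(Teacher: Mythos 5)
Your proposal is correct and proves the theorem, but via a genuinely different construction of the faithful map than the paper uses for this case. The paper constructs the Vandermonde-inspired homomorphism $\mathrm\Psi_{D,p,c}$ (Sect.~\ref{sec:VandermondeHomomorphism}), which treats all $n$ variables uniformly, applies a modular Kronecker trick on the exponents, and bounds the bad primes via Cauchy--Binet applied to $\det J$. You instead fix $r$ variables, substitute (shifted) constants $\alpha_j$ for the rest, and directly hit the $r\times r$ minors of the Jacobian with a sparse-PIT hitting set of size $\poly(n\ell\delta)^r$. Both approaches ride on the same two pillars --- the Krull-dimension characterization of $\trdeg$ and the Jacobian criterion --- but your variable-fixing construction is closer in spirit to the paper's Kronecker-style map $\mathrm\Phi$ (Sect.~\ref{sec:FaithfulHomomorphismExistence}) than to the $\mathrm\Psi$ actually used here; the paper reserves $\mathrm\Phi$ for arbitrary characteristic where Perron's bound replaces the Jacobian. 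Your approach is arguably more elementary for Theorem~\ref{thm:main1} alone; the paper's $\mathrm\Psi$ pays off later, in Theorems~\ref{thm:main2} and~\ref{thm:main3}, where a single map must be faithful to $2^k$ sets simultaneously and must also preserve gcds --- properties your fixed-variable construction does not obviously have.

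Your argument that faithfulness forces $C(f_1,\ldots,f_m)=0\iff C(\varphi(f_1),\ldots,\varphi(f_m))=0$ is a legitimate variant of the paper's Theorem~\ref{thm:FaithfulHomomorphismInjective}: you compare the annihilator primes $I\subseteq I'$ and invoke that in an affine domain a strict chain of primes strictly drops the dimension of the quotient, whereas the paper phrases the same contradiction through Krull's Hauptidealsatz applied to a single nonzero $f\in\ker\varphi_A$. Two small caveats. First, as stated, your procedure ``iterate through the hitting set and check $\rk J(\bar\alpha)$'' is a whitebox step (it reads the coefficients of the $f_i$), and so is the choice of the index set $\{j_1,\ldots,j_r\}$; the paper's $\mathcal{H}_{d,r,\delta,\ell}$ is an \emph{oblivious} hitting set depending only on $(d,r,\delta,\ell,n)$. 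This is easily repaired, and at no asymptotic cost, by iterating over all $I\in\tbinom{[n]}{r}$ and all $\bar\alpha$ in the sparse-PIT set and taking the union of the resulting Schwartz--Zippel grids; that yields a truly blackbox hitting set of size $\poly(\size(C')\,d\ell\delta)^r$, matching the paper. Second, you invoke the Jacobian criterion in both directions: full rank of $J$ over $K(\term{x})$ (needing $\ch(K)=0$ or $>\delta^r$, Theorem~\ref{thm:JacobianCriterion}) to know that a good $r\times r$ minor exists, and the ``easy'' direction (Lemma~\ref{lem:JacobianCriterion}, any characteristic) to conclude that the $r$ images have $\trdeg$ $r$ once the specialized minor is nonzero; you should flag that the characteristic hypothesis is used only for the first direction, which is consistent with the theorem's hypothesis.
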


We also give an algorithm that works for all fields but has a worse time complexity. Note that the above theorem
seems nontrivial even for a constant $m$, say $C'=C(f_1,f_2,f_3)$, as the output of $C'$ may not be sparse and 
$f_i$'s are of arbitrary degree and arity. In such a case $r$ is constant too and the theorem gives a polynomial
time test. Another example, where $r$ is constant but both $m$ and $n$ are variable, is: 
$f_i:=(x_1^i + x_2^2 + \dotsb + x_n^2)x_n^i$ for $i\in[m]$. (Hint: $r \le 3$.) 

Our next two main results concern depth-$4$ circuits. By $\spsp_{\delta}(k,s,n)$ we denote 
circuits (over a field $\F$) of the form 
\begin{equation}\label{eqn:spsp}
C:=\sum_{i=1}^k\prod_{j=1}^s f_{i,j},
\end{equation} 
where $f_{i,j}$'s are sparse $n$-variate polynomials of maximal degree $\delta$. Note that when $\delta=1$ this 
notation agrees with that of a $\sps$ circuit. Currently, the PIT methods are not
even strong enough to study $\spsp_{\delta}(k,s,n)$ circuits with both {\em top} fanin $k$ and {\em bottom}
fanin $\delta$ {\em bounded}. It is in this spectrum that we make exciting progress.

\begin{theorem}\label{thm:main2}
Let $C$ be a $\spsp_{\delta}(2,s,n)$ circuit over an arbitrary field. There is a blackbox 
$\poly(\delta sn)^{\delta^2}$ time test to check $C=0$.  
\end{theorem}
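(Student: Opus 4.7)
The plan is to combine the transcendence-degree framework of Theorem~\ref{thm:main1} with a rank bound specific to the $k=2$ case. Write $C = T_1 - T_2$ with $T_i = \prod_{j=1}^s f_{i,j}$. The identity $C = 0$ is equivalent to $T_1 = T_2$, and by unique factorization in $\F[x_1,\ldots,x_n]$ this forces the multisets of irreducible factors of $T_1$ and $T_2$ to coincide, with every such factor having degree at most $\delta$. View $C$ as $F(f_{1,1},\ldots,f_{2,s})$ where $F(y_{i,j}) = \prod_j y_{1,j} - \prod_j y_{2,j}$ is a small outer circuit on $2s$ variables.

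My approach is first to reduce to a \emph{simple} instance, meaning $\gcd(T_1,T_2)=1$, in which the identity $T_1 = T_2$ degenerates to both sides being nonzero constants and PIT becomes trivial. The reduction is performed implicitly rather than by an explicit blackbox gcd: apply the rank-preserving linear substitution $\varphi$ from the variable-reduction toolkit underlying Theorem~\ref{thm:main1}, choosing $\varphi$ to cut the ambient dimension from $n$ down to $r = \delta^2$. Invoking Theorem~\ref{thm:main1} on the outer circuit $F$ evaluated at the restricted inputs, the test reduces to blackbox PIT of a polynomial of total degree at most $\delta s$ in $\delta^2$ variables. Such a polynomial has sparsity at most $\binom{\delta s + \delta^2}{\delta^2}\le(\delta s)^{\delta^2}$, so a standard sparse-polynomial hitting set (for instance, Klivans--Spielman style) of size $\poly(\delta s n)^{\delta^2}$ completes the algorithm.

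The main obstacle is justifying the choice $r = \delta^2$: one must prove that for a $\spsp_\delta(2,s,n)$ circuit, after simplification, the transcendence degree of $\{f_{i,j}\}$ that truly matters for non-vanishing of $C$ is bounded by $\delta^2$. The naive $\trdeg$ of $\{f_{i,j}\}$ can be as large as $\min(2s,n)$, so the bound must crucially leverage simplicity together with the rigidity forced by $T_1=T_2$: the irreducibles on the two sides must match with multiplicities, and the per-factor degree bound $\delta$ limits how many algebraically independent factors can appear in a reduced identity. A likely route is a Jacobian argument over the function field, showing that the equation $T_1 = T_2$ gives a nontrivial syzygy among gradients of the $\{f_{i,j}\}$ which, combined with the $\delta$ degree bound, forces the Jacobian rank---equivalently the transcendence degree---to stay within $\delta^2$. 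Since the theorem is claimed over an arbitrary field, ordinary derivatives will likely need to be replaced by Hasse derivatives (or a separable reduction) to handle small positive characteristic, paralleling the arbitrary-characteristic variant of Theorem~\ref{thm:main1}.
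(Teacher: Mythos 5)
Your proposal has the right high-level outline (reduce to a simple instance where $\gcd(T_1,T_2)=1$, then observe that a simple identity $T_1=T_2$ forces both sides to be constants), but there is a concrete gap in where the exponent $\delta^2$ comes from, and it sends you down the wrong road.

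You conjecture that the relevant transcendence-degree bound for a simple $\spsp_\delta(2,s,n)$ identity is $\delta^2$ and propose a hard Jacobian-syzygy argument (with Hasse derivatives for small characteristic) to establish it. In fact the rank bound for $k=2$ is trivial: $R_\delta(2,s)=1$. A simple, minimal, zero circuit $T_1+T_2=0$ has $T_1=-T_2$; coprimality then forces $T_1,T_2\in K$, so every $f_{i,j}$ is a constant and $\rk(C)=0<1$. No Jacobian is needed, and because the only transcendence-degree fact used for $k=2$ is that a single non-constant sparse polynomial has $\trdeg$ $1$ (which the map $\mathrm\Psi$ preserves automatically, as it keeps non-constant inputs non-constant), the paper's argument is entirely characteristic-independent without invoking Jacobi or Hasse derivatives. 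The number of $z$-variables in the reduced instance is therefore $r+1=2$, not $\delta^2$; the final step is Schwartz--Zippel on a $2$-variate degree-$\le\delta s$ polynomial, not a sparse-polynomial hitting set in $\delta^2$ variables.

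The $\delta^2$ in the running time arises elsewhere, namely in the cost of the gcd-preservation step (Lemma \ref{lem:PreservingSimplePart}): showing $\mathrm\Psi_{D,p,c}(\simple(C))=\simple(\mathrm\Psi_{D,p,c}(C))$ requires keeping the $z_0$-resultants $\res_{z_0}(\mathrm\Psi(f_i),\mathrm\Psi(f_j))$ nonzero for all coprime irreducible factors $f_i,f_j$ of bottom degree $\le\delta$. Such a resultant has degree $O(\delta^2)$ and sparsity $(n+\delta)^{O(\delta^2)}$, and bounding bad primes via sparse PIT yields $p_{\max}=\poly(\delta sn)^{O(\delta^2)}$. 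That, together with the Kronecker-style degree bounds in $\mathrm\Psi$, is the sole source of the $\delta^2$ exponent. So your ``main obstacle'' is not an obstacle at all; the real work is controlling the sparsity and degree of resultants under $\mathrm\Psi$, which your proposal does not address.
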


\paragraph{Simple, minimal and rank}
Finally, we define a notion of rank for depth-$4$ circuits and show its usefulness.
For a circuit $C$, as in (\ref{eqn:spsp}),
we define its {\em rank}, $\rk(C):=$ $\trdeg\{f_{i,j}\mid i\in[k], j\in[s]\}$. Define 
$T_i:=$ $\prod_{j=1}^s f_{i,j}$, for all $i\in[k]$, to be the {\em multiplication terms} of $C$. We call 
$C$ {\em simple} if $\{T_i \,\vert\; i\in[k]\}$ are coprime polynomials. We call $C$ {\em minimal} if there is 
no $I \subsetneq [k]$ such that $\sum_{i \in I}T_i=0$. Define $R_\delta(k,s)$ to be the smallest $r$ such that:
any $\spsp_\delta(k,s,n)$ circuit $C$ that is simple, minimal and zero has $\rk(C)<r$.

\begin{theorem}\label{thm:main3}
Let $r:=R_\delta(k,s)$ and the characteristic be zero or larger than $\delta^r$. There is a blackbox
$\poly(\delta rsn)^{rk\delta^2}$ time identity test for $\spsp_\delta(k,s,n)$ circuits.
\end{theorem}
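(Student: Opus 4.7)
The plan follows a classical rank-based blackbox-PIT template (in the spirit of depth-$3$ rank tests such as Karnin--Shpilka), but with \emph{transcendence degree} playing the role of linear rank. The rank-preserving tool is the paper's main technical contribution underlying Theorem~\ref{thm:main1}: an efficient family of linear maps that, applied to sparse polynomials of bounded $\trdeg$, preserves algebraic independence while collapsing the number of variables from $n$ down to $r$.

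First, I would record the structural consequence of $C$ vanishing. If $C=\sum_{i=1}^k T_i \equiv 0$, then iteratively (i) extracting a minimal $I\subseteq [k]$ with $\sum_{i\in I}T_i=0$ and (ii) dividing through by $G:=\gcd_{i\in I}(T_i)$ yields a subcircuit $C'':=\sum_{i\in I}(T_i/G)$ that is simple, minimal, and zero. By hypothesis $\rk(C'')<r$, so the irreducible factors of the $T_i/G$'s — which sit among the irreducible factors of the $f_{i,j}$'s — have $\trdeg<r$. Since there are only polynomially many irreducible factors of the sparse inputs $f_{i,j}$ in total, and each has degree at most $\delta$, the "critical factor set'' to preserve is polynomially sized of bounded degree.

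Second, I would invoke the paper's trdeg-preserving map machinery to construct a family $\Phi$ of linear maps $\varphi:\F[x_1,\ldots,x_n]\to\F[y_1,\ldots,y_r]$, of size $|\Phi|=\poly(\delta rsn)^{O(rk\delta^2)}$, such that for every subset $S$ of irreducible factors of the $f_{i,j}$'s with $\trdeg(S)<r$, some $\varphi\in\Phi$ is injective on $\F[S]$ (equivalently, preserves the annihilator ideal of $S$). The $\delta^2$ factor in the exponent is the typical overhead for certifying a trdeg-preserving map via annihilating relations of degree $O(\delta^2)$, echoing Theorem~\ref{thm:main2}.

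Third — the crux — I would show that if $C\not\equiv 0$, then some $\varphi\in\Phi$ satisfies $\varphi(C)\not\equiv 0$. Contrapositively, assume $\varphi(C)=0$ for all $\varphi\in\Phi$. Then for the $\varphi$ that preserves trdeg on the critical factor set of a hypothetical simple--minimal vanishing subsum of $\varphi(C)$, the ring-embedding property of $\varphi$ pulls the identity $\sum_{i\in I}\varphi(T_i)/\varphi(G)=0$ back to $\sum_{i\in I}T_i/G=0$ in the original variables, and unwinding the gcd divisions and sub-sum extractions forces $C=0$. Each reduced circuit $\varphi(C)\in\spsp_\delta(k,s,r)$ is then brute-forced on an $r$-variate grid of size $(s\delta+1)^r$, giving a total cost $|\Phi|\cdot(s\delta+1)^r=\poly(\delta rsn)^{rk\delta^2}$.

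The main obstacle I expect is Step 3, the soundness pull-back. The map $\varphi$ is a ring homomorphism, so $C=0\Rightarrow \varphi(C)=0$ is trivial, but the converse requires that $\varphi$ preserve trdeg on the \emph{irreducible factors} of the $f_{i,j}$'s, not only on the $f_{i,j}$'s themselves — and these irreducibles are in general not sparse. Circumventing this requires either running the Theorem~\ref{thm:main1} machinery directly on the (polynomially many) irreducible factors, which are of bounded degree but of potentially large sparsity, or showing that trdeg preservation on sparse degree-$\delta$ polynomials extends automatically to trdeg preservation on their factors by a Jacobian-criterion argument (justifying the large-characteristic hypothesis). Handling this carefully is what drives both the $rk$ and the $\delta^2$ in the exponent of the final running time.
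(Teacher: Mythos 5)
Your template matches the paper's: reduce variables via a $\trdeg$-preserving linear map, ensure it also preserves the simple part, and brute-force the resulting $r$-variate circuit on a grid. The paper packages the structural argument (your Step~3) cleanly in Lemma~\ref{lem:RankBasedBlackBoxAlgo}, which states the two properties the map must satisfy for \emph{every} $I \subseteq [k]$: (a) $\varphi(\simple(C_I)) = \simple(\varphi(C_I))$ and (b) $\rk(\varphi(\simple(C_I))) \ge \min\{\rk(\simple(C_I)), R_\delta(k,s)\}$. The proof of that lemma then runs exactly the pull-back you gesture at, but starting from $\varphi(C)=0$: pick $I$ so that $\varphi(C_I)$ is a minimal zero subcircuit, observe $\varphi(\simple(C_I))$ is then simple, minimal and zero, hence of rank $< R_\delta(k,s)$; condition (b) upgrades this to faithfulness of $\varphi$ on $\Sp(\simple(C_I))$, and Theorem~\ref{thm:FaithfulHomomorphismInjective} (faithful $\Rightarrow$ injective on the generated subalgebra) pulls $\simple(C_I)=0$ back. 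Your Step~1 works from $C=0$ rather than $\varphi(C)=0$, which is the trivial direction; the nontrivial converse needs the $\varphi(C_I)$-first bookkeeping.

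The one genuine misconception in your proposal is the ``main obstacle'' you anticipate: that the irreducible factors of the $f_{i,j}$'s ``are in general not sparse.'' They are. Each such factor has degree at most $\delta$ and hence at most $\binom{n+\delta}{\delta} \le (n+\delta)^\delta$ monomials, which is polynomial in $n$ for bounded $\delta$. This is precisely what Lemma~\ref{lem:PreservingSimplePart} exploits: it lists all non-constant irreducible factors (at most $ks\delta$ of them, each $(n+\delta)^\delta$-sparse), makes their $\mathrm\Psi$-images monic in $z_0$, and preserves pairwise coprimality by keeping the relevant $z_0$-resultants nonzero. The resultants have degree $O(\delta^2)$ in the $x$'s, and their sparsities are bounded by roughly $(n+\delta)^{2\delta^2}$; this, together with the Jacobian/Cauchy--Binet sparsity bound feeding Lemma~\ref{lem:VandermondeHomomorphismFaithful}, is where the $\delta^2$ and the $r$ in the exponent actually come from. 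Your alternative idea (a Jacobian argument showing $\trdeg$-preservation on sparse inputs automatically extends to their factors) is not needed and is not what the paper does. The $2^k$ blow-up from taking a union bound over all $I\subseteq[k]$ in both lemmas accounts for the $k$ in the exponent, and both lemmas must hold for the \emph{same} $(p,c)$, which is arranged by inflating $p_{\max}$ and $|H_1|$ by $2^k$ factors.
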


We give a lower bound of $\mathrm\Omega(\delta k\log s)$ on $R_\delta(k,s)$ and conjecture an upper bound
(better than the trivial $ks$).

\subsection{Organization and our approach}

A priori it is not clear whether the problem of deciding algebraic independence of given polynomials
$\{f_1,\ldots,f_m\}$, over a field $\F$, is even computable. Perron \cite{bib:Per27} proved that for $m=(n+1)$
and any field, the annihilating polynomial has degree only exponential in $n$. We generalize this to any $m$ 
in Sect. \ref{sec:AlgebraicIndependence}, hence, deciding algebraic independence (over any field) is 
computable. When the characteristic is zero or large, there is a more efficient criterion due to Jacobi (Sect.
\ref{sec:JacobianCriterion}). For using $\trdeg$ in PIT we would need to relate it to the {\em Krull dimension}
of algebras (Sect. \ref{sec:AffineAlgebrasKrullDimension}). 

The central concept that we develop is that of a {\em faithful homomorphism}. This is a linear map $\varphi$ 
from $R:=\F[x_1,\ldots,x_n]$ to $\F[z_1,\ldots,z_r]$ such that for polynomials $f_1,\ldots,f_m\in R$ of
$\trdeg$ $r$, the images $\varphi(f_1),\ldots,\varphi(f_m)$ are also of $\trdeg$ $r$. Additionally, to be useful,
$\varphi$ should be constructible in a blackbox and efficient way. We give such constructions in Sects. 
\ref{sec:FaithfulHomomorphismExistence} and \ref{sec:VandermondeHomomorphism}. The proofs here use 
Perron's and Jacobi's criterion, but require new techniques as well. The reason why such a $\varphi$ is useful in PIT
is because it preserves the nonzeroness of the circuit $C(f_1,\ldots,f_m)$ (Corollary
\ref{cor:FaithfulHomomorphismOnCkt}). We prove this by an elegant application of Krull's {\em principal 
ideal theorem}.
 
Once the fundamental machinery is set up, we prove Theorem \ref{thm:main1} by designing a hitting set. The
zero or large characteristic case is handled in Sect. \ref{sec:CircuitsWithSparseSubcircuitsCharZero}. The 
arbitrary characteristic case is in Sect. \ref{sec:CircuitsWithSparseSubcircuitsArbitraryChar}. 

Finally, we apply the faithful homomorphisms to depth-$4$ circuits. The proof of Theorem \ref{thm:main2} is
provided in Sect. \ref{sec:PreservingSimplePart}. The rank-based hitting set is constructed in Sect. 
\ref{sec:Depth4CircuitsHittingSet} proving Theorem \ref{thm:main3}. 
The full proofs tend to be extremely technical and have been moved to the appendix.

\section{Preliminaries: Perron, Jacobi \& Krull} \label{sec:Preliminaries}

Let $n\in\Z^+$ and let $K$ be a field of characteristic $\ch(K)$. Throughout this paper, 
$K[\term{x}] = K[x_1, \dotsc, x_n]$ is a polynomial ring in $n$ variables over $K$.
$\ol{K}$ denotes the {\em algebraic closure} of the field. We denote the multiplicative 
{\em group of units} of an algebra $A$ by $A^*$.
We use the notation $[n] := \{1, \dotsc, n\}$. For $0 \le r \le n$, $\tbinom{[n]}{r}$ denotes
the set of $r$-subsets of $[n]$.

\subsection{Perron's criterion (arbitrary field)} \label{sec:AlgebraicIndependence}

Let $f_1, \dotsc, f_m \in K[\term{x}]$ be polynomials. When we want to emphasize the base field with the 
transcendence degree, we would use the notation $\trdeg_K\{f_1, \dotsc, f_m\}$. It is interesting to
note that transcendence degree is invariant to {\em algebraic} field extensions, i.e. $\trdeg_K\{f_1, \dotsc, f_m\}$ is 
the same as $\trdeg_{\ol{K}}\{f_1, \dotsc, f_m\}$ (Lemma \ref{lem:AnnihilatingPolynomialFieldExtension}).
The name transcendence degree stems from
field theory. The transcendence degree of a field extension $L/K$, denoted by $\trdeg(L/K)$, is the cardinality 
of any transcendence basis for $L/K$ (for more information on transcendental extensions, see 
\cite[Chap. 19]{bib:Mo96}).
For $L = K(f_1, \dotsc, f_m)$, we have $\trdeg_K\{f_1, \dotsc, f_m\}$ $= \trdeg(L/K)$ 
(cf. \cite[Theorem 19.14]{bib:Mo96}). Since $\trdeg(K(\term{x})/K) = n$, we obtain $0 \le$ 
$\trdeg_K\{f_1, \dotsc, f_m\}$ $\le n$.

Algebraic independence over $K$ strongly resembles $K$-linear independence. In fact, algebraic independence 
makes a finite subset $\{f_1, \dotsc, f_m\} \subset K[\term{x}]$ into a {\em matroid} (a generalization of
vector space, cf. \cite[Sect. 6.7]{bib:Oxl06}).

An effective criterion for algebraic independence can be obtained by a degree bound for annihilating polynomials.
The following theorem provides such a bound for the case of $n+1$ polynomials in $n$ variables.

\begin{theorem}[Perron's theorem]{\em \cite[Theorem 1.1]{bib:Plo05}}\label{thm:PerronsTheorem}
	Let $f_i \in K[\term{x}]$ be a polynomial of degree
	$\delta_i\ge1$, for $i\in[n+1]$. Then there exists a non-zero polynomial 
	$F \in K[y_1, \dotsc, y_{n+1}]$ such that $F(f_1, \dotsc, f_{n+1}) = 0$ and
	$\deg(F) \le$ $(\prod_i\delta_i)/\min_i\{\delta_i\}$. 
\end{theorem}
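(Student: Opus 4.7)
The strategy is to produce the annihilator via a dimension-counting argument on an evaluation map, using a weighted degree to tightly couple source and target.

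Assign weight $\delta_i$ to $y_i$ in $K[y_1,\ldots,y_{n+1}]$, and let $\deg_w$ denote the resulting weighted total degree. Consider the evaluation homomorphism $\phi\colon K[y_1,\ldots,y_{n+1}]\to K[\term{x}]$, $y_i\mapsto f_i$. Since $\deg(f_1^{a_1}\cdots f_{n+1}^{a_{n+1}})\le\sum_i a_i\delta_i$, the $K$-subspace $V_W:=\{F:\deg_w F\le W\}\subseteq K[y_1,\ldots,y_{n+1}]$ is mapped by $\phi$ into $K[\term{x}]_{\le W}$. Whenever $\dim_K V_W>\binom{W+n}{n}=\dim_K K[\term{x}]_{\le W}$, there is therefore a nonzero $F\in V_W\cap\ker\phi$ that annihilates $(f_1,\ldots,f_{n+1})$; since $\deg F\le\deg_w F/\min_i\delta_i$, choosing $W=\prod_i\delta_i$ would then yield $\deg F\le(\prod_i\delta_i)/\min_i\delta_i$, as required.

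The principal obstacle is that the naive leading-order asymptotics $\dim_K V_W\sim W^{n+1}/((n+1)!\prod_i\delta_i)$ and $\binom{W+n}{n}\sim W^n/n!$ establish the needed strict inequality only for $W\gtrsim(n+1)\prod_i\delta_i$, missing Perron's bound by a factor of $n+1$. A cleaner route, underlying the proof in \cite{bib:Plo05}, bypasses the counting altogether by bounding the degree of the image variety $V:=\overline{\phi(\ol{K}^n)}\subseteq\ol{K}^{n+1}$ via B\'ezout. After reindexing so that $\delta_1=\min_i\delta_i$, parametrize a generic affine line $L=\{p+tv:t\in\ol{K}\}\subseteq\ol{K}^{n+1}$: eliminating $t$ from the system $f_j(\term{x})=p_j+tv_j$ using the first coordinate turns $\phi^{-1}(L)$ into the zero set of $n$ polynomial equations $v_1 f_j(\term{x})-v_j f_1(\term{x})=v_1 p_j-v_j p_1$ ($j=2,\ldots,n+1$) in $n$ variables of degrees at most $\delta_2,\ldots,\delta_{n+1}$ (here we use $\delta_1\le\delta_j$), so $|\phi^{-1}(L)|\le\prod_{i\ge 2}\delta_i$ by B\'ezout. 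Hence $\deg V=|V\cap L|\le|\phi^{-1}(L)|\le\prod_{i\ge 2}\delta_i$, and any irreducible polynomial $F$ cutting out $V$ (or, when $\dim V<n$, a suitably chosen hypersurface through $V$) satisfies $\deg F\le\prod_{i\ge 2}\delta_i=(\prod_i\delta_i)/\min_i\delta_i$. Descent from $\ol{K}$ back to $K$ is automatic, since $\ker\phi$ is a $K$-subspace of $K[y_1,\ldots,y_{n+1}]$ and a $\ol{K}$-linear dependence among the finitely many values $\{\phi(y^a)\}_a$ of bounded degree yields a $K$-linear dependence of the same form.
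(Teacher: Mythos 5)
The paper does not prove this statement; it imports it verbatim as \cite[Theorem~1.1]{bib:Plo05}, so there is no in-paper argument to compare against. Your dimension-counting warm-up correctly diagnoses its own inadequacy (loss of a factor $n+1$), and the B\'ezout-based sketch is a legitimate route to the sharp bound; the descent from $\overline{K}$ to $K$ at the end is sound and matches the paper's Lemma~\ref{lem:AnnihilatingPolynomialFieldExtension}.

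The real gap is the degenerate case $\dim V<n$, which the theorem does not exclude. The chain $\deg V = |V\cap L| \le |\phi^{-1}(L)| \le \prod_{i\ge 2}\delta_i$ presupposes that $V=\overline{\phi(\overline{K}^n)}$ is a hypersurface, i.e.\ $\trdeg_K\{f_1,\dots,f_{n+1}\}=n$. When $\dim V<n$, a generic affine line misses $V$, so $|V\cap L|$ carries no information; the fibres of $\phi$ over $V$ are positive-dimensional, so $\phi^{-1}(L)$ is not finite and the affine B\'ezout count does not apply; and $\deg V$ is now defined by intersecting with a codimension-$\dim V$ linear space, where the triangular elimination of $t$ that produced degrees $\delta_2,\dots,\delta_{n+1}$ does not carry over verbatim. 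Your parenthetical ``a suitably chosen hypersurface through $V$'' is exactly the step that needs a proof: it is true that an irreducible affine variety of degree $d$ lies on a hypersurface of degree at most $d$ (project generically to dimension $\dim V+1$), but you have not shown $\deg V \le (\prod_i\delta_i)/\min_i\delta_i$ in the lower-dimensional case, and a naive version of the preimage count gives $(\max_i\delta_i)^{\dim V}$, which can be larger. To close the gap one either has to choose the cutting linear space with the same upper-triangular support structure and control the number of irreducible components of $\phi^{-1}(\Lambda)$ via a B\'ezout inequality, or first reduce to $r=\trdeg\{f_1,\dots,f_{n+1}\}$ variables by a Perron-independent construction such as Lemma~\ref{lem:FaithfulHomomorphismExistence} and then run your argument with $r$ in place of $n$. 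A smaller point: even when $\dim V=n$, the equalities $\deg V=|V\cap L|$ and the finiteness of $\phi^{-1}(L)$ need a genericity hypothesis on $(p,v)$ (avoiding tangency, the positive-dimensional-fibre locus, and $V\setminus\phi(\overline{K}^n)$); this is satisfiable, but a complete write-up should say so.
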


In the following corollary we give a degree bound in the general situation, where more variables than polynomials
are allowed. Moreover, the bound is in terms of the $\trdeg$ of the polynomials instead of the number
of variables. We hereby improve \cite[Theorem 11]{bib:Kay09} and generalize it to arbitrary characteristic. 
The proof uses a result
from Sect. \ref{sec:FaithfulHomomorphisms} and is given in Appendix \ref{app:AlgebraicIndependence}.

\begin{corollary}[Degree bound for annihilating polynomials] \label{cor:AnnihilatingPolynomialDegreeBound}
	Let $f_1, \dotsc, f_m$ $\in K[\term{x}]$ be algebraically dependent polynomials of 
	maximal degree $\delta$ and $\trdeg$ $r$.
	Then there exists a non-zero polynomial $F \in K[y_1, \dotsc, y_m]$ of degree at most $\delta^r$ such that
	$F(f_1, \dotsc, f_m) = 0$.
\end{corollary}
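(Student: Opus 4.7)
The plan is to reduce to Perron's Theorem~\ref{thm:PerronsTheorem} after first shrinking the variable count from $n$ to $r$ via a faithful homomorphism, whose existence will be supplied by the construction of Section~\ref{sec:FaithfulHomomorphismExistence}. First, I would reorder so that $\{f_1, \dotsc, f_r\}$ is algebraically independent; any constant $f_i$ is trivially annihilated and can be dropped. It then suffices to produce a nonzero $F \in K[y_1, \dotsc, y_{r+1}]$ of degree at most $\delta^r$ with $F(f_1, \dotsc, f_{r+1}) = 0$, since such an $F$ serves as an annihilator in $K[y_1, \dotsc, y_m]$ by treating $y_{r+2}, \dotsc, y_m$ as dummy variables.

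Next, I would invoke Section~\ref{sec:FaithfulHomomorphismExistence} to pick a $K$-algebra homomorphism $\varphi \colon K[\term{x}] \to K[z_1, \dotsc, z_r]$ given by a linear substitution on the variables and faithful for $\{f_1, \dotsc, f_{r+1}\}$; thus $\trdeg_K\{\varphi(f_j)\}_{j=1}^{r+1} = r$, and linearity of the substitution forces $\deg \varphi(f_j) \le \deg f_j \le \delta$ for each $j$. Now $\varphi(f_1), \dotsc, \varphi(f_{r+1})$ are $r+1$ polynomials in $r$ variables, each of degree at most $\delta$, so Perron's theorem yields a nonzero $F \in K[y_1, \dotsc, y_{r+1}]$ with $F(\varphi(f_1), \dotsc, \varphi(f_{r+1})) = 0$ and $\deg F \le \delta^{r+1}/\delta = \delta^r$.

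The remaining step is to pull $F$ back. Set $g := F(f_1, \dotsc, f_{r+1})$; since $\varphi$ is a ring homomorphism, $\varphi(g) = 0$. The subrings $A := K[f_1, \dotsc, f_{r+1}]$ and $B := K[\varphi(f_1), \dotsc, \varphi(f_{r+1})]$ are both finitely generated integral domains over $K$ of Krull dimension $r$, by the trdeg--Krull dimension identification of Section~\ref{sec:AffineAlgebrasKrullDimension} together with the faithfulness of $\varphi$. Hence $\varphi|_A \colon A \twoheadrightarrow B$ is a surjection of affine domains of equal Krull dimension, forcing its kernel to be zero (any nonzero prime would strictly drop the dimension by catenarity of affine $K$-algebras). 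Therefore $g = 0$, which gives the desired annihilator of the original $f_i$'s. The hard part of this plan is arranging a degree-preserving (linear-substitution) faithful homomorphism so as not to inflate degrees before invoking Perron; this is the sole nontrivial input drawn from Section~\ref{sec:FaithfulHomomorphismExistence}, after which the pullback is a short dimension count.
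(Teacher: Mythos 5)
Your plan is correct and aligns with the paper's overall strategy (reduce to $r+1$ polynomials, compress to $r$ variables with a linear faithful map, invoke Perron), but the final pullback step is carried out by a different, equally valid argument. The paper fixes at the outset an \emph{irreducible} annihilator $F$ of the original $f_1,\ldots,f_{r+1}$, applies Perron to the images to obtain an annihilator $G$ with $\deg G\le\delta^r$, and then invokes Lemma~\ref{lem:RelationIdealPrincipal} (principality of the relation ideal of the $\varphi(f_i)$, since exactly $r$ of them are independent) to conclude $F\mid G$, whence $\deg F\le\delta^r$. You instead take $F$ to be the Perron annihilator of the images and pull it back to the originals by noting that $\varphi\vert_A$, with $A=K[f_1,\ldots,f_{r+1}]$, is injective; this is precisely the content of Theorem~\ref{thm:FaithfulHomomorphismInjective}, proved there via Krull's Hauptidealsatz, and your dimension count reproduces that proof. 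Your route is marginally more direct --- it produces the low-degree annihilator itself rather than bounding the degree of a predetermined one --- and replaces the resultant-based Lemma~\ref{lem:RelationIdealPrincipal} by the Krull-dimension argument already developed in Section~\ref{sec:AffineAlgebrasKrullDimension}. Interestingly, your version is closer to the paper's own one-line proof sketch than its appendix proof is.

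One omission should be patched: Lemma~\ref{lem:FaithfulHomomorphismExistence} is stated only for infinite $K$, while the corollary is over an arbitrary field. The paper opens the proof by reducing to this case via Lemma~\ref{lem:AnnihilatingPolynomialFieldExtension} --- pass to $\ol{K}$, obtain an annihilator of degree $\le\delta^r$ there, then descend to $K$-coefficients with no increase in degree --- and you need the same step. (A smaller point, shared with the paper: Perron's theorem requires each $\deg\varphi(f_i)\ge 1$; after discarding the trivial case of some constant $f_i$, this does follow from the injectivity of $\varphi\vert_A$ that you already establish, since a non-constant $f_i$ with constant image would contradict injectivity.)
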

\begin{proof}[Proof sketch]
In Lemma \ref{lem:FaithfulHomomorphismExistence} we construct a homomorphism (by first principles) that 
reduces the number of 
variables to $r$ and preserves the $\trdeg$. We can then invoke Perron's theorem on $r+1$ of the polynomials.
\end{proof}

\begin{remark}
	The bound in Corollary \ref{cor:AnnihilatingPolynomialDegreeBound} is tight. To see this, let $n \ge 2$,
	let $\delta \ge 1$ and define the polynomials, $f_1 := x_1$, $f_2 := x_2 - x_1^{\delta}$, $\dotsc$, 
	$f_n := x_n - x_{n-1}^{\delta}$, $f_{n+1} := x_n^{\delta}$ in $K[\term{x}]$. Then 
	$\trdeg\{f_1, \dotsc, f_{n+1}\} = n$ and every annihilating polynomial
	of $f_1, \dotsc, f_{n+1}$ has degree at least $\delta^n$.
\end{remark}

\subsection{Jacobi's criterion (large or zero characteristic)} \label{sec:JacobianCriterion}

In large or zero characteristic, the well-known Jacobian criterion yields a more efficient criterion 
for algebraic independence. 

For $i \in [n]$, we denote the $i$-th formal partial derivative of a polynomial $f \in K[\term{x}]$
by $\partial_{x_i} f$. Now let $f_1, \dotsc, f_m \in K[\term{x}]$. Then
\[
	J_{\term{x}}(f_1, \dotsc, f_m) := \bigl(\partial_{x_j} f_i\bigr)_{i, j} = 
	\begin{pmatrix}
		\partial_{x_1} f_1 & \cdots & \partial_{x_n} f_1 \\
		\vdots & & \vdots \\
		\partial_{x_1} f_m & \cdots & \partial_{x_n} f_m
	\end{pmatrix} \in K[\term{x}]^{m \times n}
\]
is called the {\em Jacobian matrix} of $f_1, \dotsc, f_m$. Its matrix-rank over the function field is of
great interest.

\begin{theorem}[Jacobian criterion] \label{thm:JacobianCriterion}
	Let $f_1, \dotsc, f_m \in K[\term{x}]$ be polynomials of degree at most $\delta$ and $\trdeg$ $r$. 
	Assume that $\ch(K) = 0$ or $\ch(K) > \delta^r$. Then $\rk_{L} J_{\term{x}}(f_1, \dotsc, f_m)=
	\trdeg_K\{f_1, \dotsc,f_m\}$, where $L = K(\term{x})$.
\end{theorem}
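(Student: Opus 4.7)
The plan is to prove $\rk_L J \leq r$ and $\rk_L J \geq r$ separately, where $r := \trdeg_K\{f_1, \ldots, f_m\}$ and $p := \ch(K)$. Both directions combine the chain rule, Corollary \ref{cor:AnnihilatingPolynomialDegreeBound} (Perron's degree bound), and the characteristic hypothesis.

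For the upper bound, reorder so that $f_1, \ldots, f_r$ is a maximal algebraically independent subset. For each $j > r$, Corollary \ref{cor:AnnihilatingPolynomialDegreeBound} supplies a nonzero annihilator $F_j \in K[y_1, \ldots, y_r, y_j]$ of total degree at most $\delta^r$ satisfying $F_j(f_1, \ldots, f_r, f_j) = 0$; choose $F_j$ to minimize $\deg_{y_j} F_j$, which is positive because $f_1, \ldots, f_r$ are algebraically independent. Differentiating in $x_k$ by the chain rule expresses $(\partial_{y_j} F_j)(\term{f}) \cdot \partial_{x_k} f_j$ as an $L$-combination of $\partial_{x_k} f_1, \ldots, \partial_{x_k} f_r$, so row $j$ of $J$ lies in the $L$-span of rows $1, \ldots, r$ provided $(\partial_{y_j} F_j)(\term{f}) \ne 0$. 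Minimality rules out the case $\partial_{y_j} F_j \ne 0$ vanishing at $\term{f}$. And $\partial_{y_j} F_j = 0$ is impossible under the hypothesis: in characteristic zero this directly contradicts $\deg_{y_j} F_j \geq 1$; in characteristic $p > \delta^r$ it would force $F_j \in K[y_1, \ldots, y_r, y_j^p]$ with $\deg_{y_j} F_j$ a positive multiple of $p$, violating $\deg F_j \leq \delta^r < p$.

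For the lower bound, it suffices to show the $r$ rows of $J$ corresponding to an algebraically independent $f_1, \ldots, f_r$ are $L$-linearly independent. Extend $\{f_1, \ldots, f_r\}$ to a transcendence basis of $K(\term{x})/K$ using $n-r$ of the variables; relabel so that this basis is $\{f_1, \ldots, f_r, x_{r+1}, \ldots, x_n\}$. Set $E := K(f_1, \ldots, f_r)$, $E' := E(x_{r+1}, \ldots, x_n)$, $F := K(\term{x})$. The crux is separability of $F/E$. For each $i \in [r]$, apply Corollary \ref{cor:AnnihilatingPolynomialDegreeBound} relatively to $\{f_1, \ldots, f_r, x_i\} \subset K(x_{r+1}, \ldots, x_n)[x_1, \ldots, x_r]$, a set of transcendence degree $r$ and maximum degree $\delta$ over $K(x_{r+1}, \ldots, x_n)$. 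This yields a nonzero polynomial of degree at most $\delta^r$ over $E'$ annihilating $x_i$, so the minimal polynomial of $x_i$ over $E'$ has degree $\leq \delta^r < p$. Such an irreducible polynomial is automatically separable (it cannot lie in $E'[T^p]$ with positive $T$-degree), so $F/E'$ is separable algebraic; combined with the purely transcendental $E'/E$, the extension $F/E$ is separable.

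Conclude via the K\"ahler differential exact sequence $\Omega_{E/K} \otimes_E F \to \Omega_{F/K} \to \Omega_{F/E} \to 0$: separability of $F/E$ makes the first map injective, and since $E/K$ is purely transcendental on $f_1, \ldots, f_r$, the module $\Omega_{E/K}$ is free on $df_1, \ldots, df_r$, so their images in $\Omega_{F/K}$ are $F$-linearly independent. Expanding $df_i = \sum_j (\partial_{x_j} f_i) \, dx_j$ in the basis $dx_1, \ldots, dx_n$ of $\Omega_{F/K}$ identifies this with $L$-linear independence of rows $1, \ldots, r$ of $J$. The main obstacle is matching the threshold $p > \delta^r$: it is the relative form of Perron's bound that yields the $\delta^r$ estimate on minimal polynomials (rather than the naive $\delta^n$), and the characteristic hypothesis is used in both directions for essentially the same reason — to prevent inseparable dependencies from masking the transcendence degree. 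One could also bypass the K\"ahler formalism by differentiating the relative annihilators and running a minimality argument parallel to the upper bound, at the cost of a more intricate determinant calculation.
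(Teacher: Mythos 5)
Your proof is correct, but both halves take a genuinely different route from the paper. For the inequality $\rk_L J \le r$, you show directly that each row $j>r$ lies in the $L$-span of rows $1,\dotsc,r$ by differentiating an annihilator of $f_1,\dotsc,f_r,f_j$ of degree at most $\delta^r$ and using the characteristic hypothesis to guarantee $(\partial_{y_j}F_j)(\term{f})\ne 0$. The paper instead proves this direction as a standalone Lemma \ref{lem:JacobianCriterion} that holds over \emph{every} field, by a contrapositive argument: if the first $\rk J$ rows are independent but the corresponding $f_i$ were dependent, differentiating a minimal-degree annihilator forces all partials to vanish, which (via the perfect-closure/$p$-th-power trick and Lemma \ref{lem:AnnihilatingPolynomialFieldExtension}) contradicts minimality. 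So your version is a bit weaker on this side -- it does not recover the unconditional Lemma \ref{lem:JacobianCriterion} -- though it suffices for the theorem since the hypothesis is assumed anyway.

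For $\rk_L J \ge r$, your argument is the more striking departure. The paper stays entirely elementary: it takes minimal-degree annihilators $F_i$ of $x_i, f_1,\dotsc,f_r,x_{r+1},\dotsc,x_n$, uses Perron plus the characteristic bound to force $(\partial_{y_0}F_i)(\cdot)\ne 0$, and differentiates to exhibit an invertible block matrix whose nontrivial block is the top-left $r\times r$ minor of $J$. You instead pass to the tower $E = K(f_1,\dotsc,f_r)\subset E' = E(x_{r+1},\dotsc,x_n)\subset F = K(\term{x})$, invoke the relative Perron bound to show each $x_i$ has minimal polynomial of degree $\le\delta^r < p$ over $E'$ (hence separable), conclude $F/E$ is separably generated, and then read off linear independence of rows $1,\dotsc,r$ from the injectivity of $\Omega_{E/K}\otimes_E F \to \Omega_{F/K}$ together with freeness of $\Omega_{E/K}$ on $df_1,\dotsc,df_r$. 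This buys conceptual clarity -- it makes explicit that the characteristic hypothesis is precisely a separability condition -- at the cost of importing the conormal exact sequence for K\"ahler differentials and its separability criterion from commutative algebra; the paper's proof requires only the chain rule and Perron's bound. One small point worth making precise in the upper bound: the minimization of $\deg_{y_j} F_j$ should be taken over annihilators of total degree at most $\delta^r$ (or, cleaner, take $F_j$ to be the irreducible generator of the principal relation ideal from Lemma \ref{lem:RelationIdealPrincipal}), since that is what guarantees the derivative $\partial_{y_j}F_j$ is again in the competing class and so witnesses a contradiction.
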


A proof of the Jacobian criterion in characteristic $0$ appears, for example, in \cite{bib:ER93} and the case of large prime
characteristic was dealt with in \cite{bib:DGW09}. By virtue of Theorem \ref{thm:PerronsTheorem} 
our proof could tolerate a slightly smaller characteristic. For the reader's convenience, 
a full proof is given in Appendix \ref{app:JacobianCriterion}.
We isolate the following special case of Theorem \ref{thm:JacobianCriterion},
because it holds in arbitrary characteristic.

\begin{lemma} \label{lem:JacobianCriterion}
	Let $f_1, \dotsc, f_m \in K[\term{x}]$. Then $\trdeg_K\{f_1, \dotsc,f_m\}\ge$ 
	$\rk_{L} J_{\term{x}}(f_1, \dotsc$, $f_m)$, where $L = K(\term{x})$.
\end{lemma}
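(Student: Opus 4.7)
The plan is to pull back an algebraic dependence among the $f_i$ to a linear dependence among rows of the Jacobian via the chain rule. Writing $r := \trdeg_K\{f_1, \dotsc, f_m\}$, I would show that any $r+1$ rows of $J := J_{\term{x}}(f_1, \dotsc, f_m)$ are $L$-linearly dependent. Fix indices $i_1, \dotsc, i_{r+1}$; since the corresponding polynomials are algebraically dependent, a nonzero annihilator $F \in \bar{K}[y_1, \dotsc, y_{r+1}]$ exists (using Lemma \ref{lem:AnnihilatingPolynomialFieldExtension} to pass to $\bar{K}$). Differentiating the identity $F(f_{i_1}, \dotsc, f_{i_{r+1}}) = 0$ with respect to $x_j$ and applying the chain rule yields, for every $j \in [n]$,
\[
\sum_{k=1}^{r+1} (\partial_{y_k} F)(f_{i_1}, \dotsc, f_{i_{r+1}}) \cdot \partial_{x_j} f_{i_k} \;=\; 0,
\]
which is a linear relation among rows $i_1, \dotsc, i_{r+1}$ of $J$ whose ``coefficients'' are the evaluated partial derivatives of $F$.

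The crux is to make this relation nontrivial. I would choose $F$ of \emph{minimal} total degree among nonzero annihilators. Then if some formal derivative $\partial_{y_k} F$ is nonzero as a polynomial in $\bar{K}[y]$, it has strictly smaller total degree, so by minimality it cannot itself annihilate $(f_{i_1}, \dotsc, f_{i_{r+1}})$, forcing its evaluation to be nonzero. The remaining degenerate case is that \emph{all} formal partials vanish identically. In characteristic $0$ this would force $F$ to be a nonzero constant, contradicting $F(f_{i_1}, \dotsc, f_{i_{r+1}}) = 0$. In characteristic $p$ it means every monomial of $F$ has each exponent divisible by $p$, so $F = \sum_\alpha a_{p\alpha}\, y^{p\alpha}$; since $\bar{K}$ is perfect, $H := \sum_\alpha a_{p\alpha}^{1/p} y^\alpha$ satisfies $H^p = F$, whence $H(f_{i_1}, \dotsc, f_{i_{r+1}})^p = 0$, and $H$ itself annihilates the tuple at strictly smaller degree (because $\bar{K}[\term{x}]$ is a domain), again contradicting minimality.

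Thus a nontrivial $\bar{K}(\term{x})$-linear relation exists among the chosen rows; since matrix rank is invariant under field extension, this also gives an $L$-linear dependence, yielding $\rk_L J \le r$. The main obstacle is the positive-characteristic ``inseparable'' degenerate case, and the rescue --- taking $p$-th roots of coefficients over $\bar{K}$ to produce a lower-degree annihilator --- is precisely the step that the \emph{reverse} inequality of the full Jacobian criterion (Theorem \ref{thm:JacobianCriterion}) cannot pull off without a hypothesis on $\ch(K)$.
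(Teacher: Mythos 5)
Your proof is correct and uses essentially the same approach as the paper's: chain-rule differentiation of a minimal-degree annihilator, with the perfect-field Frobenius argument (plus Lemma~\ref{lem:AnnihilatingPolynomialFieldExtension}) rescuing the case when all formal partials vanish in positive characteristic. The only cosmetic difference is one of direction — you show every $r+1$ rows are $L$-dependent starting from $r = \trdeg$, while the paper fixes $r = \rk_L J$ linearly independent rows and derives a contradiction from an assumed algebraic dependence among the corresponding $f_i$; you also work over $\overline{K}$ throughout rather than descending back to $K$ at the end, but neither difference changes the substance.
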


\subsection{Krull dimension of affine algebras} \label{sec:AffineAlgebrasKrullDimension}

In this section, we want to highlight the connection between transcendence degree and
the Krull dimension of affine algebras. This will enable us to use Krull's principal
ideal theorem which is stated below.

In this paper, a {\em $K$-algebra} $A$ is always a commutative ring containing $K$
as a subring. 
The most important example of a $K$-algebra is $K[\term{x}]$. Let $A, B$ be $K$-algebras. 
A map $A \rightarrow B$ is called a {\em $K$-algebra homomorphism} if it is a ring homomorphism
that fixes $K$ element-wise. 

We want to extend the definition of algebraic independence to algebras (whose elements may not be the usual
polynomials any more). Let $a_1, \dotsc, a_m \in A$
and consider the $K$-algebra homomorphism
\[
	\rho: K[\term{y}] \rightarrow A, \qquad F \mapsto F(a_1, \dotsc, a_m) ,
\]
where $K[\term{y}] = K[y_1, \dotsc, y_m]$. If $\ker(\rho) = \{0\}$, then $\{a_1, \dotsc, a_m\}$
is called algebraically independent over $K$. If $\ker(\rho) \neq \{0\}$, 
then $\{a_1, \dotsc, a_m\}$ is called algebraically dependent over $K$.
For a subset $S \subseteq A$, we define the transcendence degree of $S$ over $K$ by an obvious
supremum:
\[
	\trdeg_K(S) := \sup\bigl\{ \abs{T	} \,\vert\; \text{$T \subseteq S$ is finite and algebraically independent}\bigr\} .
\]
The image of $K[\term{y}]$ under $\rho$ is the subalgebra of $A$ generated by $a_1, \dotsc, a_m$ and 
is denoted by
$K[a_1, \dotsc, a_m]$. An algebra of this form is called an {\em affine $K$-algebra}, and it
is called an {\em affine $K$-domain} if it is an integral domain.

The {\em Krull dimension} of $A$, denoted by $\dim(A)$, is defined as the supremum over all $r \ge 0$ for 
which
there is a chain $\mathfrak{p}_0 \subsetneq \mathfrak{p}_1 \subsetneq \dotsb \subsetneq \mathfrak{p}_r$
of prime ideals $\mathfrak{p}_i \subset A$. It measures how far $A$ is from a field.

\begin{theorem}[Dimension and trdeg] \label{thm:DimEqualsTrdeg}
	Let $A = K[a_1, \dotsc, a_m]$ be an affine $K$-algebra. Then $\dim(A) = \trdeg_K(A)$ 
	$= \trdeg_K\{a_1, \dotsc, a_m\}$.
\end{theorem}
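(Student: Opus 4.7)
The plan is to prove the two equalities separately, reducing to the case of affine $K$-domains first and then invoking Noether normalization plus the behavior of Krull dimension under integral extensions.

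First I would establish $\trdeg_K(A) = \trdeg_K\{a_1,\dotsc,a_m\}$. The inequality $\trdeg_K(A) \ge \trdeg_K\{a_1,\dotsc,a_m\}$ is immediate, since any algebraically independent subset of $\{a_1, \dotsc, a_m\}$ is also algebraically independent in $A$. For the reverse inequality, let $\{b_1,\dotsc,b_s\} \subseteq A$ be algebraically independent, so the evaluation map $K[\term{y}] \to A$, $y_i \mapsto b_i$, has trivial kernel; in particular the image $K[b_1,\dotsc,b_s]$ is a polynomial ring, hence a domain. The key obstruction is that $A$ may have zero-divisors, so I would first reduce to the domain case: since $A$ is Noetherian, there are finitely many minimal primes $\mathfrak{p}_1,\dotsc,\mathfrak{p}_t$ whose intersection is the nilradical, and a product-of-annihilators argument shows $\trdeg_K(A) = \max_j \trdeg_K(A/\mathfrak{p}_j)$. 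Analogously, any chain of primes in $A$ sits above some minimal $\mathfrak{p}_j$, giving $\dim(A) = \max_j \dim(A/\mathfrak{p}_j)$, and clearly $A/\mathfrak{p}_j = K[\bar{a}_1,\dotsc,\bar{a}_m]$ with $\trdeg_K\{\bar{a}_1,\dotsc,\bar{a}_m\} \le \trdeg_K\{a_1,\dotsc,a_m\}$. So it suffices to establish both equalities for an affine $K$-\emph{domain}.

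For a domain $A = K[a_1,\dotsc,a_m]$, the fraction field is $L := K(a_1,\dotsc,a_m)$, and every $b \in A \setminus K$ lies in $L$, so $L/K(b_1,\dotsc,b_s)$ is algebraic for any algebraically independent $\{b_1,\dotsc,b_s\} \subseteq A$. Hence $s \le \trdeg(L/K) = \trdeg_K\{a_1,\dotsc,a_m\}$, yielding $\trdeg_K(A) \le \trdeg_K\{a_1,\dotsc,a_m\}$. This settles the second equality.

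For the first equality, let $r := \trdeg_K(A)$ and apply Noether normalization: there exist $z_1,\dotsc,z_r \in A$, algebraically independent over $K$, such that $A$ is a finitely generated (hence integral) module over the polynomial subring $B := K[z_1,\dotsc,z_r]$. Then going-up and incomparability imply that $\dim(A) = \dim(B)$: a chain $\mathfrak{q}_0 \subsetneq \dotsb \subsetneq \mathfrak{q}_d$ in $B$ lifts to a strict chain in $A$ by going-up, while a chain in $A$ contracts to a strict chain in $B$ by incomparability. Finally I would invoke the classical fact $\dim K[z_1,\dotsc,z_r] = r$, which the proof sketch in the paper already foreshadows via Krull's principal ideal theorem (the lower bound comes from the explicit chain $0 \subsetneq (z_1) \subsetneq \dotsb \subsetneq (z_1,\dotsc,z_r)$, and the upper bound is an inductive application of Krull's Hauptidealsatz). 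Combining these gives $\dim(A) = r = \trdeg_K(A)$.

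The main obstacle is the reduction to the domain case: the definition of $\trdeg_K$ for general algebras (via an evaluation-map kernel) must be carefully reconciled with the minimal-prime decomposition, and one must check that algebraically independent elements modulo a minimal prime can be lifted to algebraically independent elements of $A$. After that, the remaining steps (Noether normalization, going-up, $\dim K[\term{z}] = r$) are standard and can be cited.
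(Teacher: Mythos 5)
Your proof is correct and follows the standard textbook route (Noether normalization plus going-up/incomparability, together with a reduction to the domain case via minimal primes); the paper itself does not prove the statement but simply cites Kemper (Thm.\ 5.9, Prop.\ 5.10) and Matsumura (Thm.\ 5.6), which establish it by essentially the argument you sketch. So your proposal fills in what the paper outsources to the references, and all the main steps are sound: the product-of-annihilators reduction to domains, the lifting of algebraically independent elements across a minimal prime, and the identification $\dim A = \dim B = r$ via integrality.

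One small slip worth flagging: in the domain case you assert that ``$L/K(b_1,\dotsc,b_s)$ is algebraic for any algebraically independent $\{b_1,\dotsc,b_s\} \subseteq A$.'' That is false in general (take $A = K[x,y]$ and $\{b_1\} = \{x\}$: then $L = K(x,y)$ is not algebraic over $K(x)$). Fortunately the claim is not needed. The correct and sufficient observation is merely that $K(b_1,\dotsc,b_s)$ is a subfield of $L$, so $s = \trdeg(K(b_1,\dotsc,b_s)/K) \le \trdeg(L/K) = \trdeg_K\{a_1,\dotsc,a_m\}$, by monotonicity of transcendence degree under field extension. With that one-line correction the argument is complete. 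You might also note explicitly (as you do implicitly) that the $d$ produced by Noether normalization coincides with $\trdeg_K(A)$ because $A$ is integral, hence algebraic, over $B = K[z_1,\dotsc,z_d]$, so $\trdeg_K(A) = \trdeg_K(B) = d$.
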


\begin{proof}
	Cf. \cite[Theorem 5.9 and Proposition 5.10]{bib:Kem11}. Also, the integral domain case is in the standard
	text \cite[Theorem 5.6]{bib:Mat89}. 
\end{proof}

The following corollary is a simple consequence of Theorem \ref{thm:DimEqualsTrdeg}.
It shows that homomorphisms cannot increase the dimension of affine algebras.
The proof is given in Appendix \ref{app:AffineAlgebrasKrullDimension}.

\begin{corollary} \label{cor:DimNonIncreasing}
	Let $A,B$ be $K$-algebras and let $\varphi:A \rightarrow B$ be a 
	$K$-algebra homomorphism. If $A$ is an affine algebra, then so is $\varphi(A)$ and we have 
	$\dim(\varphi(A))$ $\le \dim(A)$. 
	If, in addition, $\varphi$ is injective, then $\dim(\varphi(A)) = \dim(A)$.
\end{corollary}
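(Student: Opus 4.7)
The plan is to reduce everything to Theorem \ref{thm:DimEqualsTrdeg}, which identifies Krull dimension of affine algebras with transcendence degree of a generating set, and then argue about algebraic (in)dependence of images.

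First I would observe that if $A = K[a_1, \dotsc, a_m]$, then any ring homomorphism $\varphi$ of $K$-algebras satisfies $\varphi(F(a_1, \dotsc, a_m)) = F(\varphi(a_1), \dotsc, \varphi(a_m))$ for every $F \in K[\term{y}]$, because $\varphi$ is $K$-linear and multiplicative. Consequently $\varphi(A) = K[\varphi(a_1), \dotsc, \varphi(a_m)]$, which is affine by definition. This already settles the first assertion.

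Next, by Theorem \ref{thm:DimEqualsTrdeg}, it suffices to show
\[
\trdeg_K\{\varphi(a_1), \dotsc, \varphi(a_m)\} \;\le\; \trdeg_K\{a_1, \dotsc, a_m\} ,
\]
with equality if $\varphi$ is injective. For the inequality I would argue contrapositively: pick any subset $S \subseteq [m]$ such that $\{a_i \mid i \in S\}$ is algebraically dependent, witnessed by a nonzero $F \in K[\term{y}]$ with $F((a_i)_{i \in S}) = 0$. Applying $\varphi$ gives $F((\varphi(a_i))_{i \in S}) = \varphi(0) = 0$, so $\{\varphi(a_i) \mid i \in S\}$ is algebraically dependent as well. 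Hence every algebraically independent subset on the image side lifts to an algebraically independent subset on the source side, which yields the desired bound on transcendence degrees and therefore on dimensions.

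For the equality case when $\varphi$ is injective, I would reverse the above argument: if $F \in K[\term{y}]$ is nonzero and $F((a_i)_{i \in S}) \neq 0$ for some independent set $\{a_i \mid i \in S\}$, then injectivity of $\varphi$ gives $F((\varphi(a_i))_{i \in S}) = \varphi(F((a_i)_{i \in S})) \neq 0$, so any algebraic relation among the $\varphi(a_i)$ pulls back to one among the $a_i$. This produces the opposite inequality, giving $\trdeg_K\{\varphi(a_1), \dotsc, \varphi(a_m)\} = \trdeg_K\{a_1, \dotsc, a_m\}$, and then Theorem \ref{thm:DimEqualsTrdeg} finishes the proof. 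There is no real obstacle here; the only subtlety is being careful that the $K$-algebra assumption guarantees $\varphi$ fixes the coefficients of $F$, which is exactly what makes the substitution-and-apply step work.
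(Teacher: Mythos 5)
Your proof is correct and takes essentially the same route as the paper: observe $\varphi(A) = K[\varphi(a_1),\dotsc,\varphi(a_m)]$ is affine, invoke Theorem~\ref{thm:DimEqualsTrdeg} to reduce dimension to transcendence degree of the generators, and use that a $K$-algebra homomorphism sends algebraic relations to algebraic relations (and, when injective, reflects them). The paper phrases the two inequalities as proofs by contradiction rather than contrapositives, but the underlying reasoning is identical.
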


In the next section we will need the following version of Krull's principal ideal theorem.

\begin{theorem}[Krull's Hauptidealsatz] \label{thm:PrincipalIdealTheorem}
	Let $A$ be an affine $K$-domain and let $a \in A \setminus (A^* \cup \{0\})$.
	Then $\dim(A/\langle a \rangle) = \dim(A) - 1$.
\end{theorem}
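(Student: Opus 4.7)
The plan is to establish the two inequalities $\dim(A/\langle a \rangle) \le \dim(A) - 1$ and $\dim(A/\langle a \rangle) \ge \dim(A) - 1$ separately, leveraging the identification of dimension with transcendence degree from Theorem \ref{thm:DimEqualsTrdeg}.

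A preliminary observation is that the hypothesis $a \in A \setminus (A^* \cup \{0\})$ forces $a$ to be transcendental over $K$. If not, then $K[a]$ would be a $K$-algebra that is simultaneously a domain (as a subring of $A$) and a finite-dimensional $K$-vector space, hence a field; this would put $a \in K[a]^* \subseteq A^*$ and contradict the hypothesis.

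For the upper bound, I would argue as follows. Let $\bar{b}_1, \dotsc, \bar{b}_s \in A/\langle a \rangle$ be algebraically independent over $K$, and choose lifts $b_1, \dotsc, b_s \in A$. I claim that $\{a, b_1, \dotsc, b_s\}$ is algebraically independent in $A$. If $F \in K[y_0, \dotsc, y_s]$ satisfies $F(a, b_1, \dotsc, b_s) = 0$, decompose $F = \sum_{i \ge 0} y_0^i F_i(y_1, \dotsc, y_s)$. Reducing the identity modulo $\langle a \rangle$ yields $F_0(\bar{b}_1, \dotsc, \bar{b}_s) = 0$, hence $F_0 = 0$ by hypothesis. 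Then $F(a, b_1, \dotsc, b_s) = a \cdot G(a, b_1, \dotsc, b_s)$ for a polynomial $G$ with $\deg_{y_0} G < \deg_{y_0} F$; since $A$ is a domain and $a \neq 0$, we deduce $G(a, b_1, \dotsc, b_s) = 0$, and an induction on $\deg_{y_0} F$ closes the argument. Therefore $\dim(A) \ge s + 1$, and taking the supremum over $s$ gives $\dim(A/\langle a \rangle) \le \dim(A) - 1$.

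For the lower bound, which is the classical content of the Hauptidealsatz, I would invoke two standard tools: first, Noether normalization supplies $t_1, \dotsc, t_d \in A$ (with $d = \dim(A)$) that are algebraically independent over $K$ and over which $A$ is module-finite, with $C := K[t_1, \dotsc, t_d]$ normal; second, the classical Krull height theorem for Noetherian rings (and $A$ is Noetherian by Hilbert's basis theorem, since it is a quotient of a polynomial ring) asserts that every minimal prime $\mathfrak{p}$ of $\langle a \rangle$ in $A$ has height exactly one. Combining these with the dimension formula $\mathrm{height}(\mathfrak{p}) + \dim(A/\mathfrak{p}) = \dim(A)$, which holds for any prime of an affine $K$-domain by going-down applied to the integral extension $C \hookrightarrow A$, one obtains $\dim(A/\mathfrak{p}) = \dim(A) - 1$, and therefore $\dim(A/\langle a \rangle) \ge \dim(A/\mathfrak{p}) = \dim(A) - 1$.

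The principal obstacle is the lower bound: the classical Krull height theorem is itself a nontrivial result (typically proved via the Artin-Rees lemma and symbolic-power arguments), and the dimension formula depends on the catenary property of affine domains. The upper bound, in contrast, is elementary and reduces to the transcendence-degree lifting argument outlined above.
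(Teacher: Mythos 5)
Your proposal is correct, and it goes considerably further than the paper does: the paper supplies no argument at all, merely citing Eisenbud (Cor.~13.11) and Matsumura (Thm.~13.5). You, by contrast, produce an actual two-sided argument, and both halves are sound.

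The upper bound is the genuinely elementary contribution. Your lifting argument is clean: from algebraically independent $\bar b_1,\dotsc,\bar b_s$ in $A/\langle a\rangle$ you construct $s+1$ algebraically independent elements $a,b_1,\dotsc,b_s$ of $A$, using that $A$ is a domain to divide by $a$ and inducting on $\deg_{y_0}F$. Combined with Theorem~\ref{thm:DimEqualsTrdeg} (which, note, the paper states for affine algebras, not just domains, so it does apply to the possibly non-reduced quotient $A/\langle a\rangle$), this yields $\dim(A/\langle a\rangle)\le\dim(A)-1$ without any commutative-algebra machinery. Your preliminary observation that $a\notin A^*\cup\{0\}$ forces $a$ to be transcendental over $K$ is also correct and worth having, although it is not actually used downstream.

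The lower bound is where the real content of the Hauptidealsatz lives, and you are right to flag that you are not proving it from scratch: your argument reduces the paper's statement to the classical Krull height theorem (minimal primes of $\langle a\rangle$ have height one, with height exactly one since $\langle a\rangle\ne 0$ in a domain) plus the dimension formula $\mathrm{height}(\mathfrak p)+\dim(A/\mathfrak p)=\dim(A)$ for affine $K$-domains, the latter coming from Noether normalization and going-down over the normal subring $C=K[t_1,\dotsc,t_d]$. Since all minimal primes over $\langle a\rangle$ then satisfy $\dim(A/\mathfrak p)=\dim(A)-1$, this chain in fact gives the equality outright, making your separate upper bound logically redundant --- but it is still valuable as an elementary check that does not require Krull's theorem. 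In short: you have not made the theorem free, but you have correctly identified exactly which standard inputs are needed and assembled them properly, which is more than the paper offers.
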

\begin{proof}
	Cf. \cite[Corollary 13.11]{bib:Eis95} or \cite[Theorem 13.5]{bib:Mat89}. 
\end{proof}

\section{Faithful homomorphisms: Reducing the variables} \label{sec:FaithfulHomomorphisms}

Let $f_1, \dotsc, f_m \in K[\term{x}]$ be polynomials and let $r := \trdeg\{f_1, \dotsc, f_m\}$.
Intuitively, $r$ variables should suffice to define $f_1, \dotsc, f_m$ without changing
their algebraic relations.
So let $K[\term{z}] = K[z_1, \dotsc, z_r]$ be a polynomial ring with $1 \le r \le n$.
We want to find a homomorphism $K[\term{x}] \rightarrow K[\term{z}]$ that preserves
the transcendence degree of $f_1, \dotsc, f_m$. First we give this property a name.

\begin{defn}
	Let	$\varphi: K[\term{x}] \rightarrow K[\term{z}]$ be a $K$-algebra homomorphism.  
	We say $\varphi$ is {\em faithful to $\{f_1, \dotsc, f_m\}$} if
	$\trdeg\{\varphi(f_1), \dotsc, \varphi(f_m)\} = \trdeg\{f_1, \dotsc, f_m\}$.
\end{defn}

The following theorem shows that faithful homomorphisms are useful for us.

\begin{theorem}[Faithful is useful] \label{thm:FaithfulHomomorphismInjective}
	Let $A = K[f_1, \dotsc, f_m] \subseteq K[\term{x}]$. Then $\varphi$ is faithful to 
	$\{f_1, \dotsc, f_m\}$ if and only if $\varphi\vert_A: A \rightarrow K[\term{z}]$ is injective
	(iff $A\cong K[\varphi(f_1),\ldots,\varphi(f_m)]$).
\end{theorem}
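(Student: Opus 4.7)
The plan is to translate both conditions to statements about Krull dimension via Theorem \ref{thm:DimEqualsTrdeg}, and then use Corollary \ref{cor:DimNonIncreasing} together with Krull's Hauptidealsatz (Theorem \ref{thm:PrincipalIdealTheorem}) to move between them.

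Let $A = K[f_1, \dotsc, f_m]$ and $B := K[\varphi(f_1), \dotsc, \varphi(f_m)] = \varphi(A)$. Both are affine $K$-algebras, and both are domains (as subrings of polynomial rings). By Theorem \ref{thm:DimEqualsTrdeg}, $\dim(A) = \trdeg_K\{f_1, \dotsc, f_m\}$ and $\dim(B) = \trdeg_K\{\varphi(f_1), \dotsc, \varphi(f_m)\}$. Thus $\varphi$ is faithful to $\{f_1, \dotsc, f_m\}$ if and only if $\dim(B) = \dim(A)$. Also, the surjective $K$-algebra homomorphism $\varphi\vert_A : A \twoheadrightarrow B$ is injective if and only if it is an isomorphism $A \cong K[\varphi(f_1), \dotsc, \varphi(f_m)]$, so the second equivalence in the statement is immediate. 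It therefore suffices to prove that $\varphi\vert_A$ is injective iff $\dim(B) = \dim(A)$.

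The forward direction ($\Rightarrow$) follows directly from Corollary \ref{cor:DimNonIncreasing}: if $\varphi\vert_A$ is injective, then $\dim(\varphi(A)) = \dim(A)$, so $\dim(B) = \dim(A)$.

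For the converse ($\Leftarrow$), I argue by contraposition. Suppose $\varphi\vert_A$ is not injective, so $\mathfrak{p} := \ker(\varphi\vert_A)$ is a nonzero ideal of $A$. Since $B$ is a domain, $A/\mathfrak{p} \cong B$ shows $\mathfrak{p}$ is a prime ideal. Pick any $a \in \mathfrak{p} \setminus \{0\}$. Since $\mathfrak{p} \subsetneq A$, $a$ cannot be a unit, so $a \in A \setminus (A^* \cup \{0\})$. Since $A$ is an affine $K$-domain, Krull's Hauptidealsatz (Theorem \ref{thm:PrincipalIdealTheorem}) gives $\dim(A/\langle a \rangle) = \dim(A) - 1$. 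From $\langle a \rangle \subseteq \mathfrak{p}$ we obtain a surjective $K$-algebra homomorphism $A/\langle a \rangle \twoheadrightarrow A/\mathfrak{p} \cong B$, so by Corollary \ref{cor:DimNonIncreasing},
\[
\dim(B) \;\le\; \dim(A/\langle a \rangle) \;=\; \dim(A) - 1 \;<\; \dim(A),
\]
contradicting faithfulness. Hence $\varphi\vert_A$ must be injective, completing the proof.

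The main delicate step is the converse direction: one needs an element in $\mathfrak{p}$ to which Krull's theorem applies, and the crucial observation is that $\mathfrak{p} \neq A$ (since $1 \notin \mathfrak{p}$) forces the chosen $a$ to be a non-unit, whereas $A$ being a domain (as a subring of $K[\mathbf{x}]$) guarantees any nonzero element is usable. Without the domain assumption, Theorem \ref{thm:PrincipalIdealTheorem} would not apply, so it is essential here that we view $A$ as sitting inside $K[\mathbf{x}]$ rather than as an abstract quotient of $K[\mathbf{y}]$.
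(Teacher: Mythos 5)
Your proof is correct and follows essentially the same route as the paper: both directions are translated into Krull dimension via Theorem \ref{thm:DimEqualsTrdeg}, the forward direction is Corollary \ref{cor:DimNonIncreasing}, and the converse applies Krull's Hauptidealsatz (Theorem \ref{thm:PrincipalIdealTheorem}) to a nonzero element of the kernel and then uses Corollary \ref{cor:DimNonIncreasing} on the induced surjection to get a dimension drop. The only cosmetic difference is how the non-unit condition is justified: the paper argues $f\notin K$ because $\varphi$ fixes $K$ element-wise and hence $f\notin A^*$, while you note that $\ker(\varphi\vert_A)$ is a proper ideal so cannot contain units — both are valid, and yours has the mild advantage of not silently invoking $A^*=K^*$.
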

\begin{proof}
	We denote $\varphi_A = \varphi\vert_A$ and $r = \trdeg\{f_1, \dotsc, f_m\}$. 
	If $\varphi_A$ is injective, then
	\[
		r = \dim(A) = \dim(\varphi_A(A))
		= \trdeg\{\varphi(f_1), \dotsc, \varphi(f_m)\}
	\]
	by Theorem \ref{thm:DimEqualsTrdeg} and Corollary \ref{cor:DimNonIncreasing}.
	Thus $\varphi$ is faithful to $\{f_1, \dotsc, f_m\}$.
	
	Conversely, let $\varphi$ be faithful to $\{f_1, \dotsc, f_m\}$. Then
	$\dim(\varphi_A(A)) = r$. Now assume for the sake of contradiction that
	$\varphi_A$ is not injective. Then there exists an $f \in A \setminus \{0\}$ such that
	$\varphi_A(f) = 0$. We have $f \notin K$, because $\varphi$ fixes $K$ element-wise, and
	hence $f \notin A^*$. Since $A$ is an affine domain, Theorem \ref{thm:PrincipalIdealTheorem} 
	implies $\dim(A / \langle f \rangle) = r-1$.
	Since $f \in \ker(\varphi_A)$, the $K$-algebra homomorphism
	\[
		\overline{\varphi}_A: A / \langle f \rangle \rightarrow K[\term{z}], \qquad a + \langle f \rangle \mapsto \varphi_A(a)
	\]
	is well-defined and $\varphi_A$ factors as $\varphi_A = \overline{\varphi}_A \circ \eta$, where
	$\eta:A \rightarrow A / \langle f \rangle$ is the canonical surjection. But then Corollary \ref{cor:DimNonIncreasing} 
	implies
	\[
		r = \dim(\varphi_A(A)) = \dim(\ol{\varphi}_A(\eta(A))) \le \dim(\eta(A)) =
		\dim(A / \langle f \rangle) = r-1 ,
	\]
	a contradiction. It follows that $\varphi_A$ is injective. 
	
	When $\varphi_A$ is injective, clearly we have $A\cong \varphi_A(A)=$ $K[\varphi(f_1),\ldots,\varphi(f_m)]$.
\end{proof}

\begin{corollary}\label{cor:FaithfulHomomorphismOnCkt}
Let $C$ be an $m$-variate circuit over $K$. Let $\varphi$ be faithful to $\{f_1,\ldots,$ $f_m\}$ 
$\subset K[\term{x}]$.
Then, $C(f_1,\ldots,f_m)=0$ iff $C(\varphi(f_1),\ldots,\varphi(f_m))=0$.  
\end{corollary}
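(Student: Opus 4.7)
The plan is to reduce the statement to the injectivity characterization already proved in Theorem \ref{thm:FaithfulHomomorphismInjective}. First I would view the circuit $C$ algebraically: every arithmetic circuit on $m$ input variables over $K$ computes some polynomial $P \in K[y_1,\dotsc,y_m]$, and $C(f_1,\dotsc,f_m)$ is, by definition, the evaluation $P(f_1,\dotsc,f_m)$ in $K[\term{x}]$. Setting $A := K[f_1,\dotsc,f_m]$, we therefore have $C(f_1,\dotsc,f_m) \in A$.

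The key observation is that $\varphi$, being a $K$-algebra homomorphism, commutes with the ring operations used to build $P$ from $y_1,\dotsc,y_m$. Consequently
\[
    \varphi\bigl(C(f_1,\dotsc,f_m)\bigr) \;=\; \varphi\bigl(P(f_1,\dotsc,f_m)\bigr) \;=\; P\bigl(\varphi(f_1),\dotsc,\varphi(f_m)\bigr) \;=\; C\bigl(\varphi(f_1),\dotsc,\varphi(f_m)\bigr),
\]
so it suffices to show that $g \in A$ vanishes iff $\varphi(g)$ vanishes. This is precisely the injectivity of $\varphi\vert_A$, which Theorem \ref{thm:FaithfulHomomorphismInjective} guarantees from the faithfulness hypothesis.

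Putting the two steps together: if $C(f_1,\dotsc,f_m) = 0$, then applying $\varphi$ gives $C(\varphi(f_1),\dotsc,\varphi(f_m)) = 0$; conversely, if $C(\varphi(f_1),\dotsc,\varphi(f_m)) = 0$, then $\varphi\vert_A\bigl(C(f_1,\dotsc,f_m)\bigr) = 0$, and injectivity of $\varphi\vert_A$ forces $C(f_1,\dotsc,f_m) = 0$. There is no genuine obstacle here; the whole content is packaged in Theorem \ref{thm:FaithfulHomomorphismInjective} (which in turn relied on Krull's Hauptidealsatz via Theorem \ref{thm:PrincipalIdealTheorem}). The only point that warrants a line of care is making explicit that a circuit computes a formal polynomial, so that applying a $K$-algebra homomorphism $\varphi$ to the output is the same as substituting $\varphi(f_i)$ for $f_i$ throughout.
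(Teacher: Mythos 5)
Your proposal is correct and is essentially the paper's own argument spelled out in more detail: both observe that $C(f_1,\dotsc,f_m)$ lives in $A = K[f_1,\dotsc,f_m]$, that $\varphi$ commutes with evaluating the polynomial computed by $C$, and then invoke the injectivity of $\varphi\vert_A$ from Theorem \ref{thm:FaithfulHomomorphismInjective} (the paper phrases this as $\varphi$ being an isomorphism onto $K[\varphi(f_1),\dotsc,\varphi(f_m)]$).
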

\begin{proof}
Note that $C(f_1,\ldots,f_m)$ resp. $C(\varphi(f_1),\ldots,\varphi(f_m))$ are elements in 
the algebras $K[f_1, \dotsc, f_m]$ resp. $K[\varphi(f_1), \dotsc, \varphi(f_m)]$. Since $\varphi$ is an 
isomorphism between these two algebras, the corollary is evident. 
\end{proof}

\subsection{A Kronecker-inspired map (arbitrary characteristic)} \label{sec:FaithfulHomomorphismExistence}

The following lemma shows that even {\em linear} faithful homomorphisms exist for all subsets of polynomials
(provided $K$ is large enough, for eg. move to $\ol{K}$ or a large enough field extension \cite{AL86}). It is a 
generalization of \cite[Claim 11.1]{bib:Kay09} to arbitrary
characteristic. The proof is given in Appendix \ref{app:FaithfulHomomorphismExistence}.

\begin{lemma}[Existence] \label{lem:FaithfulHomomorphismExistence}
	Let $K$ be an infinite field and let $f_1, \dotsc, f_m \in K[\term{x}]$ be polynomials of 
	$\trdeg$ $r$. Then there exists a linear $K$-algebra homomorphism 
	$\varphi: K[\term{x}] \rightarrow K[\term{z}]$ which is faithful to
	$\{f_1, \dotsc, f_m\}$.
\end{lemma}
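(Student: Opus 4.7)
By Theorem~\ref{thm:FaithfulHomomorphismInjective}, $\varphi$ is faithful iff $\varphi|_A$ is injective, where $A = K[f_1,\dotsc,f_m]$. After reindexing, assume $\{f_1,\dotsc,f_r\}$ is a transcendence basis; it then suffices to produce a linear $\varphi$ under which $\varphi(f_1),\dotsc,\varphi(f_r)$ remain algebraically independent. I will build $\varphi$ as a composition of $n-r$ single-variable eliminations, each of the form $x_k \mapsto c_0^{(k)} + \sum_{i<k} c_i^{(k)} x_i$ for suitable scalars $c_i^{(k)} \in K$. This reduces the task to the following single-variable claim: if $g_1,\dotsc,g_r \in K[x_1,\dotsc,x_k]$ are algebraically independent with $k>r$, then for some $\mathbf{c} \in K^k$ the substitution $x_k \mapsto c_0 + \sum_{i<k} c_i x_i$ keeps the images algebraically independent in $K[x_1,\dotsc,x_{k-1}]$.

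To prove the claim, first promote $\mathbf{c} = (c_0,\dotsc,c_{k-1})$ to fresh indeterminates over $K$ and consider the $K(\mathbf{c})$-algebra homomorphism $\Phi\colon K(\mathbf{c})[\mathbf{x}] \rightarrow K(\mathbf{c})[x_1,\dotsc,x_{k-1}]$ sending $x_k \mapsto L := c_0 + \sum_{i<k} c_i x_i$, whose kernel is $\langle x_k - L\rangle$. Since $K(\mathbf{c})/K$ is purely transcendental, $g_1,\dotsc,g_r$ remain algebraically independent over $K(\mathbf{c})$; hence $B := K(\mathbf{c})[g_1,\dotsc,g_r]$ is an affine $K(\mathbf{c})$-domain of dimension $r$ (Theorem~\ref{thm:DimEqualsTrdeg}) with $B^* = K(\mathbf{c})^*$. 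I claim $\Phi|_B$ is injective. If it were not, take $0 \neq f \in B \cap \langle x_k - L\rangle$; as $f$ is a non-unit of $B$, Krull's Hauptidealsatz (Theorem~\ref{thm:PrincipalIdealTheorem}) gives $\dim(B/\langle f\rangle_B) = r-1$, so $\dim \Phi(B) \le r-1$. To derive the contradiction I would exploit the dominance of the morphism $G = (g_1,\dotsc,g_r)\colon \ol{K}^k \rightarrow \ol{K}^r$ (generic fibre dimension $k-r \ge 1$) together with the classical inequality $\dim(V \cap H) \ge \dim V - 1$, concluding that $G$ restricted to the hyperplane $H\colon x_k = L$ remains dominant, i.e., $\dim \Phi(B) = r$.

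With generic algebraic independence established, the specialization step is standard: Perron's theorem (Theorem~\ref{thm:PerronsTheorem}) bounds the degree of any potential annihilator, confining the set of bad $\mathbf{c}_0 \in K^k$ to a proper Zariski-closed subset which a Schwartz--Zippel-type argument avoids since $K$ is infinite. Iterating the substitution $n-r$ times and relabeling the remaining variables as $z_1,\dotsc,z_r$ yields the desired linear $\varphi$.

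\textbf{Main obstacle.} The chief delicate point is the characteristic-free proof of injectivity of $\Phi|_B$. In large or zero characteristic one could invoke the Jacobian criterion (Theorem~\ref{thm:JacobianCriterion}) and track a surviving Jacobian minor, but Lemma~\ref{lem:JacobianCriterion} only provides one direction in arbitrary characteristic. The proof must instead rest on Krull's Hauptidealsatz combined with a fibre-dimension argument valid over any field, and balancing the ideal-theoretic bookkeeping against the geometric intuition is where most of the effort goes.
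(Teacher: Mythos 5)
Your plan is genuinely different from the paper's, and it has a real gap at the crux. The paper's proof is one-shot and elementary: extend $f_1,\dotsc,f_r$ by $n-r$ of the variables $x_{r+1},\dotsc,x_n$ to a transcendence basis of $K(\term{x})/K$ (matroid exchange), write down for each $i\le r$ an annihilating polynomial $G_i$ of $\{x_i,f_1,\dotsc,f_r,x_{r+1},\dotsc,x_n\}$ (these are $n+1$ elements of transcendence degree $n$), observe that the $y_0$-leading coefficient $g_i$ of $G_i$ does not vanish at $(f_1,\dotsc,f_r,x_{r+1},\dotsc,x_n)$, and, since $K$ is infinite, pick constants $c_{r+1},\dotsc,c_n$ that keep all $r$ of these leading coefficients nonzero after substitution. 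The surviving $G_i$'s then certify that each $z_i$ is algebraic over $K(\varphi(f_1),\dotsc,\varphi(f_r))$, which forces the transcendence degree to stay at $r$. There is no appeal to Krull's Hauptidealsatz, no genericity detour, and no iteration: the annihilating polynomials furnish explicit witnesses at every step.

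In your proposal, the iterated-elimination strategy is conceptually sound, but the step that carries all the weight --- that $G$ restricted to the generic hyperplane $H:x_k=L$ remains dominant, whence $\dim\Phi(B)=r$ --- is asserted, not proved. You gesture at $\dim(V\cap H)\ge\dim V-1$ and positive generic fibre dimension, but converting ``a generic hyperplane meets each positive-dimensional fibre'' into ``one hyperplane makes the restricted morphism dominant'' needs a genuine incidence/dimension-count argument (a Bertini-type statement), and your own ``Main obstacle'' paragraph concedes this is exactly where the proof lives. The Hauptidealsatz portion of your argument, by contrast, merely re-derives the relevant direction of Theorem~\ref{thm:FaithfulHomomorphismInjective} in this special case; it is not where the content lies. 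There is also a secondary gap in your specialization step: you cite Perron (Theorem~\ref{thm:PerronsTheorem}) to bound the degree of a potential annihilator, but Perron concerns $n+1$ polynomials in $n$ variables, not $r$ polynomials in $k-1\ge r$ variables; making it apply requires padding with variables and matroid book-keeping that you omit, and the sharper bound, Corollary~\ref{cor:AnnihilatingPolynomialDegreeBound}, cannot be invoked here since its proof in the paper \emph{uses} Lemma~\ref{lem:FaithfulHomomorphismExistence}. The paper avoids both issues because the leading coefficients $g_i$ are explicit polynomial witnesses, so ``infinitely many good constants exist'' is immediate over an infinite field.
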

\begin{proof}[Proof sketch]
We prove this by first principles.
The proof is by identifying $r$ variables from $\{x_1,\ldots,x_n\}$ that we leave {\em free} and the rest
$n-r$ variables we fix to generic elements from $K$. Using annihilating polynomials we could show that 
this map preserves the $\trdeg$. 
\end{proof}

Below we want to make this lemma effective. This will again be accomplished by substituting constants
for all but $r$ of the variables $x_1, \dotsc, x_n$. We define a parametrized homomorphism $\mathrm\Phi$
in three steps. First, we decide which variables we want to keep and map them to $z_1, \dotsc, z_r$.
To the remaining variables we apply a {\em Kronecker substitution} using a new variable $t$, i.e. we
map the $i$-th variable to $t^{D^i}$ (for a large $D$). In the second step, 
the exponents of $t$ will be reduced modulo some number. Finally, a single constant will be substituted for $t$.

Let $I = \{j_1, \dotsc, j_r\} \in \tbinom{[n]}{r}$ be an index set and let
$[n] \setminus I = \{j_{r+1}, \dotsc, j_n\}$ be its complement such that $j_1 < \dotsb < j_r$ and
$j_{r+1} < \dotsb < j_n$. Let $D \ge 2$ and define the $K$-algebra homomorphism
\[
	\mathrm\Phi_{I,D}: K[\term{x}] \rightarrow K[t,\term{z}], \qquad x_{j_i} \mapsto \begin{cases}
		z_i, & \text{for $i = 1, \dotsc, r$}, \\
		t^{D^{i-r}}, & \text{for $i = r+1, \dotsc, n$} .
	\end{cases}
\]
Now let $p \ge 1$. For an integer $a \in \Z$, we denote by $\lfloor a \rfloor_p$ the integer 
$b \in \Z$ satisfying $0 \le b < p$ and $a = b \pmod{p}$. We define the $K$-algebra homomorphism
\[
	\mathrm\Phi_{I,D,p}: K[\term{x}] \rightarrow K[t,\term{z}], \qquad x_{j_i} \mapsto \begin{cases}
		z_i, & \text{for $i = 1, \dotsc, r$}, \\
		t^{\lfloor D^{i-r} \rfloor_p}, & \text{for $i = r+1, \dotsc, n$} .
	\end{cases}
\]
Note that, for $f \in K[\term{x}]$, $\mathrm\Phi_{I,D,p}(f)$ is
a representative of the residue class $\mathrm\Phi_{I,D}(f) \pmod{\langle t^p-1 \rangle_{K[t, \term{z}]}}$.
Finally let $c \in\ol{K}$ and define the $\ol{K}$-algebra homomorphism
\[
	\mathrm\Phi_{I,D,p,c}:\ol{K}[\term{x}] \rightarrow \ol{K}[\term{z}], \qquad f \mapsto \bigl(\mathrm\Phi_{I,D,p}(f)\bigr)(c, \term{z}) .
\]
The following lemma bounds the number of bad choices for the parameters $p$ and $c$. It is proven in
Appendix \ref{app:FaithfulHomomorphismExistence}.

\begin{lemma}[$\mathrm\Phi$ is faithful] \label{lem:ProjectionHomomorphismFaithful}
	Let $f_1, \dotsc, f_m \in K[\term{x}]$ be polynomials of degree at most $\delta$ and 
	$\trdeg$ at most $r$. Let $D > \delta^{r+1}$.
	Then there exist an index set $I \in \tbinom{[n]}{r}$ and a prime $p\le(n+\delta^r)^{8\delta^{r+1}}(\log_2D)^2+1$
	such that any subset of $\ol{K}$ of size $\delta^r r p$ contains $c$ such that
	$\mathrm\Phi_{I,D,p,c}$ is faithful to $\{f_1, \dotsc, f_m\}$.
\end{lemma}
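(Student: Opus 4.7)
My plan is to construct a non-zero witness polynomial $\Psi$ whose non-vanishing certifies faithfulness of a pure constant substitution, and then trace $\Psi$ along the Kronecker trajectory $a_{r+i}=c^{\lfloor D^i\rfloor_p}$ to locate a good pair $(p,c)$. Without loss of generality assume $f_1,\ldots,f_r$ are algebraically independent; by Theorem \ref{thm:FaithfulHomomorphismInjective}, faithfulness of any $K$-algebra homomorphism reduces to algebraic independence of the images of $\{f_1,\ldots,f_r\}$. The first step is an effective version of Lemma \ref{lem:FaithfulHomomorphismExistence}: inspecting its proof (keep $r$ variables free, substitute generic constants for the remaining $n-r$) produces an index set $I\in\binom{[n]}{r}$ together with a non-zero polynomial $\Psi\in K[y_{r+1},\ldots,y_n]$ of degree at most $r\delta^r$ such that whenever $\Psi(\bar a)\neq 0$, the substitution $\sigma_{\bar a}: x_{j_i}\mapsto z_i$ (for $i\le r$), $x_{j_{r+i}}\mapsto a_{r+i}$ (for $i\ge 1$) is faithful to $\{f_1,\ldots,f_m\}$. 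Such a $\Psi$ can be assembled from leading coefficients of the minimal annihilating polynomials of $\{f_1,\ldots,f_r,f_s\}$ (for $s>r$), each of which has degree at most $\delta^r$ by Corollary \ref{cor:AnnihilatingPolynomialDegreeBound}.

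Next, define $\tilde\Psi_p(t):=\Psi\bigl(t^{\lfloor D^1\rfloor_p},\ldots,t^{\lfloor D^{n-r}\rfloor_p}\bigr)\in K[t]$ and observe that $\tilde\Psi_p(c)\neq 0$ is exactly the condition $\Psi(\bar a)\neq 0$ for $\bar a=(c^{\lfloor D^i\rfloor_p})_i$, which by the previous step makes $\mathrm\Phi_{I,D,p,c}=\sigma_{\bar a}$ faithful. So I pick $p$ to ensure $\tilde\Psi_p\not\equiv 0$. Two distinct monomials $\prod_i y_{r+i}^{e_i}$ and $\prod_i y_{r+i}^{e'_i}$ of $\Psi$ collapse to the same power of $t$ in $\tilde\Psi_p$ iff the prime $p$ divides the integer $N:=\sum_i(e_i-e'_i)D^i$. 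Since $\deg\Psi\le r\delta^r<D$ (because $D>\delta^{r+1}$), $N$ is non-zero by uniqueness of base-$D$ expansions, with $|N|\le 2r\delta^r D^{n-r}$, so each monomial pair excludes at most $O(n\log D)$ primes. With the number of monomials of $\Psi$ bounded by $\binom{n-r+r\delta^r}{r\delta^r}\le(n+\delta^r)^{O(\delta^r)}$, the total count of excluded primes is $(n+\delta^r)^{O(\delta^{r+1})}\log D$, well inside the stated bound. By the prime number theorem, some prime $p\le(n+\delta^r)^{8\delta^{r+1}}(\log_2 D)^2+1$ avoids them all, making $\tilde\Psi_p\not\equiv 0$.

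Finally, for this $p$ every exponent satisfies $\lfloor D^i\rfloor_p\le p-1$, so $\deg\tilde\Psi_p\le(\deg\Psi)(p-1)\le r\delta^r(p-1)<\delta^r r p$. Hence $\tilde\Psi_p$ has strictly fewer than $\delta^r rp$ roots in $\bar K$, and any subset of $\bar K$ of size $\delta^r rp$ contains some $c$ with $\tilde\Psi_p(c)\neq 0$; for such a $c$, $\mathrm\Phi_{I,D,p,c}$ is faithful to $\{f_1,\ldots,f_m\}$, as required. The main technical obstacle is the first step: one must extract an explicit polynomial $\Psi$ of degree at most $r\delta^r$ from the proof of Lemma \ref{lem:FaithfulHomomorphismExistence} in arbitrary characteristic (Jacobian-minor constructions being unavailable there), relying on the Perron-type bound of Corollary \ref{cor:AnnihilatingPolynomialDegreeBound}. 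Once $\Psi$ with this degree bound is in hand, the Kronecker-substitution and prime-number-theorem bookkeeping in the remaining steps is routine.
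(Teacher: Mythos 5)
Your high-level strategy---reduce to an effective version of Lemma \ref{lem:FaithfulHomomorphismExistence}, extract a witness polynomial whose nonvanishing certifies faithfulness of a constant substitution, and then push it through the Kronecker map and a prime/constant search---is indeed the paper's approach. However, two specific points in your description do not hold up.

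First, the witness polynomial $\Psi$ is assembled from the wrong annihilating relations. You propose using leading coefficients of annihilating polynomials of $\{f_1,\dotsc,f_r,f_s\}$ for $s>r$. Those relations only certify that each $f_s$ \emph{remains} algebraically dependent on $f_1,\dotsc,f_r$ after the substitution; this is automatic (Corollary \ref{cor:DimNonIncreasing}) and does nothing to show that the images of $f_1,\dotsc,f_r$ stay \emph{independent}, which is what faithfulness requires. In fact if $m=r$ there is no $f_s$ at all and your construction produces no $\Psi$. The proof of Lemma \ref{lem:FaithfulHomomorphismExistence} instead uses, for each $i\in[r]$, an annihilating polynomial $G_i$ of $\{x_{j_i}\}\cup\{f_1,\dotsc,f_r\}\cup\{x_{j_{r+1}},\dotsc,x_{j_n}\}$ with $\deg_{y_0}(G_i)>0$, and the obstruction is that the leading coefficient $g_i$ evaluated on $(f_1,\dotsc,f_r,x_{j_{r+1}},\dotsc,x_{j_n})$ should not vanish after fixing the trailing coordinates. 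That is what guarantees each $z_i$ stays algebraic over $\varphi(f_1),\dotsc,\varphi(f_r)$, forcing the transcendence degree back up to $r$.

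Second, the degree bookkeeping on $\Psi$ is off in a way that matters. The $g_i$ has degree $\le\delta^r$ in the $y$-variables, but the object you actually need to be non-vanishing is $g_i(f_1,\dotsc,f_r,x_{j_{r+1}},\dotsc,x_{j_n})$, which after substituting the degree-$\delta$ polynomials $f_j$ has degree $\le\delta^{r+1}$, not $\delta^r$. Packaging $r$ such obstructions into a single product $\Psi$ therefore gives degree up to $r\delta^{r+1}$, which can exceed $D$ (the hypothesis is only $D>\delta^{r+1}$), so the ``$N\ne 0$ by uniqueness of base-$D$ expansion'' step fails precisely where you invoke $\deg\Psi\le r\delta^r<D$. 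The paper avoids this by keeping the $r$ obstruction polynomials $h_i$ separate, applying the Kronecker/sparsity argument to each (each individually satisfies the degree-$<D$ condition), and unioning the resulting bad-prime and bad-constant sets. A related minor point: the paper dispatches the prime-counting step through Lemma \ref{lem:SparsePIT}, a packaged version of the pairwise monomial-collision argument you sketch.
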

\begin{proof}[Proof sketch]
We identify a maximal $I \subseteq [n]$ such that for the field $L:=K(x_i \,\vert\; i \notin I)$, 
$\trdeg_L\{f_1,\ldots,f_m\}$ $=\trdeg_K\{f_1, \dotsc, f_m\}$. Now $x_i$, for $i\in I$, 
is algebraic over the field $L(f_1,\ldots,f_m)$. This gives us annihilating polynomials whose degrees we could
bound by Corollary \ref{cor:AnnihilatingPolynomialDegreeBound}, and hence their sparsities. By sparse PIT tricks
we get a bound on the `good' $p$ and $c$. 
\end{proof}

In large or zero characteristic, a more efficient version of this lemma can be given (for the same homomorphism
$\mathrm\Phi$).
The reason is that we can work with the Jacobian criterion instead of the degree bound for annihilating
polynomials. However, we omit the statement of this result here, because we can give a more holistic construction
in that case. This will be presented in the following section.

\subsection{A Vandermonde-inspired map (large or zero characteristic)} \label{sec:VandermondeHomomorphism}

To prove Theorem \ref{thm:main3}, we will need a homomorphism that is
faithful to several sets of polynomials simultaneously. The homomorphism $\mathrm\Phi$ constructed in the previous
section does not meet this requirement, because its definition depends on a {\em fixed} subset of the 
variables $x_1, \dotsc, x_n$.
In this section we will devise a construction, that treats the variables $x_1, \dotsc, x_n$ in a uniform manner.
It is inspired by the {\em Vandermonde matrix}, i.e. $((t^{ij}))_{i,j}$.	 

We define a parametrized homomorphism $\mathrm\Psi$ in three steps.
Let $K[\term{z}] = K[z_0, \dotsc, z_r]$, where $1 \le r \le n$.
Let $D_1, D_2 \ge 2$ and let $D = (D_1, D_2)$. Define the $K$-algebra
homomorphism
\[
	\mathrm\Psi_D:K[\term{x}] \rightarrow K[t, \term{z}], \qquad x_i \mapsto t^{D_1^i} + t^{D_2^i}z_0 + \sum_{j=1}^r t^{i(n+1)^j} z_j ,
\]
where $i = 1, \dotsc, n$. This map (linear in the $z$'s) should be thought of as a variable reduction from 
$n$ to $r+1$. The coefficients of $z_1,\ldots,z_r$ bear resemblance to a row of a Vandermonde matrix, while
that of $z_0$ (and the constant coefficient) resembles Kronecker substitution. This definition is carefully tuned
so that $\mathrm\Psi$ finally preserves both the $\trdeg$ (proven here) and $\gcd$ of polynomials (proven in
Sect. \ref{sec:PreservingSimplePart}). 

Next let $p \ge 1$ and define the $K$-algebra homomorphism
\[
	\mathrm\Psi_{D,p}:K[\term{x}] \rightarrow K[t, \term{z}], \qquad 
	x_i \mapsto t^{\lfloor D_1^i \rfloor_p} + t^{\lfloor D_2^i \rfloor_p}z_0 + \sum_{j=1}^r t^{\lfloor i(n+1)^j \rfloor_p} z_j ,
\]
where $i = 1, \dotsc, n$. Note that, for $f \in K[\term{x}]$, $\mathrm\Psi_{D,p}(f)$ is
a representative of the residue class $\mathrm\Psi_D(f) \pmod{\langle t^p-1 \rangle_{K[t, \term{z}]}}$.
Finally let $c \in \ol{K}$ and define the $\ol{K}$-algebra homomorphism
\[
	\mathrm\Psi_{D,p,c}:\ol{K}[\term{x}] \rightarrow \ol{K}[\term{z}], \qquad f \mapsto \bigl(\mathrm\Psi_{D,p}(f)\bigr)(c, \term{z}) .
\]
The following lemma bounds the number of bad choices for the parameters $p$ and $c$. 
The proof, which is given in Appendix \ref{app:VandermondeHomomorphism}, uses the Jacobian criterion, 
therefore the lemma has a restriction on $\ch(K)$.

\begin{lemma}[$\mathrm\Psi$ is faithful] \label{lem:VandermondeHomomorphismFaithful}
	Let $f_1, \dotsc, f_m \in K[\term{x}]$ be polynomials of sparsity at most $\ell$, degree at most $\delta$
	and $\trdeg$ at most $r$. Assume that $\ch(K)=0$ or $\ch(K) > \delta^r$. Let $D = (D_1, D_2)$ such that
	$D_1 \ge \max\{\delta r+1, (n+1)^{r+1}\}$ and $D_2 \ge 2$. 	
	Then there exists a prime $p\le (2nr\ell)^{2(r+1)} (\log_2 D_1)^2+1$ such that any subset of $\ol{K}$ of size 
  $\delta r p$ contains $c$ such that
	$\mathrm\Psi_{D,p,c}$ is faithful to $\{f_1, \dotsc, f_m\}$.
\end{lemma}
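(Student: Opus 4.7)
The strategy is to recast faithfulness as a Jacobian rank condition via Theorem \ref{thm:JacobianCriterion}, factor the relevant Jacobian using the chain rule, and finish by a sparse-PIT argument on a Cauchy-Binet expansion of one of its $r' \times r'$ minors.

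Let $r' := \trdeg\{f_1, \ldots, f_m\} \le r$. Since $\mathrm\Psi_{D,p,c}$ is a $K$-algebra homomorphism, $\trdeg\{\mathrm\Psi(f_i)\} \le r'$ is automatic. The characteristic hypothesis activates Theorem \ref{thm:JacobianCriterion}, so faithfulness reduces to showing $\rk_{\ol K(\term z)}\, J_{\term z}(\mathrm\Psi(f_1), \ldots, \mathrm\Psi(f_m)) \ge r'$. Setting $h_l(\term z) := \mathrm\Psi_{D,p,c}(x_l)$, each $h_l$ is affine in $\term z$, so the chain rule gives
\[J_{\term z}(\mathrm\Psi(f_1), \ldots, \mathrm\Psi(f_m)) = J_{\term x}(f_1, \ldots, f_m)\big|_{\term x \mapsto h(\term z)} \cdot M,\]
where $M \in \ol K^{n \times (r+1)}$ is the constant matrix with columns $(c^{\lfloor D_2^l \rfloor_p})_{l \in [n]}$ and $(c^{\lfloor l(n+1)^j \rfloor_p})_{l \in [n]}$ for $j \in [r]$.

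By Theorem \ref{thm:JacobianCriterion} applied to $\{f_i\}$, some $r' \times r'$ minor $\Delta_{I_0, J_0}$ of $J_{\term x}(f)$ is nonzero in $K[\term x]$, with degree $\le r\delta$ and sparsity $\le r!\,\ell^r$. Restricting the factored product to rows $I_0$ and the $r'$ columns $[r'] \subseteq \{0, \ldots, r\}$ of $M$, Cauchy-Binet yields the $r' \times r'$ minor
\[P(\term z) := \sum_{S \in \binom{[n]}{r'}} \Delta_{I_0, S}(h(\term z)) \cdot \det(M_{S, [r']}) \in \ol K[\term z],\]
and it suffices to exhibit $p, c$ with $P \ne 0$. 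I would first lift to $\mathrm\Psi_D$ (before mod-$p$ reduction and the specialization $t = c$): the analogous polynomial $\tilde P(t, \term z) \in K[t, \term z]$ is nonzero because the Vandermonde-type minors $\det(\tilde M_{S, [r']})$, whose entries are $t^{l(n+1)^j}$, have pairwise distinct leading $t$-monomials as $S$ varies (the exponents $\sum_k l_k (n+1)^{\sigma(k)}$ are valid base-$(n+1)$ representations since $l \le n < n+1$). Combined with the fact that the Kronecker component $t^{D_1^l}$ of $h$ preserves nonzeroness of $\Delta_{I_0, J_0}$ (via $D_1 > r\delta$), the dominant $S = J_0$ contribution cannot be cancelled, so $\tilde P \ne 0$.

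Given $\tilde P \ne 0$ in $K[t, \term z]$, a standard Kronecker plus mod-$p$ argument completes the proof. The $t$-sparsity of $\tilde P$ is bounded by $O((nr\ell)^{r+1})$ (combining the sparsities of $\Delta_{I_0, S}(h_D)$ and $\det(\tilde M_{S, [r']})$ summed over $S$), so a prime $p \le (2nr\ell)^{2(r+1)}(\log_2 D_1)^2 + 1$ suffices to prevent any two distinct $t$-exponents of $\tilde P$ from colliding modulo $p$; thus $\tilde P$ remains nonzero after reduction. Finally, $P(\term z) = \tilde P(c, \term z) \in \ol K[\term z]$ viewed as a polynomial in $c$ with $\term z$-polynomial coefficients has $c$-degree at most $\delta r p$, so at most $\delta r p$ values of $c$ make it vanish, and any subset of $\ol K$ of size $\delta r p$ contains a good $c$. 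The main obstacle is the monomial-separation argument: rigorously verifying that no cross-$S$ cancellation occurs in the Cauchy-Binet sum for $\tilde P$ (where both $D_1 \ge (n+1)^{r+1}$ and $D_1 \ge \delta r + 1$ play essential roles) and obtaining the tight sparsity bound on $\tilde P$ needed to match the claimed $p$-bound.
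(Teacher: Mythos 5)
Your plan is fundamentally the paper's plan: reduce faithfulness to a Jacobian rank condition (Theorem \ref{thm:JacobianCriterion}), factor through the chain rule, expand an $r'\times r'$ minor by Cauchy--Binet, argue nonzeroness by monomial separation (Kronecker for the $\Delta$-factor, Vandermonde for the $M$-factor), and finish with sparse PIT on primes followed by a $t$-degree count for $c$. The places where you have left work undone, however, are precisely the two spots where the paper's proof is technically delicate, and your sketch of them as written would not go through.

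First, you never specialize $\term z \mapsto 0$. The paper defines $f := f'(t,0,\dotsc,0)\in K[t]$ and carries out the entire Cauchy--Binet, degree-separation, and sparsity analysis on this univariate polynomial, only afterwards inferring $f'\ne 0\pmod{t^p-1}$ from $f\ne 0\pmod{t^p-1}$ (which is immediate since $f$ is a specialization of $f'$). This matters: before setting $\term z = 0$, each monomial of $\Delta_{I_0,S}$ under $\mathrm\Psi_D$ expands into up to $(r+2)^{s\delta}$ distinct $t$-powers, so the $t$-sparsity of your $\tilde P$ over $K[\term z]$ is of order $(nr\ell)^r(r+2)^{r\delta}$, far larger than the $O((nr\ell)^{r+1})$ you assert and incompatible with the claimed $p$-bound. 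After $\term z = 0$, each $h_l$ collapses to the pure Kronecker monomial $t^{D_1^l}$ and the sparsity drops to $\binom{n}{s}(s!\ell^s)s! \le (nr\ell)^r$, which is what makes the $p$-bound work.

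Second, your ``no cancellation'' argument as stated is not correct: $J_0$ (the index set where $\Delta_{I_0,J_0}\ne 0$) need not be the dominant index in the Cauchy--Binet sum, so ``the dominant $S=J_0$ contribution cannot be cancelled'' conflates two roles. The paper's argument uses $J_0$ only to show the index set $\mathcal I := \{I : g_I \ne 0\}$ is nonempty, and then dominance is decided in two tiers: the Kronecker substitution at $\term z = 0$ forces $\deg(g_I) - \deg(g_{I'}) \ge D_1$ whenever $\deg(g_I)\ne\deg(g_{I'})$, while every Vandermonde factor satisfies $\deg(h_I) < D_1$; therefore the indices achieving $\max_I \deg(g_I)$ strictly dominate, and among \emph{those} the unique $I$ maximizing $\deg(h_I)$ (uniqueness because the Vandermonde degrees are pairwise distinct base-$(n+1)$ representations) gives the single top-degree term. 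Your intuition about both $D_1 \ge \delta r+1$ and $D_1 \ge (n+1)^{r+1}$ playing roles is right, but the two-tier degree comparison is the missing ingredient, and you yourself flag it as the ``main obstacle.'' Filling in exactly that argument, plus the $\term z = 0$ specialization trick, recovers the paper's proof.
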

\begin{proof}[Proof sketch]
We study the action of $\mathrm\Psi_D$ on the Jacobian determinant. Because of the chain rule of partial
derivatives, this leads us to a product of two
determinants, which we expand using the Cauchy-Binet formula and estimate its sparsity. By sparse PIT tricks
we get a bound on the `good' $p$ and $c$.   
\end{proof}

By trying larger $p$ and $c$, we can find a $\mathrm\Psi$ that is faithful to several subsets
of polynomials simultaneously. This is an advantage of $\mathrm\Psi$ over $\mathrm\Phi$, in addition to
being more efficiently constructible.

\section{Circuits with sparse inputs of low transcendence degree (proving Theorem \ref{thm:main1})} 
\label{sec:CircuitsWithSparseSubcircuits}

We can now proceed with the first PIT application of faithful homomorphisms.
We consider arithmetic circuits of the form $C(f_1, \dotsc, f_m)$, where $C$ is
a circuit computing a polynomial in $K[\term{y}] = K[y_1, \dotsc, y_m]$ and 
$f_1, \dotsc, f_m$ are subcircuits computing polynomials in $K[\term{x}]$.
Thus, $C(f_1, \dotsc, f_m)$ computes a polynomial in the subalgebra
$K[f_1, \dotsc, f_m]$.

Let $C(f_1, \dotsc, f_m)$ be of maximal degree $d$, and let $f_1, \dotsc, f_m$
be of maximal degree $\delta$, maximal sparsity $\ell$ and maximal transcendence
degree $r$. First, we use a faithful homomorphism to transform $C(f_1, \dotsc, f_m)$
into an $r$-variate circuit.
Then, a hitting set for $r$-variate degree-$d$ polynomials is used, given by the following
version of the Schwartz-Zippel lemma.

\begin{lemma}[Schwartz-Zippel] \label{lem:CombinatorialNullstellensatz}
	Let $H \subset \ol{K}$ be a subset of size $d + 1$.
	Then $\mathcal{H} = H^r$ is a hitting set for 
	$\{ f \in K[z_1, \dotsc, z_r] \,\vert\; \deg(f) \le d \}$.
\end{lemma}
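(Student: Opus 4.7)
The plan is to proceed by induction on the number of variables $r$, which is the classical route to the Schwartz-Zippel lemma. The goal is to show that for every nonzero $f \in K[z_1, \dotsc, z_r]$ with $\deg(f) \le d$, there exists a point $(a_1, \dotsc, a_r) \in H^r$ at which $f$ does not vanish.

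For the base case $r = 1$, I would invoke the fact that a nonzero univariate polynomial of degree at most $d$ over $\ol{K}$ has at most $d$ roots in $\ol{K}$; since $\abs{H} = d+1$, some $a_1 \in H$ must satisfy $f(a_1) \ne 0$.

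For the inductive step, the idea is to view $f$ as a polynomial in $z_1$ with coefficients in $K[z_2, \dotsc, z_r]$ and strip off one variable at a time. Write
\[
    f(z_1, \dotsc, z_r) = \sum_{i=0}^{d} g_i(z_2, \dotsc, z_r) \, z_1^i,
\]
and let $i^*$ be the largest index with $g_{i^*} \ne 0$. Since $\deg(g_{i^*}) \le d - i^* \le d$ and $g_{i^*}$ lies in a polynomial ring in $r-1$ variables, the induction hypothesis produces a point $(a_2, \dotsc, a_r) \in H^{r-1}$ with $g_{i^*}(a_2, \dotsc, a_r) \ne 0$. Substituting these values yields the univariate polynomial $\tilde{f}(z_1) := f(z_1, a_2, \dotsc, a_r)$, whose coefficient of $z_1^{i^*}$ is nonzero; hence $\tilde{f}$ is a nonzero univariate of degree at most $d$, and the base case provides an $a_1 \in H$ with $\tilde{f}(a_1) \ne 0$. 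The tuple $(a_1, \dotsc, a_r) \in H^r$ is then a hitting point for $f$.

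There is no real obstacle here beyond bookkeeping; the only point one has to be slightly careful about is that although $f$ and the $g_i$ have coefficients in $K$, evaluations occur in $\ol{K}$, so one applies the induction to $g_{i^*}$ regarded as an element of $\ol{K}[z_2, \dotsc, z_r]$, which is still nonzero and of the same total degree.
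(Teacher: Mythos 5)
Your proof is correct, and it is the standard induction-on-variables argument. The paper does not give a proof of its own here: it simply cites \cite[Lemma 2.1]{bib:Alo99} (Alon's Combinatorial Nullstellensatz), and the argument proven there is exactly the inductive one you wrote out, so the approaches coincide.
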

\begin{proof}
	Cf. \cite[Lemma 2.1]{bib:Alo99}. 
\end{proof}

\subsection{A hitting set (large or zero characteristic)} \label{sec:CircuitsWithSparseSubcircuitsCharZero}

We use the map $\mathrm\Psi$ from Sect. \ref{sec:VandermondeHomomorphism}.
This hitting set construction is efficient for $r$ constant and $\ell$, $d$ polynomial in the input size. 

Let $n, d,r,\delta,\ell \ge 1$ and let $K[\term{z}] = K[z_0, z_1, \dotsc, z_r]$. 
We introduce the following parameters.
\begin{enumerate}
	\item Define $D = (D_1, D_2)$ by $D_1 := (2\delta n)^{r+1}$ and $D_2 := 2$.
	\item Define $p_{\max} := (2nr\ell)^{2(r+1)} \lceil\log_2 D_1\rceil^2+1$.
	\item Pick arbitrary $H_1, H_2 \subset \ol{K}$ of sizes $\delta r p_{\max}$ resp. $d + 1$.
\end{enumerate}
Denote $\mathrm\Psi_{D,p,c}^{(i)} := \mathrm\Psi_{D,p,c}(x_i) \in \ol{K}[\term{z}]$ for $i = 1, \dotsc, n$
and define the subset
\[
	\mathcal{H}_{d,r,\delta,\ell} = \Bigl\{ \bigl(\mathrm\Psi_{D,p,c}^{(1)}(\boldsymbol{a}), \dotsc, \mathrm\Psi_{D,p,c}^{(n)}(\boldsymbol{a})\bigr) 
	\,\bigl\vert\; \text{$p \in [p_{\max}]$, $c \in H_1$, $\boldsymbol{a} \in H_2^{r+1}$} \Bigr\} \subset \ol{K}^n .
\]
The following theorem shows that, over a large or zero characteristic, this is a 
hitting set for the class of circuits under consideration.
A proof is given in Appendix \ref{app:CircuitsWithSparseSubcircuitsCharZero}.

\begin{theorem} \label{thm:CircuitsWithSparseSubcircuitsHittingSetCharZero}
	Assume that $\ch(K) = 0$ or $\ch(K) > \delta^r$. Then $\mathcal{H}_{d,r,\delta,\ell}$ is
	a hitting set for the class of degree-$d$ circuits with inputs being $\ell$-sparse, degree-$\delta$ subcircuits
	of $\trdeg$ at most $r$. It can be constructed in $\poly(dr\delta\ell n)^r$ time.
\end{theorem}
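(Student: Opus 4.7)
The plan is to chain Lemma \ref{lem:VandermondeHomomorphismFaithful} (variable reduction via $\mathrm\Psi$) with Schwartz--Zippel applied in $r+1$ variables, and then count. Fix a candidate nonzero circuit $C(f_1,\ldots,f_m)$ of total degree $d$, with $f_i$'s being $\ell$-sparse, of degree at most $\delta$ and of $\trdeg$ at most $r$.

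First I would verify that the chosen parameters satisfy the hypotheses of Lemma \ref{lem:VandermondeHomomorphismFaithful}: since $D_1=(2\delta n)^{r+1}\ge\max\{\delta r+1,(n+1)^{r+1}\}$ and $D_2=2$, the lemma applies and produces a prime $p\le p_{\max}$ and, from any subset of $\ol{K}$ of size $\delta r p$, an element $c$ such that $\mathrm\Psi_{D,p,c}$ is faithful to $\{f_1,\ldots,f_m\}$. Because $|H_1|=\delta r p_{\max}\ge\delta r p$, such a good $c$ is already in $H_1$. Corollary \ref{cor:FaithfulHomomorphismOnCkt} then tells me that $C(f_1,\ldots,f_m)\neq 0$ if and only if
\[
g(\term{z}) \;:=\; C\bigl(\mathrm\Psi_{D,p,c}(f_1),\ldots,\mathrm\Psi_{D,p,c}(f_m)\bigr)\;\in\;\ol{K}[z_0,z_1,\ldots,z_r]
\]
is nonzero, converting the original $n$-variate question into an $(r{+}1)$-variate one.

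Next, since $\mathrm\Psi_{D,p,c}$ sends each $x_i$ to a polynomial that is \emph{linear} in $z_0,\ldots,z_r$ (with coefficients in $\ol{K}$), the total $\term{z}$-degree of $g$ is at most $d$. Lemma \ref{lem:CombinatorialNullstellensatz}, applied to $g$ in $r{+}1$ variables with the set $H_2$ of size $d+1$, therefore delivers some $\boldsymbol{a}\in H_2^{r+1}$ with $g(\boldsymbol{a})\neq 0$. But $g(\boldsymbol{a}) = C(f_1,\ldots,f_m)$ evaluated at $\bigl(\mathrm\Psi_{D,p,c}^{(1)}(\boldsymbol{a}),\ldots,\mathrm\Psi_{D,p,c}^{(n)}(\boldsymbol{a})\bigr)$, which is by definition a point of $\mathcal{H}_{d,r,\delta,\ell}$. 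This proves the hitting-set property. For the cost, I would bound $|\mathcal{H}_{d,r,\delta,\ell}|\le p_{\max}\cdot|H_1|\cdot|H_2|^{r+1}=p_{\max}^2\cdot\delta r\cdot(d+1)^{r+1}$, and substitute $\log_2 D_1=O(r\log(\delta n))$ into $p_{\max}=(2nr\ell)^{2(r+1)}\lceil\log_2 D_1\rceil^2+1$ to obtain $\poly(dr\delta\ell n)^r$; each coordinate of each point is a sum of $r+2$ monomials in $c$ and in the $a_j$'s, writable in $\poly(n,r,\log D_1,\log p)$ bit-operations.

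No step here is genuinely hard: all of the real content is packed into Lemma \ref{lem:VandermondeHomomorphismFaithful} (faithfulness of $\mathrm\Psi$) and Corollary \ref{cor:FaithfulHomomorphismOnCkt} (faithful maps commute with circuit evaluation for zero-testing). The only things one has to be a little careful about are the parameter inequalities that license the application of Lemma \ref{lem:VandermondeHomomorphismFaithful}, and noticing that a single $\mathrm\Psi_{D,p,c}$ linear in $z_0,\ldots,z_r$ preserves the $\term{z}$-degree bound $d$ needed by Schwartz--Zippel. If an obstacle surfaces at all, I would expect it to be arithmetic bookkeeping around $p_{\max}$ when collapsing the $\poly$-factors into the claimed $\poly(dr\delta\ell n)^r$ form.
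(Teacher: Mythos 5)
Your proof is correct and takes essentially the same route as the paper: it applies Lemma \ref{lem:VandermondeHomomorphismFaithful} (after verifying the parameter hypotheses), passes through faithfulness to reduce to an $(r{+}1)$-variate nonzero polynomial of degree at most $d$, and finishes with Lemma \ref{lem:CombinatorialNullstellensatz}. The only cosmetic difference is that you invoke Corollary \ref{cor:FaithfulHomomorphismOnCkt} rather than Theorem \ref{thm:FaithfulHomomorphismInjective} directly, and you spell out the degree-preservation observation (linearity of $\mathrm\Psi_{D,p,c}$ in the $z$'s) and the size count a bit more explicitly; these are faithful elaborations, not deviations.
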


\subsection{A hitting set (arbitrary characteristic)} \label{sec:CircuitsWithSparseSubcircuitsArbitraryChar}

We use the map $\mathrm\Phi$ from Sect. \ref{sec:FaithfulHomomorphismExistence}.
This hitting set construction is
efficient for $\delta$, $r$ constants and $d$ polynomial in the input size.

Let $n,d,r,\delta \ge 1$ and let $K[\term{z}] = K[z_1, \dotsc, z_r]$.
We introduce the following parameters.
\begin{enumerate}
	\item Define $D := \delta^{r+1}+1$.
	\item Define $p_{\max} := (n+\delta^r)^{8\delta^{r+1}}\lceil\log_2D\rceil^2+1$.
	\item Pick arbitrary $H_1, H_2 \subset \ol{K}$ of sizes $\delta^r r p_{\max}$ resp. $d + 1$.
\end{enumerate}
Denote $\mathrm\Phi_{I,D,p,c}^{(i)} := \mathrm\Phi_{I,D,p,c}(x_i) \in \ol{K}[\term{z}]$
for $i = 1, \dotsc, n$ and define the subset
\[
	\mathcal{H}_{d,r,\delta} = \Bigl\{ \bigl(\mathrm\Phi_{I,D,p,c}^{(1)}(\boldsymbol{a}), \dotsc, \mathrm\Phi_{I,D,p,c}^{(n)}(\boldsymbol{a})\bigr) 
	\,\bigl\vert\; \text{$I \in \tbinom{[n]}{r}$, $p \in [p_{\max}]$, $c \in H_1$, $\boldsymbol{a} \in H_2^r$} \Bigr\} \subset \ol{K}^n .
\]
The following theorem shows that this is a hitting set for the class of circuits under consideration.
A proof is given in Appendix \ref{app:CircuitsWithSparseSubcircuitsArbitraryChar}.

\begin{theorem} \label{thm:CircuitsWithSparseSubcircuitsHittingSetArbitraryChar}
	The set $\mathcal{H}_{d,r,\delta}$ is a hitting set for the class of degree-$d$ circuits with inputs being
	degree-$\delta$ subcircuits of transcendence degree at most $r$. 
	It can be constructed in $\poly(dr\delta n)^{r\delta^{r+1}}$ time.
\end{theorem}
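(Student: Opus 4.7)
The plan is to run the same three-step argument as for Theorem \ref{thm:CircuitsWithSparseSubcircuitsHittingSetCharZero}, but with the Kronecker-inspired map $\mathrm\Phi$ of Section \ref{sec:FaithfulHomomorphismExistence} in place of the Vandermonde-inspired map $\mathrm\Psi$. The faithful-homomorphism workhorse here is Lemma \ref{lem:ProjectionHomomorphismFaithful}, which rests on Perron's bound (Corollary \ref{cor:AnnihilatingPolynomialDegreeBound}) and therefore needs no hypothesis on $\ch(K)$. The construction of $\mathcal{H}_{d,r,\delta}$ has been pre-set so that the parameters $D$, $p_{\max}$ and the sizes of $H_1, H_2$ exactly match what that lemma and Schwartz--Zippel demand; most of the proof is a verification of this fit.

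Concretely, fix a nonzero $C' = C(f_1, \dotsc, f_m)$ of degree at most $d$, whose inputs $f_i$ have degree at most $\delta$ and trdeg at most $r$. The first step is to invoke Lemma \ref{lem:ProjectionHomomorphismFaithful} with $D := \delta^{r+1}+1 > \delta^{r+1}$: it delivers some $I \in \binom{[n]}{r}$ and a prime $p \le p_{\max}$ such that fewer than $\delta^r r p \le \delta^r r p_{\max} = \abs{H_1}$ elements of $\ol{K}$ are bad. Hence some $c \in H_1$ makes $\mathrm\Phi := \mathrm\Phi_{I,D,p,c}$ faithful to $\{f_1, \dotsc, f_m\}$. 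The second step applies Corollary \ref{cor:FaithfulHomomorphismOnCkt}: $C' \ne 0$ iff $C(\mathrm\Phi(f_1), \dotsc, \mathrm\Phi(f_m)) \ne 0$, and the latter lies in $\ol{K}[z_1, \dotsc, z_r]$ with degree at most $d$, since $\mathrm\Phi$ sends each $x_j$ either to some $z_i$ or to a constant. The third step applies Schwartz--Zippel (Lemma \ref{lem:CombinatorialNullstellensatz}) with $H_2$ of size $d+1$ to find some $\boldsymbol{a} \in H_2^r$ at which this image polynomial does not vanish; unwinding the definitions, the corresponding point $(\mathrm\Phi^{(1)}_{I,D,p,c}(\boldsymbol{a}), \dotsc, \mathrm\Phi^{(n)}_{I,D,p,c}(\boldsymbol{a})) \in \mathcal{H}_{d,r,\delta}$ hits $C'$.

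For the running time, a quick count gives $\abs{\mathcal{H}_{d,r,\delta}} \le \binom{n}{r} \cdot p_{\max} \cdot \abs{H_1} \cdot \abs{H_2}^r \le n^r \cdot p_{\max}^2 \cdot \delta^r r \cdot (d+1)^r$, and each of its points can be written down in $\poly(d r \delta n)$ arithmetic operations once $(I, p, c)$ is fixed. Since $p_{\max} = (n+\delta^r)^{8\delta^{r+1}} \lceil \log_2 D \rceil^2 + 1 \le \poly(n, \delta, r)^{O(r \delta^{r+1})}$, the total construction time is $\poly(d r \delta n)^{r \delta^{r+1}}$ as claimed. The conceptual difficulty of the argument is already packaged into Lemma \ref{lem:ProjectionHomomorphismFaithful}; the only thing to watch here is that the three parameter choices in the construction really do satisfy simultaneously the numerical hypotheses of Lemma \ref{lem:ProjectionHomomorphismFaithful} and of Schwartz--Zippel, a matter of bookkeeping.
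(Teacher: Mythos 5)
Your proposal is correct and follows the same route as the paper: invoke Lemma~\ref{lem:ProjectionHomomorphismFaithful} to locate $(I, p, c)$ within the prescribed parameter ranges making $\mathrm\Phi_{I,D,p,c}$ faithful to $\{f_1,\dotsc,f_m\}$, then use faithfulness (via Theorem~\ref{thm:FaithfulHomomorphismInjective}, or equivalently its Corollary~\ref{cor:FaithfulHomomorphismOnCkt} as you cite) to preserve nonzeroness under $\mathrm\Phi$, and finish with Schwartz--Zippel on the resulting $r$-variate, degree-$\le d$ polynomial. The only cosmetic difference is that the paper cites Theorem~\ref{thm:FaithfulHomomorphismInjective} directly where you cite its corollary, and you spell out the running-time accounting that the paper leaves as ``obvious from the construction''; otherwise the arguments coincide.
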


\section{Depth-4 circuits with bounded top and bottom fanin} \label{sec:Depth4Circuits}

The second PIT application of faithful homomorphisms is for
$\spsp_{\delta}(k,s,n)$ circuits. Our hitting set construction is efficient when
the top fanin $k$ and the bottom fanin $\delta$ are both bounded.
Except for top fanin $2$, our hitting set will be {\em conditional} in the
sense that its efficiency depends on a good rank upper bound for depth-$4$ identities.

\subsection{Gcd, simple parts and the rank bounds} \label{sec:RankNotion}

Let $C = \sum_{i=1}^k\prod_{j=1}^s f_{i,j}$ be a  $\spsp_{\delta}(k,s,n)$ circuit, as defined in Sect. \ref{sec:mainResults}. 
Note that the parameters bound the circuit degree, $\deg(C) \le \delta s$. We define an $\Sp(\cdot)$ operator as:
\[
	\Sp(C) := \bigl\{ f_{i,j} \,\vert\; \text{$i \in [k]$ and $j \in [s]$}  \bigr\} \subset K[\term{x}] .
\]
It gives the set of {\em sparse polynomials} of $C$ (wlog we assume them all to be nonzero). The following definitions are 
natural generalizations
of the corresponding concepts for depth-$3$ circuits. Recall $T_i := \prod_j f_{i,j}$, for $i\in[k]$, are the multiplication terms of $C$.
The {\em gcd part} of $C$ is defined as $\gcd(C) := \gcd(T_1, \dotsc, T_k)$ (we fix a unique representative
among the associated $\gcd$s). 
The {\em simple part} of $C$ is defined as $\simple(C) := C/\gcd(C) \in \spsp_{\delta}(k,s,n)$. 
For a subset $I \subseteq [k]$ we denote $C_I := \sum_{i \in I} T_i$. 

Recall that if $C$ is simple then $\gcd(C) = 1$ and if it is minimal then $C_I \neq 0$ for all non-empty $I \subsetneq [k]$. Also, 
recall that $\rk(C)$ is $\trdeg_K\Sp(C)$, and that $R_{\delta}(k, s)$ strictly upper bounds the rank of any minimal and simple 
$\spsp_{\delta}(k,s,n)$ identity. Clearly, $R_{\delta}(k, s)$  is at most $|\Sp(C)|\le ks$ (note: $\Sp(C)$ cannot all be independent in
an identity). On the other hand, we could prove a lower bound on  $R_{\delta}(k, s)$ by constructing identities.  

	From the simple and minimal $\sps$ identities constructed in \cite{bib:SS10}, we obtain the lower bound
	$R_1(k,s) = \mathrm\Omega(k)$ if $\ch(K) = 0$, and $R_1(k,s) = \mathrm\Omega(k \log_p s)$
	if $\ch(K) = p > 0$. These identities can be lifted to $\spsp_{\delta}(k,s,n)$ identities
	by replacing each variable $x_i$ by a product $x_{i,1} \dotsb x_{i,\delta}$ of new variables.
	These examples demonstrate: $R_{\delta}(k,s) = \mathrm\Omega(\delta k)$ if $\ch(K) = 0$, and 
	$R_{\delta}(k,s) = \mathrm\Omega(\delta k \log_p s)$ if $\ch(K) = p > 0$. 
This leads us to the following natural conjecture.

\begin{conj} \label{conj:RankBound}
	We conjecture 
	\[ 
		R_{\delta}(k, s) = \begin{cases}
			\poly(\delta k),        & \text{if $\ch(K) = 0$}, \\
			\poly(\delta k \log s), & \text{otherwise}.
		\end{cases}
	\]
\end{conj}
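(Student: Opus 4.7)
The conjecture is a natural extension of known and conjectured depth-$3$ rank bounds to depth-$4$, so my plan is to attack it by lifting depth-$3$ techniques through the sparse multiplicative layer. Concretely, for a minimal and simple $\spsp_{\delta}(k,s,n)$ identity $C=\sum_{i=1}^k T_i$ with $T_i=\prod_{j=1}^s f_{i,j}$, I would try to prove by induction on $k$ that $\rk(C)=\trdeg_K\Sp(C)$ is bounded by $\poly(\delta k)$ in characteristic zero (respectively $\poly(\delta k\log s)$ in positive characteristic).

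For the base case $k=2$, we have $T_1=-T_2$; one should be able to read off a bound of the shape $R_{\delta}(2,s)=O(\delta^2)$ directly from the ideas underlying Theorem \ref{thm:main2}, since minimality rules out $T_1=T_2=0$, simplicity rules out any shared factor, and the two factorizations must then match up modulo units, which is a strong algebraic constraint on $\Sp(C)$. For the inductive step on $k$, the plan is: (i) invoke the faithful homomorphism $\mathrm\Psi$ of Section \ref{sec:VandermondeHomomorphism} to reduce the ambient ring from $K[\term{x}]$ to $K[\term{z}]$ with only $r+1$ variables, so that by Corollary \ref{cor:FaithfulHomomorphismOnCkt} we still have a zero identity, now in few variables; (ii) combine Krull's principal ideal theorem (Theorem \ref{thm:PrincipalIdealTheorem}) with Theorem \ref{thm:DimEqualsTrdeg} to show that if $r$ is large then there is an algebraic relation inside some large subfamily of $\Sp(C)$ witnessed by an element $f$ whose vanishing forces either a proper sub-identity $C_I=0$ (contradicting minimality) or a common factor between distinct $T_i$'s (contradicting simplicity); (iii) apply the inductive hypothesis to the resulting smaller or simpler identity to close the loop.

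The main obstacle, I expect, lies in the interplay between \emph{simplicity} and \emph{faithfulness}. A faithful $\mathrm\Psi$ preserves algebraic dependencies among $\Sp(C)$, but a priori it can introduce new common factors among the multiplication terms $T_i$ and so destroy simplicity; the remedy is a $\mathrm\Psi$ that additionally preserves the gcd of the $T_i$'s, precisely the technical strengthening advertised in Section \ref{sec:PreservingSimplePart}. A second, more fundamental obstacle is that already for $\delta=1$ the analogous conjecture is open beyond the $\tilde O(k^2\log s)$ regime achieved by \cite{bib:SS10,SS10focs,SS11stoc}; any proof of Conjecture \ref{conj:RankBound} would presumably sharpen those depth-$3$ bounds simultaneously. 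Realistically, the first concrete milestone would therefore be to \emph{reduce} the depth-$4$ rank question to an instance of the depth-$3$ rank question, via a Kronecker-type contraction of each sparse factor $f_{i,j}$ to a polynomial in a few fresh variables, importing the depth-$3$ bound as a black box rather than reproving it; the challenge is to perform this contraction uniformly across all $i,j$ while retaining simplicity, minimality, and the transcendence degree of $\Sp(C)$.
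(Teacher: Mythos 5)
This statement is explicitly labeled a \emph{conjecture} in the paper; the authors prove only the lower bound $R_{\delta}(k,s)=\mathrm\Omega(\delta k)$ (resp.\ $\mathrm\Omega(\delta k\log_p s)$ in characteristic $p$) via lifted depth-$3$ identities and record the trivial upper bound $R_{\delta}(k,s)\le ks$, leaving the polynomial upper bound open. There is therefore no proof in the paper to compare against, and you quite rightly present a research plan rather than a proof. Within that plan, though, a few points deserve correction. Your proposed base case $R_{\delta}(2,s)=O(\delta^2)$ is already far from tight: a minimal simple zero $\spsp_{\delta}(2,s,n)$ circuit has $T_1=-T_2$ with $\gcd(T_1,T_2)=1$, which forces both $T_i$ to be units, so $\rk(C)=0$ and $R_{\delta}(2,s)=1$, exactly as the paper notes before Corollary~\ref{cor:Depth4CircuitsHittingSetTopFanIn2}. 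Your remark that the $\delta=1$ case is ``open beyond the $\tilde O(k^2\log s)$ regime'' is also misleading as applied to this conjecture: the conjecture only asks for $\poly(k\log s)$, and $O(k^2\log s)$ (resp.\ $O(k^2)$ over $\R$) is already known from \cite{SS10focs}, so the $\delta=1$ instance of the conjecture is settled and the genuine open territory is $\delta\ge2$, exactly where the paper's tools (faithful maps, simple-part preservation) were designed for PIT rather than for proving structural rank bounds.

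The real gap in your plan is step~(ii) of the inductive argument. Krull's principal ideal theorem tells you that quotienting an affine domain by a non-unit drops dimension by exactly one; it does not, by itself, produce a proper sub-identity $C_I=0$ or a common factor between distinct $T_i$'s. What is missing is the combinatorial/structural link from ``some element $f$ of $K[\Sp(C)]$ vanishes under $\varphi$'' to ``the identity $C$ decomposes.'' Indeed, Theorem~\ref{thm:FaithfulHomomorphismInjective} and Corollary~\ref{cor:FaithfulHomomorphismOnCkt} run in the opposite direction to what you need: they take a \emph{given} $\trdeg$-$r$ configuration and preserve it, rather than extract a bound on $r$ from the identity. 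Using $\mathrm\Psi$ in the inductive step is also circular for the intended purpose, since the parameters of $\mathrm\Psi$ in Sect.~\ref{sec:Depth4CircuitsHittingSet} are tuned in terms of an assumed bound $R_{\delta}(k,s)$. Your final suggestion---a contraction reducing the depth-$4$ rank question to the depth-$3$ one---is the most promising direction, but as you note it has to preserve simplicity, minimality and $\trdeg$ simultaneously, and a Kronecker-type map that collapses each $f_{i,j}$ into a few fresh variables will generally destroy coprimality of the $T_i$'s and change $\trdeg\Sp(C)$; making such a contraction work is precisely the kind of new structural insight the conjecture is waiting for.
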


The following lemma is a vast generalization of \cite[Theorem 3.4]{KSh08} to depth-$4$ circuits.
It suggests how a bound for $R_{\delta}(k,s)$ can be used to construct a hitting set
for $\spsp_{\delta}(k,s,n)$ circuits. 
The $\varphi$ in the statement below should be thought of as a linear map that reduces the number of variables 
from $n$ to $R_{\delta}(k,s)+1$.

\begin{lemma}[Rank is useful] \label{lem:RankBasedBlackBoxAlgo}
	Let $C$ be a $\spsp_{\delta}(k,s,n)$ circuit, let $r := R_{\delta}(k, s)$
	and let $\varphi:K[\term{x}] \rightarrow K[\term{z}] = K[z_0, z_1, \dotsc, z_r]$
	be a linear $K$-algebra homomorphism that, for all $I \subseteq [k]$, satisfies:
	\begin{enumerate}
		\item\label{lem:RankBasedBlackBoxAlgoA} $\varphi(\simple(C_I)) = \simple(\varphi(C_I))$, and
		\item\label{lem:RankBasedBlackBoxAlgoB} $\rk(\varphi(\simple(C_I))) \ge \min\bigl\{ \rk(\simple(C_I)), R_{\delta}(k, s) \bigr\}$.
	\end{enumerate}
	Then $C = 0$ if and only if $\varphi(C) = 0$.
\end{lemma}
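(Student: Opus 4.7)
The ``only if'' direction is immediate because $\varphi$ is a ring homomorphism. For the converse I argue by contradiction: assume $\varphi(C) = 0$ but $C \neq 0$, and pick $I \subseteq [k]$ minimal with respect to the two conditions $C_I \neq 0$ and $\varphi(C_I) = 0$ (such an $I$ exists since $I = [k]$ qualifies). The first step is to extract the right minimality: for every nonempty proper subset $J \subsetneq I$, both $C_J \neq 0$ and $\varphi(C_J) \neq 0$. The nonvanishing of $\varphi(C_J)$ is the direct contrapositive of the minimality of $I$; the nonvanishing of $C_J$ follows from a complement swap: if $C_J = 0$, then $C_{I \setminus J} = C_I \neq 0$ while $\varphi(C_{I \setminus J}) = 0$, contradicting the minimal choice of $|I|$.

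Next I analyze $\simple(C_I) = C_I / g_I$, where $g_I := \gcd(C_I)$, which is nonzero. Since $g_I$ divides each $T_i$ for $i \in I$, if $\varphi(g_I) = 0$ then every $\varphi(T_i)$ would vanish, forcing $\varphi(C_J) = 0$ for all $J \subsetneq I$ and contradicting the paragraph above. Hence $\varphi(g_I) \neq 0$, and working in the domain $K[\term{z}]$ the equation $\varphi(\simple(C_I)) \cdot \varphi(g_I) = \varphi(C_I) = 0$ yields $\varphi(\simple(C_I)) = 0$. I then verify that $\varphi(\simple(C_I))$ is a simple, minimal, zero $\spsp_{\delta}(k,s,r+1)$ circuit: hypothesis (1) gives $\varphi(\simple(C_I)) = \simple(\varphi(C_I))$, which is simple by definition; minimality follows from the domain argument, since any proper subsum of its terms equals $\varphi(C_J)/\varphi(g_I)$ and the numerator is nonzero; the full sum is zero. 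By the defining property of $R_{\delta}(k,s)$, this forces $\rk(\varphi(\simple(C_I))) < R_{\delta}(k,s) = r$.

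I finish by playing this strict upper bound against hypothesis (2). If $\rk(\simple(C_I)) \ge r$, hypothesis (2) gives $\rk(\varphi(\simple(C_I))) \ge r$, immediately contradicting the upper bound. Otherwise $\rk(\simple(C_I)) < r$, and combining hypothesis (2) with the trivial inequality $\rk(\varphi(\simple(C_I))) \le \rk(\simple(C_I))$ (algebraic dependences survive any $K$-algebra homomorphism) forces equality of the two ranks. By Theorem~\ref{thm:FaithfulHomomorphismInjective}, $\varphi$ is therefore faithful to $\Sp(\simple(C_I))$, and Corollary~\ref{cor:FaithfulHomomorphismOnCkt} applied to the $\spsp$ circuit that builds $\simple(C_I)$ out of $\Sp(\simple(C_I))$ transports $\varphi(\simple(C_I)) = 0$ back to $\simple(C_I) = 0$, contradicting $C_I \neq 0$. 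I expect the main obstacle to be this last case together with the accompanying bookkeeping: checking that $\varphi(\simple(C_I))$ genuinely lives in $\spsp_{\delta}(k,s,r+1)$ (top fan-in $\le k$, bottom degree $\le \delta$ since $\varphi$ is linear, and product fan-in $\le s$ since dividing by $g_I$ only removes sparse factors) so that the rank bound $R_{\delta}(k,s)$ is legally applicable, and cleanly marrying Theorem~\ref{thm:FaithfulHomomorphismInjective} to pass from equality of ranks to injectivity on the relevant subalgebra.
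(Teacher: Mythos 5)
Your proof is correct and takes essentially the same route as the paper: locate a minimal $I$ with $\varphi(C_I)=0$, identify $\varphi(\simple(C_I))=\simple(\varphi(C_I))$ as a simple, minimal, zero circuit via hypothesis (1), use the rank bound $R_\delta(k,s)$ together with hypothesis (2) and the trivial inequality $\rk(\varphi(\simple(C_I)))\le\rk(\simple(C_I))$ to force faithfulness of $\varphi$ on $\Sp(\simple(C_I))$, and transport the vanishing back to $\simple(C_I)=0$. Your variant minimality criterion (demanding $C_I\neq 0$ in addition to $\varphi(C_I)=0$) merely repackages the paper's closing decomposition of $\varphi(C)$ into zero minimal subcircuits as a single contradiction, and the only blemish is attributing the step ``equal ranks $\Rightarrow$ faithful'' to Theorem~\ref{thm:FaithfulHomomorphismInjective} when it is the definition of faithfulness; the theorem is what then supplies the injectivity you invoke via Corollary~\ref{cor:FaithfulHomomorphismOnCkt}.
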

\begin{proof}
	If $C = 0$, then clearly $\varphi(C) = 0$. Conversely, let $\varphi(C) = 0$. Let $I \subseteq [k]$
	be a non-empty subset such that $\varphi(C_I)$ is a minimal circuit computing the zero polynomial.
	Then, by assumption (\ref{lem:RankBasedBlackBoxAlgoA}.), $\varphi(\simple(C_I)) = \simple(\varphi(C_I)) \in \spsp_{\delta}(k,s,n)$
	is a minimal and simple circuit computing the zero polynomial. Hence, 
	$\rk(\varphi(\simple(C_I))) < R_{\delta}(k, s)$.
	By assumption (\ref{lem:RankBasedBlackBoxAlgoB}.), this implies $\rk(\varphi(\simple(C_I)))$ $=$
	$\rk(\simple(C_I))$, thus $\varphi$
	is faithful to $\Sp(\simple(C_I))$. Theorem \ref{thm:FaithfulHomomorphismInjective} 
	yields $\simple(C_I) = 0$, hence $C_I = 0$.
	Since $\varphi(C)$ is the sum of zero and minimal circuits $\varphi(C_I)$ for some $I \subseteq [k]$, we obtain
	$C=0$ as required. 
\end{proof}

\subsection{Preserving the simple part (towards Theorem \ref{thm:main2})} \label{sec:PreservingSimplePart}

The following lemma shows that $\mathrm\Psi$ meets condition (\ref{lem:RankBasedBlackBoxAlgoA}.)
of Lemma \ref{lem:RankBasedBlackBoxAlgo}. The proof is given in Appendix \ref{app:PreservingSimplePart}.
This is also the heart of PIT when $k=2$. The actual hitting set, though, we provide in the next subsection.

\begin{lemma}[$\mathrm\Psi$ preserves the simple part] \label{lem:PreservingSimplePart}
	Let $C$ be a $\spsp_{\delta}(k,s,n)$ circuit. Let $D_1 \ge 2\delta^2+1$, let
	$D_1 \ge D_2 \ge \delta+1$ and let $D = (D_1, D_2)$. 
	Then there exists a prime $p\le (2ksn\delta^2)^{8\delta^2+2} (\log_2 D_1)^2+1$ such that any subset 
	$S \subset \ol{K}$ of size $2 \delta^4 k^2s^2p$ contains $c$ satisfying 
	$\mathrm\Psi_{D,p,c}(\simple(C)) = \simple(\mathrm\Psi_{D,p,c}(C))$.
\end{lemma}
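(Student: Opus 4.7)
The plan is to reduce preservation of the simple part to a pairwise coprimality assertion for the irreducible factors of the sparse polynomials $\{f_{i,j}\}$, and then to harvest a good $(p,c)$ by a sparse PIT argument applied to a witness polynomial in $t$. First, let $h_1, \dotsc, h_M$ be the distinct irreducible factors (up to associates) of $T_1 \cdots T_k$ in $K[\term{x}]$; since each $f_{i,j}$ has degree at most $\delta$ one has $M \le ks\delta$. It suffices to ensure that $\Psi_{D,p,c}(h_e)$ and $\Psi_{D,p,c}(h_{e'})$ are coprime in $\ol{K}[\term{z}]$ for every pair $e \ne e'$: writing $T_i = u_i \prod_e h_e^{a_{i,e}}$ with $u_i \in K^*$, pairwise coprimality of the $\Psi_{D,p,c}(h_e)$'s combined with unique factorization in $\ol{K}[\term{z}]$ forces $\gcd_i \Psi_{D,p,c}(T_i) = \prod_e \Psi_{D,p,c}(h_e)^{\min_i a_{i,e}} = \Psi_{D,p,c}(\gcd(C))$, and dividing out then yields $\Psi_{D,p,c}(\simple(C)) = \simple(\Psi_{D,p,c}(C))$.

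For a fixed coprime pair $h, h' \in K[\term{x}]$ of degree at most $\delta$, I would view $\Psi_D(h), \Psi_D(h') \in K[t,\term{z}]$ as polynomials in $z_0$ of degree at most $\delta$ with coefficients in $K[t, z_1, \dotsc, z_r]$, and build a witness polynomial $W_{h,h'} \in K[t, z_1, \dotsc, z_r]$ whose nonvanishing certifies coprimality of $\Psi_{D,p,c}(h)$ and $\Psi_{D,p,c}(h')$. The natural candidate is $W_{h,h'} := \mathrm{Res}_{z_0}(\Psi_D(h), \Psi_D(h'))$ combined with a companion factor that detects common $z_0$-independent content. That $W_{h,h'}$ is nonzero as an element of $K[t, z_1, \dotsc, z_r]$ should follow from the dual Kronecker design of $\Psi_D$: the hypothesis $D_1 \ge 2\delta^2+1$ makes the substitution $x_i \mapsto t^{D_1^i}$ injective on monomials of total $\term{x}$-degree at most $2\delta^2$, which is exactly what is needed to transport the resultant (an expression of $\term{x}$-degree $\le 2\delta^2$ in the coefficients of $h, h'$) faithfully to $t$; meanwhile $D_2 \ge \delta+1$ ensures $\Psi_D(h)$ genuinely depends on $z_0$ whenever $h$ is nonconstant, ruling out degenerate collapse. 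A standard estimate bounds both $\deg_t W_{h,h'}$ and its sparsity by $\poly(n, \delta, D_1)^{O(\delta^2)}$.

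Sparse PIT (as in the proof of Lemma \ref{lem:VandermondeHomomorphismFaithful}) then delivers a prime $p \le (2ksn\delta^2)^{8\delta^2+2}(\log_2 D_1)^2 + 1$ with $W_{h,h'} \not\equiv 0 \pmod{t^p - 1}$ simultaneously for every one of the $\binom{M}{2} \le \tfrac{1}{2}k^2s^2\delta^2$ relevant pairs. For each such $p$, the specialization $W_{h,h'}(c, \term{z})$ vanishes identically only for $c$ in a set of size $O(\delta^2 p)$ per pair (arising from the degree of $W_{h,h'}$ after the $t^p-1$ reduction, amplified by the resultant's $z$-structure). Union-bounding over all pairs yields at most $2\delta^4 k^2 s^2 p$ bad $c$'s, matching the stated bound. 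The main obstacle is the precise construction and sparsity analysis of $W_{h,h'}$, particularly the subtle case where $\Psi_D(h)$ and $\Psi_D(h')$ could share a $z_0$-independent common factor; this is precisely where the two separate Kronecker exponent sequences $D_1^i$ and $D_2^i$ in the definition of $\Psi_D$ become essential.
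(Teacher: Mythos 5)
Your overall strategy matches the paper's: pass to the irreducible factors of the sparse polynomials, reduce preservation of the simple part to pairwise coprimality of their images, certify coprimality via resultants with respect to $z_0$, and harvest good $(p,c)$ by sparse PIT in $t$. The place you explicitly flag as ``the main obstacle'' --- ruling out a common $z_0$-independent factor, and making the resultant commute with the specialization $t \mapsto c$ --- is exactly the step the paper resolves, and it is the piece your write-up leaves unresolved.

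The paper's fix is to track the $z_0$-leading coefficients: because $\mathrm\Psi_D(x_i) = t^{D_1^i} + t^{D_2^i}z_0 + \sum_j t^{i(n+1)^j}z_j$ and $D_2 > \delta$, the leading coefficient of $\mathrm\Psi_D(f_i)$ in $z_0$ is (up to a typo in the paper, the top-degree part of) a nonzero element of $K[t]$ obtained by the $D_2$-Kronecker substitution; one then discards the primes $p$ and scalars $c$ that kill it. Once the images are monic in $z_0$ with leading coefficients in $\ol{K}^*$, they cannot have $z_0$-free nonunit factors, so the resultant criterion (Lemma \ref{lem:Resultant}) gives full coprimality, and the resultant commutes with $t\mapsto c$. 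This is a sharper statement than ``$\mathrm\Psi_D(h)$ genuinely depends on $z_0$'': dependence on $z_0$ alone does not preclude a $z_0$-free common factor, nor does it guarantee that $\res_{z_0}$ specializes correctly. Your ``companion factor'' intuition is pointing at the right object --- you should take it to be precisely the product of these $z_0$-leading coefficients and add its vanishing locus to the bad set, which is what the paper's $B_1, S_1$ do. One further simplification the paper uses: it sets $z_1 = \dotsb = z_r = 0$ before forming the resultant, so that the witness lies in $K[t]$ rather than $K[t, z_1, \dotsc, z_r]$; this keeps the sparsity/degree bookkeeping univariate and avoids an extra step of arguing that a nonzero element of $\ol{K}[z_1,\dotsc,z_r]$ remains nonzero.
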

\begin{proof}[Proof sketch]
For any coprime $f_i, f_j\in \Sp(C)$ we look at their images under $\mathrm\Psi$. We view $\mathrm\Psi(f_i)$ and $\mathrm\Psi(f_j)$ 
as univariates wrt $z_0$ and fix $z_1=\cdots=z_r=0$. If we could keep these two univariates monic (before the fixing) and their 
resultants nonzero (after the fixing),  
then the coprimality of $\mathrm\Psi(f_i)$ and $\mathrm\Psi(f_j)$  would be ensured. Both those requirements are fulfilled by 
estimating the sparsity and using sparse PIT tricks. 
\end{proof}

\subsection{A hitting set (proving Theorems \ref{thm:main2} \& \ref{thm:main3})} \label{sec:Depth4CircuitsHittingSet}

Armed with Lemmas \ref{lem:RankBasedBlackBoxAlgo} and \ref{lem:PreservingSimplePart} we could now complete the
construction of the hitting set for $\spsp_\delta(k,s,n)$ circuits using the faithful homomorphism $\mathrm\Psi$ with the
right parameters. 

Let $n,\delta,k,s \ge 1$ and let $r = R_{\delta}(k, s)$. We introduce the following parameters.
They are blown up so that they support $2^k$ applications (one for each $I\subset[k]$) of Lemmas 
\ref{lem:VandermondeHomomorphismFaithful} and \ref{lem:PreservingSimplePart}.
\begin{enumerate}
	\item Define $D = (D_1, D_2)$ by $D_1 := (2\delta n)^{2r}$ and $D_2 := \delta+1$.
	\item Define $p_{\max} := 2^{2(k+1)}\cdot(2krsn\delta^2)^{8\delta^2+4\delta r}\lceil\log_2 D_1\rceil^2+1$.
	\item Pick arbitrary $H_1, H_2 \subset \ol{K}$ of sizes $2^{k+2}k^2rs^2\delta^4p_{\max}$ resp. $\delta s + 1$.
\end{enumerate}
Denote $\mathrm\Psi_{D,p,c}^{(i)} := \mathrm\Psi_{D,p,c}(x_i) \in \ol{K}[\term{z}]$ for $i = 1, \dotsc, n$
and define the subset
\[
	\mathcal{H}_{\delta,k,s} = \Bigl\{ \bigl(\mathrm\Psi_{D,p,c}^{(1)}(\boldsymbol{a}), \dotsc, \mathrm\Psi_{D,p,c}^{(n)}(\boldsymbol{a})\bigr) 
	\,\bigl\vert\; \text{$p \in [p_{\max}]$, $c \in H_1$, $\boldsymbol{a} \in H_2^{r+1}$} \Bigr\} \subset \ol{K}^n .
\]
The following theorem shows that, in large or zero characteristic, this is a hitting set for $\spsp_{\delta}(k,s,n)$ circuits.

\begin{theorem} \label{thm:Depth4CircuitsHittingSet}
	Assume that $\ch(K) = 0$ or $\ch(K) > \delta^r$. 
	Then $\mathcal{H}_{\delta,k,s}$	is a hitting set for $\spsp_{\delta}(k,s,n)$ circuits.
	It can be constructed in $\poly(\delta rsn)^{\delta^2 kr}$ time.
\end{theorem}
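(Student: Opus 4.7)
The plan is to apply Lemma \ref{lem:RankBasedBlackBoxAlgo} with $\varphi = \mathrm\Psi_{D,p,c}$ for a carefully chosen pair $(p,c)\in[p_{\max}]\times H_1$, and then hit the resulting $(r+1)$-variate image polynomial by Schwartz-Zippel (Lemma \ref{lem:CombinatorialNullstellensatz}) on $H_2^{r+1}$. Observe that $\mathcal{H}_{\delta,k,s}$ is, by construction, the image of $[p_{\max}]\times H_1\times H_2^{r+1}$ under the evaluation map, so for any $\spsp_{\delta}(k,s,n)$ circuit $C$ the question ``does some point of $\mathcal{H}_{\delta,k,s}$ hit $C$?'' is equivalent to asking whether some triple $(p,c,\boldsymbol{a})$ satisfies $\bigl(\mathrm\Psi_{D,p,c}(C)\bigr)(\boldsymbol{a})\neq 0$.

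The central step is to show that at least one pair $(p,c)\in[p_{\max}]\times H_1$ makes $\mathrm\Psi_{D,p,c}$ simultaneously satisfy, for every non-empty $I\subseteq[k]$, both (a) faithfulness to $\Sp(\simple(C_I))$ and (b) simple-part preservation $\mathrm\Psi_{D,p,c}(\simple(C_I))=\simple(\mathrm\Psi_{D,p,c}(C_I))$. Condition (a) yields hypothesis (\ref{lem:RankBasedBlackBoxAlgoB}.) of Lemma \ref{lem:RankBasedBlackBoxAlgo}, since faithfulness preserves the transcendence degree and hence the rank, while (b) is exactly hypothesis (\ref{lem:RankBasedBlackBoxAlgoA}.). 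The chosen $D_1=(2\delta n)^{2r}$ and $D_2=\delta+1$ are large enough to satisfy the $D$-hypotheses of both Lemma \ref{lem:VandermondeHomomorphismFaithful} (which handles (a) for one $I$) and Lemma \ref{lem:PreservingSimplePart} (which handles (b) for one $I$). For each fixed $I$, those two lemmas respectively bound the bad $p$'s by something like $(2krsn\delta^2)^{O(\delta^2+\delta r)}\lceil\log_2 D_1\rceil^2$ and the bad $c$'s by a polynomial in the same quantities. Taking a union bound over the $2^k$ subsets $I\subseteq[k]$ inflates both counts by $2^k$, which is precisely absorbed by the $2^{2(k+1)}$ factor in $p_{\max}$ and the $2^{k+2}$ factor in $|H_1|$; hence a good $(p,c)$ survives.

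Once such a pair is fixed, Lemma \ref{lem:RankBasedBlackBoxAlgo} reduces $C=0$ to $\mathrm\Psi_{D,p,c}(C)=0$. The image polynomial lives in only $r+1$ variables $z_0,\dotsc,z_r$ and has total degree at most $\delta s$, since $\mathrm\Psi_{D,p,c}$ is $\ol{K}$-linear in the $z_i$'s and $\deg(C)\le\delta s$. Therefore Lemma \ref{lem:CombinatorialNullstellensatz} with $|H_2|=\delta s+1$ certifies that $H_2^{r+1}$ is a hitting set for it. Not knowing the good $(p,c)$ in advance, we simply range over all triples in $[p_{\max}]\times H_1\times H_2^{r+1}$, obtaining precisely $\mathcal{H}_{\delta,k,s}$. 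The construction time is $|\mathcal{H}_{\delta,k,s}|=p_{\max}\cdot|H_1|\cdot|H_2|^{r+1}$; using $\log_2 D_1=O(r\log(\delta n))$ and collecting exponents, this is bounded by $\poly(\delta rsn)^{\delta^2 kr}$ (the theorem's exponent is deliberately loose so as to cover the $\delta^2+\delta r$ exponents coming from the two lemmas and the $2^{O(k)}$ union-bound slack).

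The main obstacle I expect is the \emph{simultaneity} across all $2^k$ choices of $I$: Lemmas \ref{lem:VandermondeHomomorphismFaithful} and \ref{lem:PreservingSimplePart} each apply to a single set of polynomials, whereas Lemma \ref{lem:RankBasedBlackBoxAlgo} demands that one and the same $\varphi$ be simultaneously faithful and simple-preserving on every $I\subseteq[k]$. Handling this cleanly via a single union bound is what forces the $2^{O(k)}$ overhead in $p_{\max}$ and $|H_1|$, and is ultimately what couples the parameter $k$ into the final exponent of the hitting-set size.
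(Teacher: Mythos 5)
Your proposal follows the paper's argument very closely: same decomposition into (a) faithfulness and (b) simple-part preservation across all $2^k$ subsets $I\subseteq[k]$, same union bound absorbed by the $2^{O(k)}$ factors built into $p_{\max}$ and $|H_1|$, and the same final reduction to an $(r+1)$-variate, degree-$\delta s$ polynomial hit by Schwartz--Zippel on $H_2^{r+1}$. You also correctly identified the simultaneity across subsets as the main technical point.

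There is, however, one genuine imprecision in condition (a). You ask $\mathrm\Psi_{D,p,c}$ to be \emph{faithful to the full set} $\Sp(\simple(C_I))$ and then derive hypothesis (\ref{lem:RankBasedBlackBoxAlgoB}.) of Lemma \ref{lem:RankBasedBlackBoxAlgo} from ``faithfulness preserves rank.'' But $\Sp(\simple(C_I))$ may well have transcendence degree strictly larger than $r = R_\delta(k,s)$. In that regime Lemma \ref{lem:VandermondeHomomorphismFaithful} simply does not apply (it requires $\trdeg \le r$), and moreover full faithfulness is flatly impossible: the image ring $\ol{K}[z_0,\dotsc,z_r]$ has transcendence degree $r+1$, so a set of higher transcendence degree cannot be mapped isomorphically onto its image. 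The paper avoids this by choosing, for each $I$, a subset $\{f_1,\dotsc,f_m\}\subseteq\Sp(\simple(C_I))$ of transcendence degree exactly $\min\{\rk(\simple(C_I)),\,r\}$ and applying Lemma \ref{lem:VandermondeHomomorphismFaithful} only to that subset; this is precisely what hypothesis (\ref{lem:RankBasedBlackBoxAlgoB}.) demands and no more. Your parameter estimates and union bound are otherwise fine, so the fix is purely a restatement of (a): ``$\mathrm\Psi_{D,p,c}$ is faithful to some subset of $\Sp(\simple(C_I))$ of transcendence degree $\min\{\rk(\simple(C_I)),r\}$.'' With that correction your proof coincides with the paper's.
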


Since trivially $R_{\delta}(2,s) = 1$, we obtain an explicit hitting set for the top fanin $2$ case.
Moreover, in this case we can also eliminate the dependence on the characteristic (because Lemma \ref{lem:PreservingSimplePart} is field independent).

\begin{corollary} \label{cor:Depth4CircuitsHittingSetTopFanIn2}
	Let $K$ be of arbitrary characteristic. Then $\mathcal{H}_{\delta, 2, s}$ is a hitting set for 
	$\spsp_{\delta}(2,s,n)$ circuits. It can be constructed in $\poly(\delta sn)^{\delta^2}$ time.
\end{corollary}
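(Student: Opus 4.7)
The plan is to apply Lemma \ref{lem:RankBasedBlackBoxAlgo} with $k=2$, exploiting the fact that $R_{\delta}(2,s)=1$ holds trivially. Indeed, a simple minimal zero identity $T_1+T_2=0$ with $\gcd(T_1,T_2)=1$ forces $T_1=-T_2$, and coprimality then puts both $T_i$ in $K^*$; hence every $f_{i,j}$ is a unit and $\rk=0<1$. Feeding $r=1$ into the construction, $\mathcal{H}_{\delta,2,s}$ first reduces $C$ via $\mathrm\Psi_{D,p,c}$ to a polynomial in two variables $z_0,z_1$ of degree at most $\delta s$, then finishes with a Schwartz-Zippel set $H_2^{r+1}$ of sidelength $\delta s+1$ (Lemma \ref{lem:CombinatorialNullstellensatz}).

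To make the reduction correct, I would verify the two conditions of Lemma \ref{lem:RankBasedBlackBoxAlgo} for $\varphi=\mathrm\Psi_{D,p,c}$ and every $I\subseteq[2]$. For $\lvert I\rvert\le1$ both are trivial since $\simple(T_i)=1$. For $I=\{1,2\}$, condition~(1), namely $\varphi(\simple(C))=\simple(\varphi(C))$, is supplied by Lemma \ref{lem:PreservingSimplePart}, which is valid in \emph{arbitrary} characteristic; this is the key ingredient that lets us drop the characteristic hypothesis of Theorem \ref{thm:Depth4CircuitsHittingSet}. The $2^{2(k+1)}$ and $2^{k+2}$ factors baked into $p_{\max}$ and $\lvert H_1\rvert$ accommodate a union bound on the bad $(p,c)$ pairs over all $2^k$ subsets $I\subseteq[k]$.

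The main obstacle is condition (2), $\rk(\varphi(\simple(C_I)))\ge\min\{\rk(\simple(C_I)),1\}$: the general proof routes through Lemma \ref{lem:VandermondeHomomorphismFaithful}, which uses the Jacobian criterion and therefore needs $\ch(K)=0$ or $\ch(K)>\delta^r$. The crucial observation for $k=2$ is that $R_\delta(2,s)=1$ shrinks this condition to mere \emph{non-constancy preservation}: either $\rk(\simple(C_I))=0$ (trivial), or some $f\in\Sp(\simple(C_I))$ is non-constant and we need some $\mathrm\Psi_{D,p,c}(f)$ to be non-constant in $\ol{K}[z_0,z_1]$. I would establish this by a Kronecker/sparse-PIT argument with no derivatives: for non-constant sparse $f$, $\mathrm\Psi_D(f)$ has a nonzero $z_0^az_1^b$-coefficient (with $a+b\ge1$) as a polynomial in $t$, and a standard sparse-PIT count bounds the bad $(p,c)$ within the generous slack built into $p_{\max}$ and $\lvert H_1\rvert$. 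Since no partial derivative is invoked, this step is valid in arbitrary characteristic.

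Once both conditions are in place, Lemma \ref{lem:RankBasedBlackBoxAlgo} gives $C=0$ iff $\mathrm\Psi_{D,p,c}(C)=0$, and Schwartz-Zippel on the bivariate $\mathrm\Psi_{D,p,c}(C)$ of degree at most $\delta s$ finds a non-root in $H_2^{2}$, so the corresponding point in $\mathcal{H}_{\delta,2,s}$ hits $C$. Counting $\lvert\mathcal{H}_{\delta,2,s}\rvert=p_{\max}\cdot\lvert H_1\rvert\cdot\lvert H_2\rvert^{r+1}$ with the chosen parameters yields the claimed $\poly(\delta sn)^{\delta^2}$ construction time.
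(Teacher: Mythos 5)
Your proposal is correct and follows essentially the same route as the paper. The paper's own proof is terser: it simply observes $R_\delta(2,s)=1$, notes that the proof of Lemma~\ref{lem:PreservingSimplePart} already shows $\mathrm\Psi_{D,p,c}(f_i)$ is monic in $z_0$ (hence non-constant) for all irreducible factors $f_i$, and remarks that this non-constancy preservation is exactly faithfulness at $\trdeg=1$, so Lemma~\ref{lem:VandermondeHomomorphismFaithful} (and with it the Jacobian criterion's characteristic restriction) can be dropped. Your separate Kronecker/sparse-PIT sketch for non-constancy re-derives a fact already supplied by the proof of Lemma~\ref{lem:PreservingSimplePart}; otherwise the reasoning, including the clean argument for $R_\delta(2,s)=1$ and the reduction of condition~(2) of Lemma~\ref{lem:RankBasedBlackBoxAlgo} to non-constancy, matches the paper's.
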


A proof of the theorem and the corollary can be found in Appendix \ref{app:Depth4CircuitsHittingSet}.

\section{Conclusion}

The notion of rank has been quite useful in depth-$3$ PIT. In this work we give the first generalization
of it to depth-$4$ circuits. We used $\trdeg$ and developed fundamental maps -- the faithful homomorphisms --
that preserve $\trdeg$ of sparse polynomials in a blackbox and efficient way (assuming a small $\trdeg$).
Crucially, we showed that faithful homomorphisms preserve the nonzeroness of circuits. 

Our work raises several open questions. The faithful homomorphism construction over a small characteristic
has restricted efficiency, in particular, it is interesting only when the sparse polynomials have very 
low degree. 
Could Lemma \ref{lem:ProjectionHomomorphismFaithful} be improved to handle larger $\delta$?
In general, the classical methods stop short of dealing with small characteristic because the ``geometric''
Jacobian criterion is not
there. We have given some new tools to tackle that, for eg., Corollary \ref{cor:AnnihilatingPolynomialDegreeBound}
and Lemmas \ref{lem:FaithfulHomomorphismExistence} and \ref{lem:ProjectionHomomorphismFaithful}. 
But more tools are needed, for eg. a homomorphism like that
of Lemma \ref{lem:VandermondeHomomorphismFaithful} for arbitrary fields.

Currently, we do not know a better upper bound for $R_\delta(k,s)$ other than $ks$. For $\delta=1$, it is
just the rank of depth-$3$ identities, which is known to be $O(k^2\log s)$ ($O(k^2)$ over $\R$) \cite{SS10focs}.
Even for $\delta=2$ we leave the rank question open. We conjecture $R_2(k,s)=O_k(\log s)$ (generally, 
Conjecture \ref{conj:RankBound}). Our hope is that understanding these small $\delta$ identities should give
us more potent tools to attack depth-$4$ PIT in generality.

\paragraph{Acknowledgements} We are grateful to the Hausdorff Center for Mathematics, Bonn, for its kind support. 
The first two authors would also like to thank the Bonn International Graduate School in Mathematics for research funding.


\bibliographystyle{alpha}
\bibliography{pit-refs}

\begin{thebibliography}{DGRV11}

\bibitem[AB03]{AB03}
M.~Agrawal and S.~Biswas.
\newblock Primality and identity testing via {C}hinese remaindering.
\newblock {\em Journal of the ACM}, 50(4):429--443, 2003.
\newblock (Conference version in FOCS 1999).

\bibitem[Agr05]{A05}
M.~Agrawal.
\newblock Proving lower bounds via pseudo-random generators.
\newblock In {\em Proceedings of the 25th Annual Foundations of Software
  Technology and Theoretical Computer Science (FSTTCS)}, pages 92--105, 2005.

\bibitem[Agr06]{A06}
M.~Agrawal.
\newblock Determinant versus permanent.
\newblock In {\em Proceedings of the 25th {I}nternational {C}ongress of
  {M}athematicians (ICM)}, volume~3, pages 985--997, 2006.

\bibitem[AL86]{AL86}
L.~M. Adleman and H.~W. Lenstra.
\newblock Finding irreducible polynomials over finite fields.
\newblock In {\em Proceedings of the 18th Annual ACM Symposium on Theory of
  Computing (STOC)}, pages 350--355, 1986.

\bibitem[Alo99]{bib:Alo99}
N.~Alon.
\newblock Combinatorial {N}ullstellensatz.
\newblock {\em Combinatorics, Probability and Computing}, 8:7--29, 1999.

\bibitem[AV08]{AV08}
M.~Agrawal and V.~Vinay.
\newblock Arithmetic circuits: A chasm at depth four.
\newblock In {\em Proceedings of the 49th Annual Symposium on Foundations of
  Computer Science (FOCS)}, pages 67--75, 2008.

\bibitem[AvMV10]{AvMV10}
M.~Anderson, D.~van Melkebeek, and I.~Volkovich.
\newblock Derandomizing polynomial identity testing for multilinear
  constant-read formulae.
\newblock Technical Report TR10-135, ECCC, 2010.

\bibitem[BHLV09]{BHLV09}
M.~Bl\"{a}ser, M.~Hardt, R.~J. Lipton, and N.~K. Vishnoi.
\newblock Deterministically testing sparse polynomial identities of unbounded
  degree.
\newblock {\em Information Processing Letters}, 109(3):187--192, 2009.

\bibitem[CK00]{CK00}
Z.~Chen and M.~Kao.
\newblock Reducing randomness via irrational numbers.
\newblock {\em SIAM J. on Computing}, 29(4):1247--1256, 2000.
\newblock (Conference version in STOC 1997).

\bibitem[CLO97]{bib:CLO97}
D.~Cox, J.~Little, and D.~O'Shea.
\newblock {\em Ideals, {V}arieties, and {A}lgorithms: {A}n {I}ntroduction to
  {C}omputational {A}lgebraic {G}eometry and {C}ommutative {A}lgebra}.
\newblock Springer-Verlag, New York, second edition, 1997.

\bibitem[DGRV11]{DGRV11}
Z.~Dvir, D.~Gutfreund, G.~Rothblum, and S.~Vadhan.
\newblock On approximating the entropy of polynomial mappings.
\newblock In {\em Proceedings of the 2nd Symposium on Innovations in Computer
  Science (ICS)}, 2011.

\bibitem[DGW09]{bib:DGW09}
Z.~Dvir, A.~Gabizon, and A.~Wigderson.
\newblock Extractors and rank extractors for polynomial sources.
\newblock {\em Computational Complexity}, 18(1):1--58, 2009.
\newblock (Conference version in FOCS 2007).

\bibitem[DL78]{bib:DL78}
Richard~A. DeMillo and Richard~J. Lipton.
\newblock A probabilistic remark on algebraic program testing.
\newblock {\em Information Processing Letters}, 7(4):193--195, 1978.

\bibitem[DS06]{DS06}
Z.~Dvir and A.~Shpilka.
\newblock Locally decodable codes with 2 queries and polynomial identity
  testing for depth 3 circuits.
\newblock {\em SIAM J. on Computing}, 36(5):1404--1434, 2006.
\newblock (Conference version in STOC 2005).

\bibitem[Dvi09]{Dvir09}
Z.~Dvir.
\newblock Extractors for varieties.
\newblock In {\em Proceedings of the 24th Annual IEEE Conference on
  Computational Complexity (CCC)}, pages 102--113, 2009.

\bibitem[Eis95]{bib:Eis95}
D.~Eisenbud.
\newblock {\em Commutative {A}lgebra with a {V}iew {T}oward {A}lgebraic
  {G}eometry}.
\newblock Springer-Verlag, New York, 1995.

\bibitem[ER93]{bib:ER93}
R.~Ehrenborg and G.~Rota.
\newblock Apolarity and {C}anonical {F}orms for {H}omogeneous {P}olynomials.
\newblock {\em Europ. J. Combinatorics}, 14:157--181, 1993.

\bibitem[HS80]{HS80}
J.~Heintz and C.~P. Schnorr.
\newblock Testing polynomials which are easy to compute (extended abstract).
\newblock In {\em Proceedings of the twelfth annual ACM Symposium on Theory of
  Computing}, pages 262--272, New York, NY, USA, 1980.

\bibitem[Kal85]{Kal85}
K.~Kalorkoti.
\newblock A lower bound for the formula size of rational functions.
\newblock {\em SIAM J. Comp.}, 14(3):678--687, 1985.
\newblock (Conference version in ICALP 1982).

\bibitem[Kay09]{bib:Kay09}
N.~Kayal.
\newblock The {C}omplexity of the {A}nnihilating {P}olynomial.
\newblock In {\em Proceedings of the 24th Annual IEEE Conference on
  Computational Complexity (CCC)}, pages 184--193, 2009.

\bibitem[Kem11]{bib:Kem11}
G.~Kemper.
\newblock {\em A {C}ourse in {C}ommutative {A}lgebra}.
\newblock Springer-Verlag, Berlin, 2011.

\bibitem[KI04]{KI04}
V.~Kabanets and R.~Impagliazzo.
\newblock Derandomizing polynomial identity tests means proving circuit lower
  bounds.
\newblock {\em Computational Complexity}, 13(1):1--46, 2004.
\newblock (Conference version in STOC 2003).

\bibitem[KMSV10]{KMSV10}
Z.~Karnin, P.~Mukhopadhyay, A.~Shpilka, and I.~Volkovich.
\newblock Deterministic identity testing of depth-4 multilinear circuits with
  bounded top fan-in.
\newblock In {\em Proceedings of the 42nd ACM Symposium on Theory of Computing
  (STOC)}, pages 649--658, 2010.

\bibitem[KS07]{KS07}
N.~Kayal and N.~Saxena.
\newblock Polynomial identity testing for depth 3 circuits.
\newblock {\em Computational Complexity}, 16(2):115--138, 2007.
\newblock (Conference version in CCC 2006).

\bibitem[KS08]{KSh08}
Z.~Karnin and A.~Shpilka.
\newblock Deterministic black box polynomial identity testing of depth-3
  arithmetic circuits with bounded top fan-in.
\newblock In {\em Proceedings of the 23rd Annual Conference on Computational
  Complexity (CCC)}, pages 280--291, 2008.

\bibitem[KS09]{KSar09}
N.~Kayal and S.~Saraf.
\newblock Blackbox polynomial identity testing for depth 3 circuits.
\newblock In {\em Proceedings of the 50th Annual Symposium on Foundations of
  Computer Science (FOCS)}, pages 198--207, 2009.

\bibitem[Lan02]{bib:Lan02}
S.~Lang.
\newblock {\em Algebra}.
\newblock Springer-Verlag, New York, third edition, 2002.

\bibitem[Lov79]{L79}
L.~Lov\'{a}sz.
\newblock On determinants, matchings and random algorithms.
\newblock In {\em Fundamentals of Computation Theory (FCT)}, pages 565--574,
  1979.

\bibitem[Lov89]{L89}
L.~Lov\'{a}sz.
\newblock Singular spaces of matrices and their applications in combinatorics.
\newblock {\em Bol. Soc. Braz. Mat}, 20:87--99, 1989.

\bibitem[LV98]{LV98}
D.~Lewin and S.~Vadhan.
\newblock Checking polynomial identities over any field: Towards a
  derandomization?
\newblock In {\em Proceedings of the 30th Annual Symposium on the Theory of
  Computing (STOC)}, pages 428--437, 1998.

\bibitem[Mat89]{bib:Mat89}
H.~Matsumura.
\newblock {\em Commutative Ring Theory}.
\newblock Cambridge Studies in Advanced Mathematics, Cambridge, UK, second
  edition, 1989.

\bibitem[Mor96]{bib:Mo96}
P.~Morandi.
\newblock {\em Field and {G}alois {T}heory}.
\newblock Springer-Verlag, New York, 1996.

\bibitem[Oxl06]{bib:Oxl06}
James Oxley.
\newblock {\em Matroid {T}heory}.
\newblock Oxford University Press, 2006.

\bibitem[Pap95]{bib:Pap95}
{C. H.} Papadimitriou.
\newblock {\em Computational complexity}.
\newblock Addison-Wesley, Reading, Massachusetts, 1995.

\bibitem[Per27]{bib:Per27}
O.~Perron.
\newblock {\em Algebra I (Die Grundlagen)}.
\newblock Berlin, 1927.

\bibitem[P{\l}o05]{bib:Plo05}
A.~P{\l}oski.
\newblock Algebraic {D}ependence of {P}olynomials {A}fter {O}. {P}erron and
  {S}ome {A}pplications.
\newblock In Svetlana Cojocaru, Gerhard Pfister, and Victor Ufnarovski,
  editors, {\em Computational {C}ommutative and {N}on-{C}ommutative {A}lgebraic
  {G}eometry}, pages 167--173. IOS Press, 2005.

\bibitem[Sax09]{S09}
N.~Saxena.
\newblock Progress on polynomial identity testing.
\newblock {\em Bulletin of the European Association for Theoretical Computer
  Science (EATCS)- Computational Complexity Column}, (99):49--79, 2009.

\bibitem[Sch80]{Sch80}
J.~T. Schwartz.
\newblock Fast probabilistic algorithms for verification of polynomial
  identities.
\newblock {\em Journal of the ACM}, 27(4):701--717, 1980.

\bibitem[SS]{bib:SS10}
N.~Saxena and C.~Seshadhri.
\newblock An {A}lmost {O}ptimal {R}ank {B}ound for {D}epth-$3$ {I}dentities.
\newblock {\em SIAM J. Comp. (to appear)}.
\newblock (Conference version in CCC 2009).

\bibitem[SS10]{SS10focs}
N.~Saxena and C.~Seshadhri.
\newblock From {S}ylvester-{G}allai configurations to rank bounds: Improved
  black-box identity test for depth-3 circuits.
\newblock In {\em Proceedings of the 51st Annual Symposium on Foundations of
  Computer Science (FOCS)}, pages 21--29, 2010.

\bibitem[SS11]{SS11stoc}
N.~Saxena and C.~Seshadhri.
\newblock Blackbox identity testing for bounded top fanin depth-3 circuits: the
  field doesn't matter.
\newblock In {\em Proceedings of the 43rd ACM Symposium on Theory of Computing
  (STOC)}, 2011.

\bibitem[SV11]{SV11}
S.~Saraf and I.~Volkovich.
\newblock Black-box identity testing of depth-4 multilinear circuits.
\newblock In {\em Proceedings of the 43rd ACM Symposium on Theory of Computing
  (STOC)}, 2011.

\bibitem[SY10]{bib:SY10}
A.~Shpilka and A.~Yehudayoff.
\newblock Arithmetic {C}ircuits: {A} survey of recent results and open
  questions.
\newblock {\em Foundations and Trends in Theoretical Computer Science},
  5(3--4):207--388, 2010.

\bibitem[vdE00]{bib:vdE00}
A.~van~den Essen.
\newblock {\em Polynomial {A}utomorphisms and the {J}acobian {C}onjecture}.
\newblock Birkh{\"a}user Verlag, Basel, 2000.

\bibitem[Zen93]{bib:Zen93}
J.~Zeng.
\newblock A {B}ijective {P}roof of {M}uir's {I}dentity and the {C}auchy-{B}inet
  {F}ormula.
\newblock {\em Linear Algebra and its Applications}, 184:79--82, 1993.

\bibitem[Zip79]{Z79}
R.~Zippel.
\newblock Probabilistic algorithms for sparse polynomials.
\newblock In {\em Proceedings of the International Symposium on Symbolic and
  Algebraic Manipulation (EUROSAM)}, pages 216--226, 1979.

\end{thebibliography}


\begin{appendix}


\section{Proofs for Sect. \ref{sec:Preliminaries}: Preliminaries} \label{app:Preliminaries}

\subsection{Proofs for Sect. \ref{sec:AlgebraicIndependence}: Perron's criterion} \label{app:AlgebraicIndependence}

For the proof of Corollary \ref{cor:AnnihilatingPolynomialDegreeBound} we will need three well-known
lemmas. The first one is about resultants. For more information about resultants, see \cite{bib:CLO97}.

\begin{lemma}[Resultant] \label{lem:Resultant}
	Let $f, g \in K[\term{x}]$  such that $\deg_{x_i}(f) > 0$ and $\deg_{x_i}(g) > 0$ 
	for some $i \in [n]$. Then $\res_{x_i}(f, g) = 0$ if and only if $f$ and $g$ have a common factor 
	$h \in K[\term{x}]$ with $\deg_{x_i}(h) > 0$.
\end{lemma}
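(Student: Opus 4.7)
The plan is to reduce to the classical univariate resultant criterion over a field, and then transfer the resulting factorization back to $K[\term{x}]$ via Gauss's lemma. First I would set $R := K[x_j : j \in [n] \setminus \{i\}]$, so that $K[\term{x}] = R[x_i]$, and let $L := \operatorname{Frac}(R)$. Since $R$ is a polynomial ring over a field, it is a UFD, which is exactly the setting in which Gauss's lemma applies. The resultant $\res_{x_i}(f, g)$ is the determinant of the Sylvester matrix of $f, g$ viewed as univariates in $x_i$ with coefficients in $R$; its value is unchanged if we instead view $f, g$ as elements of $L[x_i]$.

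Next I would invoke the classical univariate criterion over the field $L$: for $\tilde f, \tilde g \in L[x_i]$ of positive $x_i$-degree, $\res_{x_i}(\tilde f, \tilde g) = 0$ if and only if $\tilde f$ and $\tilde g$ share a common factor of positive $x_i$-degree in $L[x_i]$. This is the standard fact that the Sylvester matrix represents the $L$-linear map $(u, v) \mapsto u\tilde g + v\tilde f$ on appropriately truncated polynomial spaces; a nontrivial kernel element yields a B\'ezout-style relation $u\tilde g = -v\tilde f$ with $\deg_{x_i}(u) < \deg_{x_i}(\tilde f)$ and $\deg_{x_i}(v) < \deg_{x_i}(\tilde g)$, from which a common factor of positive $x_i$-degree must arise (and conversely, such a common factor produces a nontrivial kernel element).

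The direction $(\Leftarrow)$ is then immediate: any common factor $h \in K[\term{x}] = R[x_i]$ with $\deg_{x_i}(h) > 0$ remains a common factor of positive $x_i$-degree in $L[x_i]$, which forces $\res_{x_i}(f, g) = 0$ by the criterion above.

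For the direction $(\Rightarrow)$, assume $\res_{x_i}(f, g) = 0$. The above criterion furnishes some $\tilde h \in L[x_i]$ dividing both $f$ and $g$ in $L[x_i]$ with $\deg_{x_i}(\tilde h) > 0$. Clearing denominators and extracting the $R$-content, I would apply Gauss's lemma to produce a primitive $h \in R[x_i]$ that is an $L[x_i]$-associate of $\tilde h$; primitivity together with $R$ being a UFD then upgrades the divisibilities $h \,\vert\, f$ and $h \,\vert\, g$ from $L[x_i]$ to $R[x_i] = K[\term{x}]$, while $\deg_{x_i}(h) = \deg_{x_i}(\tilde h) > 0$ is preserved, giving the desired common factor. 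The only substantive step is this appeal to Gauss's lemma, but that is completely standard for polynomial rings over a field; once it is in place, both directions follow formally from the univariate-over-a-field theory of the resultant.
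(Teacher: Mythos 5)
Your proof is correct and follows the standard route via reduction to the univariate case over the fraction field $L = \operatorname{Frac}(R)$ and then Gauss's lemma to return to $R[x_i] = K[\term{x}]$; the paper gives no proof of its own but simply cites \cite[Chap.~3, \S 6, Prop.~1]{bib:CLO97}, which proves the statement in essentially the same way. The only point worth flagging is that you should be explicit that the leading $x_i$-coefficients of $f$ and $g$, being nonzero elements of the integral domain $R$, remain nonzero in $L$, so the Sylvester matrix and hence the resultant are genuinely the same object over $R$ and over $L$; you gesture at this with ``its value is unchanged'' but it is the small hinge on which the reduction turns.
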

\begin{proof}	
	Cf. \cite[Chap. 3, \S 6, Proposition 1]{bib:CLO97}. 
\end{proof}

The following lemma identifies a situation where annihilating polynomials are unique up to a factor in $K^*$.

\begin{lemma}[Unique annihilating polynomials]\label{lem:RelationIdealPrincipal}
	Let $f_1, \dotsc, f_m \in K[\term{x}]$ contain precisely $m-1$ algebraically independent polynomials
	and let $I \subseteq K[y_1,\dotsc, y_m]$ be the ideal of algebraic relations among $f_1, \dotsc, f_m$. 
	Then $I$ is principal.
\end{lemma}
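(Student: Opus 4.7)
The plan is to identify $I$ as the kernel of a natural evaluation homomorphism and then extract its principality from dimension theory combined with unique factorization in $K[\mathbf{y}]$.

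First I would set up the evaluation map $\rho: K[y_1,\ldots,y_m] \to K[\mathbf{x}]$ sending $F \mapsto F(f_1,\ldots,f_m)$. By definition $I = \ker(\rho)$, so $K[\mathbf{y}]/I \cong K[f_1,\ldots,f_m]$, which is a subalgebra of the domain $K[\mathbf{x}]$. Hence $I$ is a prime ideal of $K[\mathbf{y}]$. It is also nonzero, because by hypothesis $\{f_1,\ldots,f_m\}$ is algebraically dependent (trdeg equals $m-1 < m$).

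Next I would compute the height of $I$ via dimension theory. By Theorem \ref{thm:DimEqualsTrdeg},
\[
\dim(K[\mathbf{y}]/I) \;=\; \dim(K[f_1,\ldots,f_m]) \;=\; \trdeg_K\{f_1,\ldots,f_m\} \;=\; m-1 .
\]
Since $K[\mathbf{y}]$ is a finitely generated $K$-algebra that is a domain, it is catenary and equidimensional, so $\mathrm{height}(I) + \dim(K[\mathbf{y}]/I) = \dim(K[\mathbf{y}]) = m$. Therefore $\mathrm{height}(I) = 1$.

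Finally, I would invoke the classical fact that in a unique factorization domain every height-one prime ideal is principal (the converse to Krull's Hauptidealsatz for UFDs): pick any nonzero $F \in I$ and factor it into irreducibles in the UFD $K[\mathbf{y}]$; since $I$ is prime, some irreducible factor $F_0$ of $F$ lies in $I$, and then $\langle F_0 \rangle \subseteq I$ is a chain of primes, forcing equality by height 1. Hence $I = \langle F_0 \rangle$ is principal. The argument is essentially a direct application of standard commutative algebra, so there is no real obstacle; the only subtlety worth double-checking is the equality $\mathrm{height}(I) + \dim(K[\mathbf{y}]/I) = m$, which holds because $K[\mathbf{y}]$ is a polynomial ring over a field (hence Cohen--Macaulay, in particular catenary and equidimensional).
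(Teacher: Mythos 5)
Your proof is correct, but it takes a genuinely different route from the paper's. You compute the height of the relation ideal $I$ via dimension theory: $K[\mathbf{y}]/I \cong K[f_1,\ldots,f_m]$ has Krull dimension $m-1$ by Theorem \ref{thm:DimEqualsTrdeg}, the dimension formula for affine domains over a field gives $\mathrm{height}(I)=1$, and then the classical fact that height-one primes in a UFD are principal finishes the job. The paper instead argues by elimination: given two irreducible annihilating polynomials $F_1,F_2$, it forms the resultant $\res_{y_m}(F_1,F_2)\in K[y_1,\ldots,y_{m-1}]$, observes that this resultant must vanish under the evaluation $y_i\mapsto f_i$ and hence (by algebraic independence of $f_1,\ldots,f_{m-1}$) is identically zero, and then invokes Lemma \ref{lem:Resultant} to conclude $F_1$ and $F_2$ share a factor and therefore are associates. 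Your argument is shorter and more conceptual, and as a bonus it actually shows $\mathrm{height}(I)=m-\trdeg$ in general; but it leans on two external facts (the catenary/equidimensional dimension formula for polynomial rings over a field, and the UFD-implies-height-one-primes-principal theorem) that the paper doesn't otherwise set up. The paper's resultant argument is more elementary and reuses machinery (Lemma \ref{lem:Resultant}) already developed in the same section, keeping the appendix self-contained. Both approaches implicitly require that $I$ is a nonzero prime (kernel into a domain, with a dependence relation existing), which you state explicitly and the paper uses tacitly.
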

\begin{proof}
	We follow the instructions of \cite[Exercise 3.2.7]{bib:vdE00}.
	Assume that $f_1, \dotsc, f_{m-1}$ are algebraically independent and let $F_1, F_2 \in K[y_1, \dotsc, y_m]$
	be non-zero irreducible polynomials satisfying $F_i(f_1, \dotsc, f_m) = 0$ for $i=1,2$. It suffices to
	show that $F_1 = c F_2$ for some $c \in K^*$. 
	
	For this, view $F_1, F_2$ as elements of $R[y_m]$, where $R = K[y_1, \dotsc, y_{m-1}]$, and consider the $y_m$-resultant 
	$g := \res_{y_m}(F_1, F_2) \in R$. By \cite[Chap. 3, \S 5, Proposition 9]{bib:CLO97}, there exist 
	$g_1, g_2 \in R[y_m]$ such that $g = g_1 F_1 + g_2 F_2$. We have
	\begin{align*}
		g(f_1, \dotsc, f_{m-1})
		&= g_1(f_1, \dotsc, f_m) \cdot F_1(f_1, \dotsc, f_m) + g_2(f_1, \dotsc, f_m) \cdot F_2(f_1, \dotsc, f_m) \\
		&= 0.
	\end{align*}
	Since $f_1, \dotsc, f_{m-1}$ are algebraically independent, it follows that $g=0$. By Lemma \ref{lem:Resultant},
	$F_1, F_2$ have a non-trivial common factor in $R[y_m]$. Since $F_1, F_2$ are irreducible,
	we obtain $F_1 = c F_2$ for some $c \in K^*$, as required. 
\end{proof}

The following lemma contains a useful fact about annihilating polynomials and algebraic field extensions
(cf. \cite[Claim 7.2]{bib:Kay09} for a similar statement).

\begin{lemma}[Going to a field extension] \label{lem:AnnihilatingPolynomialFieldExtension}
	Let $f_1, \dotsc, f_m \in K[\term{x}]$ and let $L/K$ be an algebraic field extension. If there exists a
	non-zero polynomial $F \in L[\term{y}] = L[y_1, \dotsc, y_m]$ such that $F(f_1, \dotsc, f_m)=0$, then
	there exists a non-zero polynomial $G \in K[\term{y}]$ such that
	$G(f_1, \dotsc, f_m)=0$ and $\deg(G) \le \deg(F)$. In particular, $f_1, \dotsc, f_m$ are algebraically
	independent over $K$ if and only if they are algebraically independent over $L$.
\end{lemma}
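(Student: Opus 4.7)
The plan is to leverage the fact that the finitely many coefficients of $F$ generate a finite extension of $K$, and then to decompose $F$ against a $K$-basis of this extension.

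First I would let $\alpha_1, \dotsc, \alpha_t \in L$ be the nonzero coefficients of $F$ and set $M := K(\alpha_1, \dotsc, \alpha_t)$. Since $L/K$ is algebraic, each $\alpha_i$ is algebraic over $K$, so $M/K$ is a finite extension. Fix a $K$-basis $e_1, \dotsc, e_d$ of $M$ (we may take $e_1 = 1$). Since $F \in M[\term{y}]$, I can expand each coefficient of $F$ uniquely in this basis and collect by $e_i$ to obtain a decomposition
\[
	F = \sum_{i=1}^d e_i\, G_i, \qquad G_i \in K[\term{y}],
\]
where $\deg(G_i) \le \deg(F)$ for every $i$, and at least one $G_i$ is nonzero (because $F \neq 0$).

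Next I would substitute $y_j = f_j$. This yields, inside $M[\term{x}] \subseteq L[\term{x}]$,
\[
	0 = F(f_1, \dotsc, f_m) = \sum_{i=1}^d e_i\, G_i(f_1, \dotsc, f_m).
\]
The key structural observation is that $M[\term{x}]$ is a free $K[\term{x}]$-module with basis $e_1, \dotsc, e_d$ (equivalently, $M[\term{x}] = K[\term{x}] \otimes_K M$). Therefore the $K[\term{x}]$-linear independence of the $e_i$ forces each coordinate $G_i(f_1, \dotsc, f_m) \in K[\term{x}]$ to vanish. Picking any $i$ with $G_i \ne 0$ and setting $G := G_i$ gives the required annihilating polynomial of degree at most $\deg(F)$.

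The ``in particular'' statement then follows immediately: if $f_1, \dotsc, f_m$ are algebraically dependent over $L$, the construction above produces a nonzero annihilator over $K$, so they are dependent over $K$; the converse direction is trivial since $K[\term{y}] \subseteq L[\term{y}]$. I do not expect a real obstacle here: the only nontrivial ingredient is the freeness of $M[\term{x}]$ over $K[\term{x}]$, which is a standard base-change fact and the natural way to transfer relations from $L$ down to $K$ without inflating the degree.
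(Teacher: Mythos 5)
Your proof is correct and follows essentially the same route as the paper's: both reduce to a finite extension generated by the coefficients of $F$, expand $F$ against a $K$-basis $e_1,\dotsc,e_d$, substitute $f_1,\dotsc,f_m$, and conclude that each component vanishes by $K$-linear independence of the basis (which you phrase, equivalently and a bit more cleanly, as freeness of $M[\term{x}]$ as a $K[\term{x}]$-module).
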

\begin{proof}
	Let $F \in L[\term{y}]$ be a non-zero polynomial such that $F(f_1, \dotsc, f_m)=0$.
	Denote by $c_1, \dotsc, c_{\ell} \in L$ the non-zero coefficients of $F$. Replacing
	$L$ by $K(c_1, \dotsc, c_{\ell})$, we may assume that $L/K$ is algebraic and finitely
	generated (as a field) over $K$. By \cite[Chapter V, \S 1, Proposition 1.6]{bib:Lan02},
	this implies that $[L:K] =: d < \infty$. Let $b_1, \dotsc, b_d \in L$ be a $K$-basis of $L$.
	Then we can write $F$ as
	\[
		F = F_1 \cdot b_1 + \dotsb + F_d \cdot b_d
	\]
	for some $F_1, \dotsc, F_d \in K[\term{y}]$, not all zero, such that $\deg(F_i) \le \deg(F)$
	for all $i = 1, \dotsc, d$. Substituting $f_1, \dotsc, f_m$, we obtain
	\[
		0 = F(f_1, \dotsc, f_m) = F_1(f_1, \dotsc, f_m) \cdot b_1 + \dotsb + F_d(f_1, \dotsc, f_m) \cdot b_d .
	\]
	The $K$-linear independence of $b_1, \dotsc, b_d$ implies that all coefficients of
	\[
		F_i(f_1, \dotsc, f_m) \in K[\term{x}]
	\]
	are zero for $i = 1, \dotsc, d$.
	(Here we use that the indeterminates $x_1, \dotsc, x_n$ are $L$-linearly independent,
	because $L/K$ is algebraic.)
	Therefore, some non-zero $F_i$ yields a $G \in K[\term{y}]$ with the desired properties. 
\end{proof}

\noindent {\bf Corollary \ref{cor:AnnihilatingPolynomialDegreeBound}.}
	Let $f_1, \dotsc, f_m \in K[\term{x}]$ be algebraically dependent polynomials of 
	maximal degree $\delta$ and $\trdeg$ $r$.
	Then there exists a non-zero polynomial $F \in K[y_1, \dotsc, y_m]$ of degree at most $\delta^r$ such that
	$F(f_1, \dotsc, f_m) = 0$.
\begin{proof}[Proof of Corollary \ref{cor:AnnihilatingPolynomialDegreeBound}]
	By Lemma \ref{lem:AnnihilatingPolynomialFieldExtension}, we may assume wlog that $K$ is infinite.
	Furthermore, we may assume that $m = r+1$ and $f_1, \dotsc, f_r$ are algebraically independent.
	Let $F \in K[\term{y}] = K[y_1, \dotsc, y_{r+1}]$ be a non-zero {\em irreducible} polynomial such that
	$F(f_1, \dotsc, f_{r+1}) = 0$. By Lemma \ref{lem:FaithfulHomomorphismExistence}, there exists a linear 
	$K$-algebra homomorphism 
	\[
		\varphi: K[\term{x}] \rightarrow K[\term{z}] = K[z_1, \dotsc, z_r]
	\]
	which is faithful to $\{f_1, \dotsc, f_{r+1}\}$. Set $g_i := \varphi(f_i) \in K[\term{z}]$ for $i = 1, \dotsc, r+1$. 
	Then $g_1, \dotsc, g_{r+1}$ are of degree at most $\delta$ and by Theorem \ref{thm:PerronsTheorem} 
	there exists a non-zero polynomial $G \in K[\term{y}]$ such that $G(g_1, \dotsc, g_{r+1}) = 0$
	and $\deg(G) \le \delta^r$. But since
	\[
		F(g_1, \dotsc, g_{r+1}) = F(\varphi(f_1), \dotsc, \varphi(f_{r+1})) = \varphi(F(f_1, \dotsc, f_{r+1})) = 0 ,
	\]
	Lemma \ref{lem:RelationIdealPrincipal} implies that $F$ divides $G$. Hence, $\deg(F) \le \deg(G) \le \delta^r$. 
\end{proof}

\subsection{Proofs for Sect. \ref{sec:JacobianCriterion}: Jacobi's criterion} \label{app:JacobianCriterion}

In the proof of the Jacobian criterion we will make use of the following facts about
partial derivatives. Let $f \in K[\term{x}]$. First assume that $\ch(K) = 0$. Then,
for $i \in [n]$, we have
\[
	\partial_{x_i} f = 0 \qquad\text{if and only if}\qquad f \in K[x_1, \dotsc, x_{i-1}, x_{i+1}, \dotsc, x_n] .
\]
Therefore, we have $\partial_{x_i}(f) = 0$ for all $i = 1, \dotsc, n$ if and only if $f = 0$.
Now assume $\ch(K) = p > 0$. Then, for $i \in [n]$, we have
\[
	\partial_{x_i} f = 0 \qquad\text{if and only if}\qquad f \in K[x_1, \dotsc, x_{i-1}, x_i^p, x_{i+1}, \dotsc, x_n] .
\]
Hence, $\partial_{x_i} f = 0$ for all $i = 1, \dotsc, n$ if and only if $f \in K[x_1^p, \dotsc, x_n^p]$. 
If, in addition, $K$ is a perfect field (in characteristic $p$ this means that every element of $K$ is
a $p$-th power), then we have $\partial_{x_i} f = 0$ for all $i = 1, \dotsc, n$ if and only if
$f = g^p$ for some $g \in K[\term{x}]$. An example of a perfect field is the algebraic closure $\overline{K}$
of $K$.

Now let $K$ be an arbitrary field, let $f_1, \dotsc, f_m \in K[\term{x}]$ and let $F_1, \dotsc, F_s \in K[\term{y}]$. 
Then, by the {\em chain rule}, we have
\begin{multline*}
	J_{\term{x}}(F_1(f_1, \dotsc, f_m), \dotsc, F_s(f_1, \dotsc, f_m)) \\
	= \bigl(J_{\term{y}}(F_1, \dotsc, F_s)\bigr)(f_1, \dotsc, f_m) \cdot J_{\term{x}}(f_1, \dotsc, f_m) .
\end{multline*}
Now we are prepared to proceed with the proofs. \bigskip

\noindent {\bf Lemma \ref{lem:JacobianCriterion}.}
	Let $f_1, \dotsc, f_m \in K[\term{x}]$. Then $\trdeg_K\{f_1, \dotsc,f_m\}\ge$ 
	$\rk_{L} J_{\term{x}}(f_1, \dotsc$, $f_m)$, where $L = K(\term{x})$.

\begin{proof}[Proof of Lemma \ref{lem:JacobianCriterion}]
	Let $r = \rk_{L} J_{\term{x}}(f_1, \dotsc, f_m)$. We may assume that the first $r$ rows of
	$J(f_1, \dotsc, f_m)$ are $L$-linearly independent.
	Assume, for the sake of contradiction, that $f_1, \dotsc, f_r$ are algebraically dependent.
	Choose a non-zero polynomial $F \in K[\term{y}] = K[y_1, \dotsc, y_r]$ of minimal degree such that
	$F(f_1, \dotsc, f_r) = 0$. Differentiating with respect to $x_1, \dotsc, x_n$ using the chain
	rule yields the vector-matrix equation
	\[
		\begin{pmatrix}
			(\partial_{y_1}F)(f_1, \dotsc, f_r), & \dotsc, & (\partial_{y_r}F)(f_1, \dotsc, f_r)
		\end{pmatrix} \cdot
		\begin{pmatrix}
			\partial_{x_1} f_1 & \cdots & \partial_{x_n} f_1 \\
			\vdots & & \vdots \\
			\partial_{x_1} f_r & \cdots & \partial_{x_n} f_r
		\end{pmatrix}  = 0 .
	\]
	Since this matrix has rank $r$ over $L$, it follows that $(\partial_{y_i}F)(f_1, \dotsc, f_r) = 0$
	for all $i = 1, \dotsc, r$. Since the degree of $F$ was chosen to be minimal, it follows that
	$\partial_{y_i} F = 0$ for all $i = 1, \dotsc, r$. If $\ch(K) = 0$, this implies $F = 0$, a
	contradiction. If $\ch(K) = p > 0$, this implies $F \in K[y_1^p, \dotsc, y_r^p]$. Since $\overline{K}$
	is perfect and $F \neq 0$, there is a non-zero $G \in \overline{K}[\term{y}]$ such that $F = G^p$. 
	From
	\[
		0 = F(f_1, \dotsc, f_r) = G(f_1, \dotsc, f_r)^p
	\]
	wee see that $G(f_1, \dotsc, f_r) = 0$. By Lemma \ref{lem:AnnihilatingPolynomialFieldExtension}, there exists
	a non-zero $G' \in K[\term{y}]$ such that $G'(f_1, \dotsc, f_r) = 0$ and $\deg(G') \le \deg(G) < \deg(F)$.
	This contradicts the choice of $F$. 
	Therefore, $f_1, \dotsc, f_r$ are algebraically independent, hence $\trdeg(\{f_1, \dotsc, f_m\}) \ge r$. 
\end{proof}

\noindent {\bf Theorem \ref{thm:JacobianCriterion}.}
	Let $f_1, \dotsc, f_m \in K[\term{x}]$ be polynomials of degree at most $\delta$ and $\trdeg$ $r$. 
	Assume that $\ch(K) = 0$ or $\ch(K) > \delta^r$. Then $\rk_{L} J_{\term{x}}(f_1, \dotsc, f_m)=
	\trdeg_K\{f_1, \dotsc,f_m\}$,
	where $L = K(\term{x})$.

\begin{proof}[Proof of Theorem \ref{thm:JacobianCriterion}]
	Let $r = \trdeg\{f_1, \dotsc, f_m\}$. By Lemma \ref{lem:JacobianCriterion}, we have
	\[
		r \ge \rk_L J(f_1, \dotsc, f_m) ,
	\]
	so it remains to show the converse inequality.
	
	After renumbering $f_1, \dotsc, f_m$ and $x_1, \dotsc, x_n$, we may assume that the polynomials
	$f_1, \dotsc, f_r$, $x_{r+1}, \dotsc, x_n$ are algebraically independent.
	Consequently, for $i = 1, \dotsc, n$, there exist non-zero polynomials $F_i \in K[y_0, \dotsc, y_n]$
	of minimal degree such that $\deg_{y_0}(F_i) > 0$ and
	\begin{equation} \label{eqn:ProofJacobianCriterion}
		F_i(x_i, f_1, \dotsc, f_r, x_{r+1}, \dotsc, x_n) = 0 .
	\end{equation}
	By Theorem \ref{thm:PerronsTheorem} (with $(n-r+1)$ of the $\delta_i$'s being $1$), 
	we have $\deg(F_i) \le \delta^r$. 
	Hence, by the assumptions on $\ch(K)$, 
	we have $\partial_{y_0} F_i \neq 0$. Since the degree of $F_i$ was chosen to be minimal, we have
	\[
		(\partial_{y_0} F_i)(x_i, f_1, \dotsc, f_r, x_{r+1}, \dotsc, x_n) \neq 0 .
	\]
	Denote
	\[
		G_{i,j} := (\partial_{y_j} F_i)(x_i, f_1, \dotsc, f_r, x_{r+1}, \dotsc, x_n)
	\]
	for $j = 0, \dotsc, n$. Differentiating equation \eqref{eqn:ProofJacobianCriterion} with respect to $x_k$ 
	using the chain rule yields
	\[
		G_{i,0} \cdot \delta_{i,k} + \sum_{j=1}^r G_{i,j} \cdot \partial_{x_k} f_j + \sum_{j=r+1}^n G_{i,j} \cdot \delta_{j,k} = 0
	\]
	for $k = 1, \dotsc, n$. Since $G_{i,0} \neq 0$, this can be rewritten as
	\[
		\sum_{j=1}^r \frac{-G_{i,j}}{G_{i,0}} \cdot \partial_{x_k} f_j + 
		\sum_{j=r+1}^n \frac{-G_{i,j}}{G_{i,0}} \cdot \delta_{j,k} = \delta_{i,k} .
	\]
	This shows that the block diagonal matrix
	\[
		\begin{pmatrix}
			\partial_{x_1} f_1 & \cdots & \partial_{x_r} f_1 & & & \\
			\vdots & & \vdots & & & \\
			\partial_{x_1} f_r & \cdots & \partial_{x_r} f_r & & & \\
			& & & \enspace 1 \enspace & & \\
			& & & & \ddots & \\
			& & & & & \enspace 1 \enspace
		\end{pmatrix} \in L^{n \times n}
	\]
	is invertible. Therefore, the first $r$ rows of $J(f_1, \dotsc, f_m)$ are $L$-linearly independent
	and hence $r \le \rk_L J(f_1, \dotsc, f_m)$. 
\end{proof}			

\subsection{Proofs for Sect. \ref{sec:AffineAlgebrasKrullDimension}: Krull dimension} \label{app:AffineAlgebrasKrullDimension}

\noindent {\bf Corollary \ref{cor:DimNonIncreasing}.}
	Let $A,B$ be $K$-algebras and let $\varphi:A \rightarrow B$ be a 
	$K$-algebra homomorphism. If $A$ is an affine algebra, then so is $\varphi(A)$ and we have 
	$\dim(\varphi(A)) \le \dim(A)$. 
	If, in addition, $\varphi$ is injective, then $\dim(\varphi(A)) = \dim(A)$.
\begin{proof}[Proof of Corollary \ref{cor:DimNonIncreasing}]
	Since $A$ is an affine algebra, there exist $a_1, \dotsc, a_m \in A$ such that $A = K[a_1, \dotsc, a_m]$.
	Then $\varphi(A) = K[\varphi(a_1), \dotsc, \varphi(a_m)]$ is finitely generated as a $K$-algebra as well.
	
	Now assume for the sake of contradiction that $d := \dim(\varphi(A)) > \dim(A)$. By Theorem 
	\ref{thm:DimEqualsTrdeg}, there exist $a_1, \dotsc, a_d \in A$ such that 
	$\varphi(a_1), \dotsc, \varphi(a_d)$ are algebraically independent.
	Since $d > \dim(A)$, the elements $a_1, \dotsc, a_d$ are algebraically dependent. Hence, there
	exists a non-zero polynomial $F \in K[y_1, \dotsc, y_d]$ such that $F(a_1, \dotsc, a_d) = 0$.
	It follows that
	\[
		0 = \varphi(F(a_1, \dotsc, a_d)) = F(\varphi(a_1), \dotsc, \varphi(a_d))
	\]
	and this implies that $\varphi(a_1), \dotsc, \varphi(a_d)$ are algebraically dependent, a contradiction.
	Therefore, $\dim(\varphi(A)) \le \dim(A)$.
	
	Now let $\varphi$ be injective, let $d := \dim(A)$ and let $a_1, \dotsc, a_d \in A$ be algebraically independent.
	Assume for the sake of contradiction that $\varphi(a_1), \dotsc, \varphi(a_d)$ are algebraically dependent.
	Then there exists a non-zero polynomial $F \in K[y_1, \dotsc, y_d]$ such that $F(\varphi(a_1), \dotsc, \varphi(a_d)) = 0$.
	From
	\[
		0 = F(\varphi(a_1), \dotsc, \varphi(a_d)) = \varphi(F(a_1, \dotsc, a_d))
	\]
	we see that $F(a_1, \dotsc, a_d) = 0$, because $\varphi$ is injective. But this means that $a_1, \dotsc, a_d$ are
	algebraically dependent, a contradiction. Thus $\dim(\varphi(A)) \ge \dim(A)$. 
\end{proof}

\section{Proofs for Sect. \ref{sec:FaithfulHomomorphisms}: Faithful homomorphisms} \label{app:FaithfulHomomorphisms}

Let $\PP$ denote the set of prime numbers and $\sparse(f)$ denote the {\em sparsity} of a polynomial $f$.

In the proofs of Lemmas \ref{lem:ProjectionHomomorphismFaithful}, \ref{lem:VandermondeHomomorphismFaithful}
and \ref{lem:PreservingSimplePart} we will use the following well-known facts.

\begin{lemma}[Sparse PIT] \label{lem:SparsePIT}
	Let $\ell \ge 1$ and $d \ge 2$. Let $R$ be a commutative ring and let $f \in R[t]$ be a non-zero polynomial
	of sparsity at most $\ell$ and degree at most $d$. Then there are at most $\ell \cdot \log_2(d)-1$ prime
	numbers $p$ such that $f = 0 \pmod{\langle t^p-1 \rangle_{R[t]}}$.
\end{lemma}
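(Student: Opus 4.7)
The plan is to write $f = \sum_{i=1}^{\ell} c_i t^{a_i}$ with the coefficients $c_i \in R$ all nonzero and the integer exponents strictly ordered $0 \le a_1 < a_2 < \dotsb < a_{\ell} \le d$, and then analyze the reduction modulo $\langle t^p - 1\rangle_{R[t]}$. Because $t^p \equiv 1$ in the quotient, this reduction sends $t^a \mapsto t^{a \bmod p}$, and regrouping monomials by residue gives
\[
	f \bmod \langle t^p - 1\rangle \;=\; \sum_{r=0}^{p-1}\Bigl( \sum_{i:\, p \mid (a_i - r)} c_i \Bigr)\, t^r .
\]
Since the right-hand side already has degree strictly less than $p$, it is the canonical representative of the class, so $f \equiv 0 \pmod{\langle t^p - 1\rangle}$ is equivalent to every inner coefficient sum vanishing in $R$.

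The central observation will be that the top exponent $a_{\ell}$ cannot lie alone in its residue class modulo a bad prime $p$: otherwise its class sum would be just $c_{\ell} \neq 0$, contradicting the vanishing above. Hence for every bad $p$ there exists some $i \in \{1, \dotsc, \ell - 1\}$ with $p \mid (a_{\ell} - a_i)$, and consequently $p$ divides the strictly positive integer
\[
	N \;:=\; \prod_{i=1}^{\ell - 1} (a_{\ell} - a_i) .
\]

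Each factor of $N$ lies in $\{1, \dotsc, d\}$, so $1 \le N \le d^{\ell - 1}$, and any positive integer has at most $\log_2 N$ distinct prime divisors. This gives a bound of $(\ell - 1)\log_2 d$ bad primes, which is at most $\ell \log_2(d) - 1$ because $d \ge 2$ implies $\log_2 d \ge 1$. The argument relies only on each $c_i$ being individually nonzero in $R$, which is exactly the definition of sparsity, so no hypothesis on $R$ beyond being a commutative ring is invoked. I do not anticipate a serious obstacle here; the one subtlety is confirming that the displayed reduction really is the unique degree-$(<p)$ representative, which is immediate from the quotient construction.
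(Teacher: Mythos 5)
Your proof is correct. The paper itself only cites \cite[Lemma 13]{BHLV09} and observes that the argument goes through over a commutative ring, so it gives no in-line proof to compare against; but your argument is precisely the standard one behind such statements: reduce exponents modulo $p$, observe that a bad prime forces the largest exponent to collide with a smaller one, and then bound the number of prime divisors of $N=\prod_{i<\ell}(a_\ell-a_i)\le d^{\ell-1}$ by $\log_2 N$. The arithmetic at the end, $(\ell-1)\log_2 d\le \ell\log_2 d-1$ using $d\ge 2$, is right, and you correctly flag that the only property of $R$ used is that the individual coefficients are nonzero (which is what sparsity $\le \ell$ with strictly ordered exponents means), so the ring setting is harmless. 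One tiny point worth a sentence in a final write-up: the hypothesis is sparsity \emph{at most} $\ell$, so strictly speaking $f$ has $\ell'\le \ell$ terms; the product then has $\ell'-1$ factors and the bound only improves, so nothing breaks.
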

\begin{proof}
	Cf. \cite[Lemma 13]{BHLV09} and note that the given proof also works for polynomials over a ring
	(instead of a field). 
\end{proof}

\begin{lemma}[Primes] \label{lem:NumberOfPrimes}
	Let $r\in\R^{\ge2}$. Then the interval $[1,r^2+1]$ contains at least $\lceil r\rceil$ prime numbers.
\end{lemma}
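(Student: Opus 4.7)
The plan is to reformulate the claim in terms of the $k$-th prime $p_k$: writing $k := \lceil r \rceil \ge 2$, the desired inequality $\pi(r^2+1) \ge k$ is equivalent to $p_k \le r^2 + 1$. Since $\pi$ is non-decreasing and $r$ satisfies both $r > k-1$ and $r \ge 2$, we have $r^2 + 1 > (k-1)^2 + 1$ for $k \ge 3$, while the case $k = 2$ forces $r = 2$ exactly. It therefore suffices to verify $p_2 = 3 \le 5$ (disposing of $k = 2$) and $p_k \le (k-1)^2 + 1$ for all $k \ge 3$.

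For the main inequality $p_k \le (k-1)^2 + 1$, I would invoke a standard Chebyshev-type estimate, for instance $\pi(n) \ge (\ln 2)\, n / \ln n$ valid for $n \ge 2$; equivalently, $p_k \le C k \ln k$ for an absolute constant $C$. Since $C k \ln k = o(k^2)$, the inequality $p_k \le (k-1)^2 + 1$ holds automatically for all $k$ beyond some small explicit threshold $k_0$. A short manual calculation pins down $k_0$: using Rosser's sharper bound $p_k < k(\ln k + \ln \ln k)$ valid for $k \ge 6$, one checks directly that $k(\ln k + \ln \ln k) \le (k-1)^2 + 1$ for every $k \ge 6$, since then $\ln k + \ln \ln k \le k - 2 + 2/k$.

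The finitely many small cases $k \in \{3, 4, 5\}$ are settled by inspecting the initial primes: $p_3 = 5 \le 5$, $p_4 = 7 \le 10$, and $p_5 = 11 \le 17$. No delicate estimate is needed, as the lemma is far weaker than the prime number theorem; the only mild subtlety is handling the boundary case $r = 2$ separately (to avoid the vacuous range $r \in (1, 2)$), and this is already absorbed into the reduction step.
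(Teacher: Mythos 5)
Your proof is correct. The reduction to showing $p_k \le (k-1)^2+1$ for $k = \lceil r\rceil \ge 3$ (with the boundary case $k=2$, which forces $r=2$, handled separately) is sound, the calibration via Rosser's upper bound $p_k < k(\ln k + \ln\ln k)$ for $k \ge 6$ does give $\ln k + \ln\ln k \le k - 2 + 2/k$ in that range, and the three remaining cases $k\in\{3,4,5\}$ check out numerically.

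The paper itself does not give a proof; it only cites a claim from Papadimitriou's textbook, which establishes a bound of this flavor by an elementary counting argument. By contrast you reach for Rosser's explicit estimate, which is a considerably sharper (and harder-to-prove) tool than the lemma needs. The statement here is only that $p_k = O(k^2)$, which already follows from any Chebyshev-type bound $\pi(x) \gg x/\log x$, or even from fully elementary arguments (e.g., Erd\H{o}s's proof via the central binomial coefficient, or the square-free/square decomposition giving $\pi(n) \ge \tfrac{1}{2}\log_2 n$ combined with a bit more care). Invoking Rosser is logically fine and keeps the finite check small, but if self-containedness or elementariness matters, a weaker prime-counting lower bound would suffice and would better match the spirit of the cited source. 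Either way, your argument as written is complete and correct.
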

\begin{proof}
	Cf. \cite[Claim on p. 478]{bib:Pap95}. 
\end{proof}

\subsection{Proofs for Sect. \ref{sec:FaithfulHomomorphismExistence}: A Kronecker-inspired map} \label{app:FaithfulHomomorphismExistence}

\noindent {\bf Lemma \ref{lem:FaithfulHomomorphismExistence}.}
	Let $K$ be an infinite field and let $f_1, \dotsc, f_m \in K[\term{x}]$ be polynomials of 
	$\trdeg$ $r$. Then there exists a linear $K$-algebra homomorphism 
	$\varphi: K[\term{x}] \rightarrow K[\term{z}]$ which is faithful to
	$\{f_1, \dotsc, f_m\}$.
\begin{proof}[Proof of Lemma \ref{lem:FaithfulHomomorphismExistence}]
	After renumbering $f_1, \dotsc, f_m$ and $x_1, \dotsc, x_n$, we may assume that 
	$f_1, \dotsc, f_r$, $x_{r+1}, \dotsc, x_n$ are algebraically independent.
	Consequently, for $i = 1, \dotsc, r$, there exists a non-zero polynomial
	$G_i \in K[y_0, y_1, \dotsc, y_n]$ such that $\deg_{y_0}(G_i) > 0$ and
	\[
		G_i(x_i, f_1, \dotsc, f_r, x_{r+1}, \dotsc, x_n) = 0 .
	\]
	Denote by $g_i \in K[y_1, \dotsc, y_n]$ the (non-zero) leading coefficient of $G_i$
	as a polynomial in $y_0$ with coefficients in $K[y_1, \dotsc, y_n]$.
	The algebraic independence of $f_1, \dotsc, f_r$, $x_{r+1}, \dotsc, x_n$ implies
	\[
		g_i(f_1, \dotsc, f_r, x_{r+1}, \dotsc, x_n) \neq 0 .
	\]
	Since $K$ is infinite, there exist $c_{r+1}, \dotsc, c_n \in K$ such that
	\[
		(g_i(f_1, \dotsc, f_r, x_{r+1}, \dotsc, x_n))(x_1, \dotsc, x_r, c_{r+1}, \dotsc, c_n) \neq 0
	\]
	for all $i = 1, \dotsc, r$. Now define the $K$-algebra homomorphism
	\[
		\varphi: K[\term{x}] \rightarrow K[\term{z}], \qquad x_i \mapsto
		\begin{cases}
			z_i, & \text{if $1 \le i \le r$}, \\
			c_i, & \text{otherwise} .
		\end{cases}
	\]
	Then, by the choice of $c_{r+1}, \dotsc, c_n$, we have
	\[
		G_i(y_0, \varphi(f_1), \dotsc, \varphi(f_r), c_{r+1}, \dotsc, c_n) \neq 0 
	\]
	and
	\[
		G_i(z_i, \varphi(f_1), \dotsc, \varphi(f_r), c_{r+1}, \dotsc, c_n) = 0 
	\]
	for $i = 1, \dotsc, r$.	This shows that $z_i$ is algebraically dependent on 
	$\varphi(f_1), \dotsc, \varphi(f_r)$ for $i = 1, \dotsc, r$. It follows that 
	\[
		\trdeg\{\varphi(f_1), \dotsc, \varphi(f_m)\} = r = \trdeg\{f_1, \dotsc, f_m\} ,
	\]
	hence $\varphi$ is faithful to $\{f_1, \dotsc, f_m\}$. 
\end{proof}

\noindent {\bf Lemma \ref{lem:ProjectionHomomorphismFaithful}.}
	Let $f_1, \dotsc, f_m \in K[\term{x}]$ be polynomials of degree at most $\delta$ and 
	$\trdeg$ at most $r$. Let $D > \delta^{r+1}$.
	Then there exist an index set $I \in \tbinom{[n]}{r}$ and a prime $p\le(n+\delta^r)^{8\delta^{r+1}}(\log_2D)^2+1$
	such that any subset of $\ol{K}$ of size $\delta^r r p$ contains $c$ such that
	$\mathrm\Phi_{I,D,p,c}$ is faithful to $\{f_1, \dotsc, f_m\}$.
\begin{proof}[Proof of Lemma \ref{lem:ProjectionHomomorphismFaithful}]
	We may assume wlog that $\trdeg\{f_1, \dotsc, f_m\} = r$ and,
	after renumbering $f_1, \dotsc, f_m$, that 
	\[
		f_1, \dotsc, f_r, x_{j_{r+1}}, \dotsc, x_{j_n}
	\]
	are algebraically independent for some $j_{r+1}, \dotsc, j_n \in [n]$ with $j_{r+1} < \dotsb < j_n$.
	Denote the complement $[n] \setminus \{j_{r+1}, \dotsc, j_n\}$ by	$I = \{j_1, \dotsc, j_r\}$, where $j_1 < \dotsb < j_r$.
	By Corollary \ref{cor:AnnihilatingPolynomialDegreeBound}, there exists a non-zero polynomial $G_i \in K[y_0, y_1, \dotsc, y_n]$
	such that $\deg(G_i) \le \delta^r$, $\deg_{y_0}(G_i) > 0$ and
	\[
		G_i(x_{j_i}, f_1, \dotsc, f_r, x_{j_{r+1}}, \dotsc, x_{j_n}) = 0
	\]
	for $i = 1, \dotsc, r$.	Denote by $g_i \in K[y_1, \dotsc, y_n]$ the (non-zero) leading coefficient of 
	$G_i$ as a polynomial $y_0$ with coefficients in $K[y_1, \dotsc, y_n]$. The algebraic independence of
	$f_1, \dotsc, f_r$, $x_{j_{r+1}}, \dotsc, x_{j_n}$ implies
	\[
		g_i(f_1, \dotsc, f_r, x_{j_{r+1}}, \dotsc, x_{j_n}) \neq 0 .
	\]
	We have
	\[
		\deg\bigl( g_i(f_1, \dotsc, f_r, x_{j_{r+1}}, \dotsc, x_{j_n}) \bigr) \le \delta^{r+1} < D .
	\]
	Therefore, the polynomial
	\[
		h_i := g_i(\mathrm\Phi_{I,D}(f_1), \dotsc, \mathrm\Phi_{I,D}(f_r), 
		\mathrm\Phi_{I,D}(x_{j_{r+1}}), \dotsc, \mathrm\Phi_{I,D}(x_{j_n})) \in K[t, \term{z}]
	\]
	is non-zero (this is the classical Kronecker substitution: $D$ is so large that the monomials remain
	separated). We have
	\[
		\deg_t(h_i) \le \delta^{r+1} \cdot (D + D^2 + \dotsb + D^{n-r}) \le D^{n+1} .
	\]
	Also, the sparsity of $h_i$ (short, $\sparse$) can be bounded as: 
	\begin{align*}
		 \sparse(h_i) &= \sparse\bigl( g_i(f_1, \dotsc, f_r, x_{j_{r+1}}, \dotsc, x_{j_n}) \bigr) \\
		 &\le \sparse(g_i) \cdot \max\{ \sparse(f_1), \dotsc, \sparse(f_r)\}^{\deg(g_i)} \\
		 &\le \binom{n+\delta^r}{\delta^r} \cdot \binom{n+\delta}{\delta}^{\delta^r} \\
		 &\le (n+\delta^r)^{\delta^r} \cdot (n+\delta)^{\delta^{r+1}} .
	\end{align*}
	Let $B_i \subseteq \PP$ be the set of all primes $p$ satisfying
	$h_i = 0 \pmod{\langle t^p-1 \rangle_{K[t, \term{z}]}}$. Then
	$\abs{B_i} < (n+1)(n+\delta^r)^{\delta^r}(n+\delta)^{\delta^{r+1}}\log_2D$ by Lemma \ref{lem:SparsePIT}.
	Finally set $B := B_1 \cup \dotsb \cup B_r$. Then
	\[
		\abs{B} < r(n+1)(n+\delta^r)^{\delta^r}(n+\delta)^{\delta^{r+1}}\log_2D \le (n+\delta^r)^{4\delta^{r+1}}\log_2D .
	\]
	Now pick a suitable prime $p \in \PP \setminus B$ (by Lemma \ref{lem:NumberOfPrimes}). 
    Let $i \in [r]$. Then $h_i \neq 0 \pmod{\langle t^p-1 \rangle_{K[t, \term{z}]}}$. Define
	\[
		h_i^{(p)} := g_i(\mathrm\Phi_{I,D,p}(f_1), \dotsc, \mathrm\Phi_{I,D,p}(f_r), 
		\mathrm\Phi_{I,D,p}(x_{j_{r+1}}), \dotsc, \mathrm\Phi_{I,D,p}(x_{j_n})) \in K[t, \term{z}] .
	\]
	Since $h_i^{(p)} = h_i \neq 0 \pmod{\langle t^p-1 \rangle_{K[t, \term{z}]}}$, we have $h_i^{(p)} \neq 0$.
	Let $S_i \subset \ol{K}$ be the set of all $c \in \ol{K}$ such that $h_i^{(p)}(c, \term{z}) = 0$.
	Then $\abs{S_i} \le \deg_t(h_i^{(p)}) < \delta^r p$. Finally set $S := S_1 \cup \dotsb \cup S_r$. Then
	$\abs{S} < r \delta^r p$. 
	
	Now let $i \in [r]$ and $c \in \ol{K} \setminus S$. Then
	\[
		G_i\bigl(y_0, \mathrm\Phi_{I,D,p,c}(f_1), \dotsc, \mathrm\Phi_{I,D,p,c}(f_r), 
		c^{\lfloor D^1 \rfloor_p}, \dotsc, c^{\lfloor D^{n-r} \rfloor_p} \bigr) \neq 0 ,
	\]
	because $h_i^{(p)}(c, \term{z}) \neq 0$, and
	\[
		G_i\bigl(z_i, \mathrm\Phi_{I,D,p,c}(f_1), \dotsc, \mathrm\Phi_{I,D,p,c}(f_r), 
		c^{\lfloor D^1 \rfloor_p}, \dotsc, c^{\lfloor D^{n-r} \rfloor_p} \bigr) = 0 .
	\]
	This shows that	$z_i$ is algebraically dependent on $\mathrm\Phi_{I,D,p,c}(f_1), \dotsc, \mathrm\Phi_{I,D,p,c}(f_r)$
	for $i = 1, \dotsc, r$. It follows that
	\[
		\trdeg\{\mathrm\Phi_{I,D,p,c}(f_1), \dotsc, \mathrm\Phi_{I,D,p,c}(f_m)\} = r = \trdeg\{f_1, \dotsc, f_m\}
	\]
	for all $c \in \ol{K} \setminus S$. 
\end{proof}

\subsection{Proofs for Section \ref{sec:VandermondeHomomorphism}: A Vandermonde-inspired map} \label{app:VandermondeHomomorphism}

\noindent {\bf Lemma \ref{lem:VandermondeHomomorphismFaithful}.}
	Let $f_1, \dotsc, f_m \in K[\term{x}]$ be polynomials of sparsity at most $\ell$, degree at most $\delta$
	and $\trdeg$ at most $r$. Assume that $\ch(K)=0$ or $\ch(K) > \delta^r$. Let $D = (D_1, D_2)$ such that
	$D_1 \ge \max\{\delta r+1, (n+1)^{r+1}\}$ and $D_2 \ge 2$. 	
	Then there exists a prime $p\le (2nr\ell)^{2(r+1)} (\log_2 D_1)^2+1$ such that any subset of $\ol{K}$ of size 
  $\delta r p$ contains $c$ such that
	$\mathrm\Psi_{D,p,c}$ is faithful to $\{f_1, \dotsc, f_m\}$.
\begin{proof}[Proof of Lemma \ref{lem:VandermondeHomomorphismFaithful}]
	Let $s := \trdeg\{f_1, \dotsc, f_m\} \le r$ and let $i_1, \dotsc, i_s \in [m]$ such that
	$f_{i_1}, \dotsc, f_{i_s}$ are algebraically independent. By the chain rule, we have
	\begin{multline} \label{eqn:ProofVandermondeHomomorphismFaithful1}
		J_{z_1, \dotsc, z_s}(\mathrm\Psi_D(f_{i_1}), \dotsc, \mathrm\Psi_D(f_{i_s})) \\
		= \bigl( J_{\term{x}}(f_{i_1}, \dotsc, f_{i_s}) \bigr)(\mathrm\Psi_D(x_1), \dotsc, \mathrm\Psi_D(x_n)) 
		\cdot J_{z_1, \dotsc, z_s}(\mathrm\Psi_D(x_1), \dotsc, \mathrm\Psi_D(x_n)).
	\end{multline}
	We introduce some notation. Define the polynomial
	\[
		f' := \det J_{z_1, \dotsc, z_s}(\mathrm\Psi_D(f_{i_1}), \dotsc, \mathrm\Psi_D(f_{i_s})) \in K[t, \term{z}]
	\]
	and set $f := f'(t, 0, \dotsc, 0) \in K[t]$. For an index set $I = \{j_1, \dotsc, j_s\} \in \tbinom{[n]}{s}$ with 
	$j_1 < \dotsb < j_s$, denote
	\[
		g'_I := \bigl( \det J_{x_{j_1}, \dotsc, x_{j_s}}(f_{i_1}, \dotsc, f_{i_s}) \bigr)(\mathrm\Psi_D(x_1), \dotsc, \mathrm\Psi_D(x_n)) 
		\in K[t, \term{z}]
	\]
	and
	\[
		h'_I := \det J_{z_1, \dotsc, z_s}(\mathrm\Psi_D(x_{j_1}), \dotsc, \mathrm\Psi_D(x_{j_s})) \in K[t, \term{z}] ,
	\]
	and set $g_I := g'_I(t, 0 \dotsc, 0) \in K[t]$ and $h_I := h'_I(t, 0, \dotsc, 0) \in K[t]$.
	Applying the Cauchy-Binet formula (cf. \cite{bib:Zen93}) to \eqref{eqn:ProofVandermondeHomomorphismFaithful1}
	and substituting $(t, 0, \dotsc, 0)$ for $(t, z_0, \dotsc, z_r)$, we obtain
	\begin{equation} \label{eqn:ProofVandermondeHomomorphismFaithful2}
		f = \sum_{I \in \mathcal{I}} g_I \cdot h_I ,
	\end{equation}
	where $\mathcal{I} := \{ I \in \tbinom{[n]}{s} \,\vert\; g_I \neq 0 \}$. We want to prove that $f \neq 0$.
	It suffices to show that there is a unique $I \in \mathcal{I}$ for which $\deg(g_I \cdot h_I)$ is maximal.
	
	First we show that $\mathcal{I} \neq \varnothing$. Since $f_{i_1}, \dotsc, f_{i_s}$ are algebraically independent,
	there exists $I = \{j_1, \dotsc, j_s\} \in \tbinom{[n]}{s}$ with $j_1 < \dotsb < j_s$ such that
	\[
		\det J_{x_{j_1}, \dotsc, x_{j_s}}(f_{i_1}, \dotsc, f_{i_s}) \neq 0
	\]
	by Theorem \ref{thm:JacobianCriterion}. We have
	\[
		\deg\bigl( \det J_{x_{j_1}, \dotsc, x_{j_s}}(f_{i_1}, \dotsc, f_{i_s}) \bigr) \le \delta s \le \delta r .
	\]
	Since $D \ge \delta r+1$, it follows that $g_I \neq 0$ (this is the classical Kronecker substitution: 
	$D$ is so large that the monomials remain separated), hence $I \in \mathcal{I}$.
	
	Next we want to show that $h_I \neq 0$ and $\deg(h_I) < D$ for all $I \in \tbinom{[n]}{s}$,
	and we want to show that $\deg(h_I) \neq \deg(h_{I'})$ for all $I, I' \in \tbinom{[n]}{s}$ with $I \neq I'$.
	To this end, let $I = \{j_1, \dotsc, j_s\} \in \tbinom{[n]}{s}$ with $j_1 < \dotsb < j_s$. Then
	\[
		h_I = \det\begin{pmatrix}
			t^{j_1(n+1)^1} & \cdots & t^{j_1(n+1)^s} \\
			\vdots & & \vdots \\
			t^{j_s(n+1)^1} & \cdots & t^{j_s(n+1)^s}
		\end{pmatrix}
		= \sum_{\sigma \in \mathfrak{S}_s} \sgn(\sigma) \cdot t^{d_{\sigma}} ,
	\]
	where $\mathfrak{S}_s$ denotes the symmetric group on $\{1, \dotsc, s\}$
	and 
	\[
		d_{\sigma} := j_1 (n+1)^{\sigma(1)} + \dotsb + j_s (n+1)^{\sigma(s)} \in \N .
	\]
	It is not hard to show that $d_{\id} > d_{\sigma}$ for all 
	$\sigma \in \mathfrak{S}_s \setminus \{\id\}$. This implies $h_I \neq 0$
	and
	\[
		\deg(h_I) = j_1 (n+1)^1 + \dotsb + j_s (n+1)^s < (n+1)^{s+1} \le (n+1)^{r+1} \le D .
	\]
	From the degree formula it is not hard to deduce that $\deg(h_I) \neq \deg(h_{I'})$ for
	all $I, I' \in \tbinom{[n]}{s}$ with $I \neq I'$.
	
	Now denote by $\mathcal{I}_{\max} \subseteq \mathcal{I}$ the set of all $I \in \mathcal{I}$ such that
	$\deg(g_I)$ is maximal. Let $I \in \mathcal{I}_{\max}$ and let $I' \in \mathcal{I}\setminus\mathcal{I}_{\max}$.
	Observe that, by construction, we have $\deg(g_I)-\deg(g_{I'}) \ge D$.
	Since $\deg(h_{I'}) < D$, it follows that
	\[
		\deg(g_I \cdot h_I) \ge \deg(g_I) \ge \deg(g_{I'}) + D > \deg(g_{I'}) + \deg(h_{I'}) = \deg(g_{I'} \cdot h_{I'}) .
	\]
	Therefore, the summands in \eqref{eqn:ProofVandermondeHomomorphismFaithful2} of maximal degree have
	an index set in $\mathcal{I}_{\max}$.
	
	Finally, let $I \in \mathcal{I}_{\max}$ be the unique index set such that $\deg(h_I)$ is maximal. Then
	$g_I \cdot h_I$ is the unique summand in \eqref{eqn:ProofVandermondeHomomorphismFaithful2}
	of maximal degree. This implies $f \neq 0$, as required.
	
	By \eqref{eqn:ProofVandermondeHomomorphismFaithful2}, we have
	\[
		\sparse(f) \le \binom{n}{s} \cdot (s! \cdot \ell^s) \cdot s! \le (ns \ell)^s \le (nr\ell)^r
	\]
	and
	\[
		\deg(f) \le r\delta \cdot (D_1 + D_1^2 + \dotsb + D_1^n) + (n+1)^{r+1} \le D_1^{n+1} + D_1 \le D_1^{n+2} .
	\]
	Let $B \subseteq \PP$ be the set of all primes $p$ satisfying
	$f = 0 \pmod{\langle t^p-1 \rangle_{K[t]}}$. Then
	\[
		\abs{B} < (n+2) (nr\ell)^r \log_2 D_1 \le (2nr\ell)^{r+1} \log_2 D_1
	\]
	by Lemma \ref{lem:SparsePIT}.
	
	Now pick a suitable prime $p \in \PP \setminus B$ (by Lemma \ref{lem:NumberOfPrimes}). Then $f \neq 0 \pmod{\langle t^p-1 \rangle_{K[t]}}$.
	This implies $f' \neq 0 \pmod{\langle t^p-1 \rangle_{K[t,\term{z}]}}$. Define
	\[
		f^{(p)} := \det J_{z_1, \dotsc, z_s}(\mathrm\Psi_{D,p}(f_{i_1}), \dotsc, \mathrm\Psi_{D,p}(f_{i_s})) \in K[t, \term{z}] .
	\]
	Since $f^{(p)} = f' \neq 0 \pmod{\langle t^p-1 \rangle_{K[t,\term{z}]}}$, we have $f^{(p)} \neq 0$.
	Let $S \subset \ol{K}$ be the set of all $c \in \ol{K}$ such that $f^{(p)}(c, \term{z}) = 0$.
	Then $\abs{S} \le \deg_t(f^{(p)}) < \delta s p \le \delta r p$. Now let $c \in \ol{K} \setminus S$. Then
	\[
		\det J_{z_1, \dotsc, z_s}(\mathrm\Psi_{D,p,c}(f_{i_1}), \dotsc, \mathrm\Psi_{D,p,c}(f_{i_s})) = f^{(p)}(c, \term{z}) \neq 0 .
	\]
	By Theorem \ref{thm:JacobianCriterion}, this means that $\mathrm\Psi_{D,p,c}(f_{i_1}), \dotsc, \mathrm\Psi_{D,p,c}(f_{i_s})$
	are algebraically independent, hence
	\[
		\trdeg\{\mathrm\Psi_{D,p,c}(f_1), \dotsc, \mathrm\Psi_{D,p,c}(f_m)\} = s = \trdeg\{f_1, \dotsc, f_m\}
	\]
	for all $c \in \ol{K} \setminus S$. 
\end{proof}

\section{Proofs for Sect. \ref{sec:CircuitsWithSparseSubcircuits}: Proving Theorem \ref{thm:main1}} \label{app:CircuitsWithSparseSubcircuits}

\subsection{Proofs for Sect. \ref{sec:CircuitsWithSparseSubcircuitsCharZero}: A hitting set} 
\label{app:CircuitsWithSparseSubcircuitsCharZero}

\noindent {\bf Theorem \ref{thm:CircuitsWithSparseSubcircuitsHittingSetCharZero}.}
	Assume that $\ch(K) = 0$ or $\ch(K) > \delta^r$. Then $\mathcal{H}_{d,r,\delta,\ell}$ is
	a hitting set for the class of degree-$d$ circuits with inputs being $\ell$-sparse, degree-$\delta$ subcircuits
	of $\trdeg$ at most $r$. It can be constructed in $\poly(dr\delta\ell n)^r$ time.
\begin{proof}[Proof of Theorem \ref{thm:CircuitsWithSparseSubcircuitsHittingSetCharZero}]
	Let $C(f_1, \dotsc, f_m)$ be a non-zero circuit of degree at most $d$ with subcircuits
	$f_1, \dotsc, f_m$ of sparsity at most $\ell$, degree at most $\delta$ and $\trdeg$ at most $r$. 
	By the choice of parameters, Lemma \ref{lem:VandermondeHomomorphismFaithful} implies that there exist 
	a prime $p \in [p_{\max}]$ 
	and an element $c \in H_1$ such that $\mathrm\Psi_{D,p,c}$ is faithful to $\{f_1, \dotsc, f_m\}$.
	Hence, by Theorem \ref{thm:FaithfulHomomorphismInjective},
	\[
		\mathrm\Psi_{D,p,c}(C(f_1, \dotsc, f_m)) = C(\mathrm\Psi_{D,p,c}(f_1), \dotsc, \mathrm\Psi_{D,p,c}(f_m))
	\]
	is a non-zero circuit with at most $r+1$ variables and of degree at most $d$. Now the first assertion follows from
	Lemma \ref{lem:CombinatorialNullstellensatz}. The second assertion is obvious from the construction. 
\end{proof}

\subsection{Proofs for Sect. \ref{sec:CircuitsWithSparseSubcircuitsArbitraryChar}: Arbitrary characteristic} 
\label{app:CircuitsWithSparseSubcircuitsArbitraryChar}

\noindent {\bf Theorem \ref{thm:CircuitsWithSparseSubcircuitsHittingSetArbitraryChar}.}
	The set $\mathcal{H}_{d,r,\delta}$ is a hitting set for the class of degree-$d$ circuits with inputs being
	degree-$\delta$ subcircuits of transcendence degree at most $r$. 
	It can be constructed in $\poly(dr\delta n)^{r\delta^{r+1}}$ time.
\begin{proof}[Proof of Theorem \ref{thm:CircuitsWithSparseSubcircuitsHittingSetArbitraryChar}]
	Let $C(f_1, \dotsc, f_m)$ be a non-zero circuit of degree at most $d$ with subcircuits
	$f_1, \dotsc, f_m$ of degree at most $\delta$ and $\trdeg$ at most $r$. 
	By the choice of parameters, Lemma \ref{lem:ProjectionHomomorphismFaithful} implies that there exist 
	an index set $I \in \tbinom{[n]}{r}$, a prime $p \in [p_{\max}]$ 
	and an element $c \in H_1$ such that $\mathrm\Phi_{I,D,p,c}$ is faithful to $\{f_1, \dotsc, f_m\}$.
	Hence, by Theorem \ref{thm:FaithfulHomomorphismInjective},
	\[
		\mathrm\Phi_{I,D,p,c}(C(f_1, \dotsc, f_m)) = C(\mathrm\Phi_{I,D,p,c}(f_1), \dotsc, \mathrm\Phi_{I,D,p,c}(f_m))
	\]
	is a non-zero circuit with at most $r$ variables and of degree at most $d$. Now the first assertion follows from
	Lemma \ref{lem:CombinatorialNullstellensatz}. The second assertion is obvious from the construction. 
\end{proof}

\section{Proofs for Sect. \ref{sec:Depth4Circuits}: Depth-4 circuits} \label{app:Depth4Circuits}

\subsection{Proofs for Sect. \ref{sec:PreservingSimplePart}: Preserving the simple part} \label{app:PreservingSimplePart}

\noindent {\bf Lemma \ref{lem:PreservingSimplePart}.}
	Let $C$ be a $\spsp_{\delta}(k,s,n)$ circuit. Let $D_1 \ge 2\delta^2+1$, let
	$D_1 \ge D_2 \ge \delta+1$ and let $D = (D_1, D_2)$. 
	Then there exists a prime $p\le (2ksn\delta^2)^{8\delta^2+2} (\log_2 D_1)^2$ $+1$ such that any subset 
	$S \subset \ol{K}$ of size $2\delta^4k^2s^2p$ contains $c$ satisfying 
	$\mathrm\Psi_{D,p,c}(\simple(C))$ $= \simple(\mathrm\Psi_{D,p,c}(C))$.
\begin{proof}[Proof of Lemma \ref{lem:PreservingSimplePart}]
	Let $f_1, \dotsc, f_m \in K[\term{x}]$ be the non-constant {\em irreducible} factors
	of the polynomials in $\Sp(C)$. Then $m \le ks\delta$ and we have
	\[
		\deg(f_i) \le \delta \qquad\text{and}\qquad \sparse(f_i) \le \binom{n+\delta}{\delta} \le (n+\delta)^{\delta}
	\]
	for all $i = 1, \dotsc, m$. 
	
	First we make the following observation. If $\varphi: K[\term{x}] \rightarrow K[\term{z}]$ is a 
	$K$-algebra homomorphism such that 
	\begin{enumerate}
		\item\label{enum:ProofPreservingSimplePartA} $\varphi(f_i)$ is non-constant, 
		for all $i = 1, \dotsc, m$, and
		\item\label{enum:ProofPreservingSimplePartB} $\gcd(f_i, f_j) = 1$ implies 
		$\gcd(\varphi(f_i), \varphi(f_j)) = 1$, for all $1 \le i < j \le m$,
	\end{enumerate}
	then $\varphi(\simple(C)) = \simple(\varphi(C))$. To satisfy the first condition we will ensure that 
	the images of $f_1, \dotsc, f_m$ under $\mathrm\Psi$ are monic in $z_0$. This will also facilitate our task of meeting
	the second condition. Here we will use resultants with respect to $z_0$ to preserve coprimality.
	
	So let $i \in [m]$ and define
	\[
		g_i := f_i\bigl(t^{D_2^1}, \dotsc, t^{D_2^n}\bigr) \in K[t] . 
	\]
	Since $\deg(f_i) < D_2$, we have $g_i \neq 0$ (Kronecker substitution). We have
	\[
		\deg(g_i) \le \delta \cdot (D_2 + D_2^2 + \dotsb + D_2^n) \le D_2^{n+1}
	\]
	and $\sparse(g_i) = \sparse(f_i) \le (n+\delta)^{\delta}$.
	Let $B_{1,i} \subseteq \PP$ be the set of all primes $p$ satisfying $g_i = 0 \pmod{\langle t^p-1 \rangle_{K[t]}}$.
	Then $\abs{B_{1,i}} < (n+1)(n+\delta)^{\delta} \log_2D_2$ by Lemma \ref{lem:SparsePIT}. 
	Finally, set $B_1 := B_{1,1} \cup \dotsb \cup B_{1,m}$.
	Then 
	\[
		\abs{B_1} \le m(n+1)(n+\delta)^{\delta} \log_2D_2 \le ks\delta(n+1)(n+\delta)^{\delta} \log_2D_2 .
	\]
	
	Now let $i \in [m]$ and define
	\[
		h_i := f_i\bigl(x_1 + t^{D_2^1}z_0, \dotsc, x_n + t^{D_2^n}z_0 \bigr) \in K[t, z_0, \term{x}] .
	\]
	Then the leading term of $h_i$ as a polynomial in $z_0$ is $g_i$. In particular, $h_i \neq 0$.
	We have
	\[
		\sparse(h_i) \le 2^{\delta} \cdot \sparse(f_i) \le 2^{\delta}(n+\delta)^{\delta} .
	\]
	
	Now let $i, j \in [m]$ with $i < j$ such that $\gcd(f_i, f_j) = 1$. Then
	$\gcd(h_i, h_j) = 1$, because the map:
	\[
		K(t,z_0)[\term{x}] \rightarrow K(t,z_0)[\term{x}], \qquad x_i \mapsto x_i + t^{D_2^i}z_0 \quad (i=1, \dotsc, n)
	\]
	is a $K(t,z_0)$-algebra automorphism. This implies $\res_{z_0}(h_i, h_j) \neq 0$. We have
	\[
		\deg_{\term{x}}\bigl(\res_{z_0}(h_i, h_j)\bigr) \le 2 \delta^2 < D_1 ,
	\]
	therefore the polynomial
	\[
		h_{i,j} := \res_{z_0}\bigl((\mathrm\Psi_D(f_i))(t, z_0, 0, \dotsc, 0), 
		(\mathrm\Psi_D(f_i))(t, z_0, 0, \dotsc, 0)\bigr) \in K[t, z_0]
	\]
	is non-zero (Kronecker substitution). We have
	\[
		\deg_t(h_{i,j}) \le 2\delta^2 \cdot (D_1 + D_1^2 + \dotsb + D_1^n) \le D_1^{n+1}
	\]
	(using $D_1 \ge D_2$) and
	\[
		\sparse(h_{i,j}) \le \max\{\sparse(h_i), \sparse(h_j)\}^{2\delta} \le 2^{2\delta^2}(n+\delta)^{2\delta^2} .
	\]
	Let $B_{2,i,j} \subseteq \PP$ be the set of all primes $p$ satisfying $h_{i,j} \neq 0 \pmod{\langle t^p-1 \rangle_{K[t, z_0]}}$.
	Then $\abs{B_{2,i,j}} < (n+1)2^{2\delta^2}(n+\delta)^{2\delta^2} \log_2D_1$ by Lemma \ref{lem:SparsePIT}.
	Finally, set $B_2 := \bigcup_{i,j} B_{2,i,j}$, where the union is over all $i, j \in [m]$ with $i < j$ such that $\gcd(f_i, f_j) = 1$.
	Then 
	\begin{align*}
		\abs{B_2} &< \tfrac{1}{2} m^2 (n+1)2^{2\delta^2}(n+\delta)^{2\delta^2} \log_2D_1 \\
		&\le \tfrac{1}{2} (ks\delta)^2 (n+1)2^{2\delta^2}(n+\delta)^{2\delta^2} \log_2D_1 .
	\end{align*}
	Ultimately, set $B := B_1 \cup B_2$. Then
	\begin{align*}
		\abs{B} &\le 2 \abs{B_2} < (ks\delta)^2 (n+1)2^{2\delta^2}(n+\delta)^{2\delta^2} \log_2D_1 \\
		&\le (2ksn\delta^2)^{4\delta^2+1} \log_2D_1 .
	\end{align*}
	
	Now pick a suitable prime $p \in \PP \setminus B$ (by Lemma \ref{lem:NumberOfPrimes}). First, let $i \in [m]$. Since $p \notin B_1$, we have 
	$g_i \neq 0 \pmod{\langle t^p-1 \rangle_{K[t]}}$. 
	Define
	\[
		g_i^{(p)} := f_i\bigl(t^{\lfloor D_2^1 \rfloor_p}, \dotsc, t^{\lfloor D_2^n \rfloor_p} \bigr) \in K[t] . 
	\]
	Since $g_i^{(p)} = g_i \neq 0 \pmod{\langle t^p-1 \rangle_{K[t]}}$, we have $g_i^{(p)} \neq 0$. Let $S_{1,i} \subset \ol{K}$
	be the set of all $c \in \ol{K}$ such that $g_i^{(p)}(c) = 0$. Then $\abs{S_{1,i}} \le \deg(g_i^{(p)}) < \delta p$.
	Finally, set $S_1 := S_{1,1} \cup \dotsb \cup S_{1,m}$. Then $\abs{S_1} < m\delta p \le ks \delta^2 p$.
	Now let $i, j \in [m]$ with $i < j$ such that $\gcd(f_i, f_j) = 1$.
	Since $p \notin B_2$, we have $h_{i,j} \neq 0 \pmod{\langle t^p-1 \rangle_{K[t, z_0]}}$. Define
	\[
		h_{i,j}^{(p)} := \res_{z_0}\bigl((\mathrm\Psi_{D,p}(f_i))(t, z_0, 0, \dotsc, 0), 
		(\mathrm\Psi_{D,p}(f_i))(t, z_0, 0, \dotsc, 0)\bigr) \in K[t, z_0] .
	\]
	Since $h_{i,j}^{(p)} = h_{i,j} \neq 0 \pmod{\langle t^p-1 \rangle_{K[t, z_0]}}$, we have $h_{i,j}^{(p)} \neq 0$.
	Let $S_{2,i,j} \subset \ol{K}$ be the set of all $c \in \ol{K}$ such that $h_{i,j}^{(p)}(c, z_0) = 0$. Then
	$\abs{S_{2,i,j}} \le \deg_t(h_{i,j}^{(p)}) < 2 \delta^2 p$. Finally set $S_2 := \bigcup_{i,j} S_{2,i,j}$, where
	the union is over all $i, j \in [m]$ with $i < j$ such that $\gcd(f_i, f_j) = 1$. Then 
	$\abs{S_2} < \tfrac{1}{2} m^2 \cdot 2 \delta^2 p \le \delta^4k^2s^2p$. 
	Ultimately, set $S := S_1 \cup S_2$. Then $\abs{S} < 2\delta^4k^2s^2p$. 
	
	Let $i \in [m]$. Then $\mathrm\Psi_{D,p,c}(f_i)$ is monic in $z_0$ for all $c \in \ol{K} \setminus S$.
	Now let $i, j \in [m]$ with $i < j$ such that $\gcd(f_i, f_j) = 1$. Then
	\begin{multline*}
		\bigl(\res_{z_0}(\mathrm\Psi_{D,p,c}(f_i), \mathrm\Psi_{D,p,c}(f_j))\bigr)(z_0, 0, \dotsc, 0) \\
		= \res_{z_0}\bigl((\mathrm\Psi_{D,p,c}(f_i))(z_0, 0, \dotsc, 0), (\mathrm\Psi_{D,p,c}(f_i))(z_0, 0, \dotsc, 0)\bigr) \\
		= h_{i,j}^{(p)}(c, z_0) \neq 0
	\end{multline*}
	for all $c \in \ol{K} \setminus S$. Thus, $\res_{z_0}(\mathrm\Psi_{D,p,c}(f_i), \mathrm\Psi_{D,p,c}(f_j)) \neq 0$
	and by Lemma \ref{lem:Resultant} it follows that $\gcd(\mathrm\Psi_{D,p,c}(f_i), \mathrm\Psi_{D,p,c}(f_j)) = 1$ 
	for all $c \in \ol{K} \setminus S$. 
\end{proof}

\subsection{Proofs for Sect. \ref{sec:Depth4CircuitsHittingSet}: A hitting set} \label{app:Depth4CircuitsHittingSet}

\noindent {\bf Theorem  \ref{thm:Depth4CircuitsHittingSet}. }
	Assume that $\ch(K) = 0$ or $\ch(K) > \delta^r$. 
	Then $\mathcal{H}_{\delta,k,s}$	is a hitting set for $\spsp_{\delta}(k,s,n)$ circuits.
	It can be constructed in $\poly(\delta rsn)^{\delta^2 kr}$ time.
\begin{proof}[Proof of Theorem \ref{thm:Depth4CircuitsHittingSet}]
	Let $C \in \spsp_{\delta}(k,s,n)$ be a non-zero circuit. First, let us show by a loose estimation that our parameters afford
    $2^k$ applications of Lemmas \ref{lem:VandermondeHomomorphismFaithful} and 
    \ref{lem:PreservingSimplePart} (one for each $\Sp(C_I)$ resp. $C_I$, for all $I \subseteq [k]$). The number of `bad' primes
    by the proofs of these lemmas are at most:
	\begin{align*}
    2^k\cdot &(2nr(n+\delta)^\delta)^{r+1}\log_2D_1 + 2^k\cdot(2ksn\delta^2)^{4\delta^2+1}\log_2D_1 \\
		&< 2^k\cdot(2nr\cdot 2n\delta)^{\delta(r+1)}\log_2D_1 + 2^k\cdot(2ksn\delta^2)^{4\delta^2+1}\log_2D_1 \\
		&< 2^k\cdot(2nr\delta)^{2\delta(r+1)}\log_2D_1 + 2^k\cdot(2ksn\delta^2)^{4\delta^2+1}\log_2D_1 \\
		&< 2^{k+1}\cdot(2krsn\delta^2)^{4\delta^2+2\delta r}\log_2D_1 .
	\end{align*}
	Thus, the set $[p_{\max}]$ would have a `good' prime $p$ (by Lemma \ref{lem:NumberOfPrimes}). Next comes the estimate 
	on the number of `bad' $c$:
	\[
		2^k\delta rp + 2^k\cdot(2\delta^4k^2s^2p) < 	2^{k+2}k^2rs^2\delta^4 p .
 	\]
	Thus, Lemma \ref{lem:VandermondeHomomorphismFaithful}
	and Lemma \ref{lem:PreservingSimplePart} imply that there exist a prime $p \in [p_{\max}]$ and an element 
	$c \in H_1$ such that, for all $I \subseteq [k]$, we have
	\begin{enumerate}
		\item $\mathrm\Psi_{D,p,c}(\simple(C_I)) = \simple(\mathrm\Psi_{D,p,c}(C_I))$, and
		\item $\mathrm\Psi_{D,p,c}$ is faithful to some subset $\{f_1, \dotsc, f_m\} \subseteq \Sp(\simple(C_I))$
			of transcendence degree $\min\{\rk(\simple(C_I)), r\}$.
	\end{enumerate}
	Hence, by Lemma \ref{lem:RankBasedBlackBoxAlgo}, $\mathrm\Psi_{D,p,c}(C)$ is a non-zero circuit with at most $r+1$ variables 
	and of degree at most $\delta s$. Now the first assertion follows from
	Lemma \ref{lem:CombinatorialNullstellensatz}. The second assertion is obvious from the construction. 
\end{proof}

\noindent {\bf Corollary \ref{cor:Depth4CircuitsHittingSetTopFanIn2}.}
	Let $K$ be of arbitrary characteristic. Then $\mathcal{H}_{\delta, 2, s}$ is a hitting set for 
	$\spsp_{\delta}(2,s,n)$ circuits. It can be constructed in $\poly(\delta sn)^{\delta^2}$ time.
\begin{proof}[Proof of Corollary \ref{cor:Depth4CircuitsHittingSetTopFanIn2}]
	First observe $R_{\delta}(2,s) = 1$. Since $\mathrm\Psi$ sends non-constant sparse polynomials
	of a circuit to non-constant polynomials (see the proof of Lemma \ref{lem:PreservingSimplePart}), 
	it is faithful to sets of transcendence degree $1$. Hence we do not need to invoke
	Lemma \ref{lem:VandermondeHomomorphismFaithful} (where the dependence on the characteristic
	came from). 
\end{proof}
	
\end{appendix}
\end{document}